\title{Eliminating recursion from monadic datalog programs on trees}
\author{Filip Mazowiecki \and Joanna Ochremiak \and Adam Witkowski}
\institute{University of Warsaw}
\newcommand{\Aa}{\mathcal A}
\newcommand{\Bb}{\mathcal B}
\def \downsquigarrow{\rotatebox[origin=c]{270}{$\rightsquigarrow$}}
\newcommand{\Ppr}{\mathcal P}
\newcommand{\Qpr}{\mathcal Q}
\newcommand{\Lab}{\Sigma_{\Ppr}}
\newcommand{\da}{\downarrow}
 \newcommand{\boxcomment}[1]{
 \begin{center}\fbox{\parbox{8cm}{#1}}\end{center}}
\newcommand{\fm}[1]{\boxcomment{{\bf FM: } #1}}
\newcommand{\jo}[1]{\boxcomment{{\bf JO: }#1}}
\newcommand{\aw}[1]{\boxcomment{{\bf AW: }#1}}
\renewcommand{\phi}{\varphi}
\newcommand{\Alf}{{\Sigma_{0}}}
\renewcommand{\succ}{{\downarrow}}
\newcommand{\desc}{{\downarrow_{\scriptscriptstyle +}}}
\newcommand{\cq}{\mathsf{CQ}}
\newcommand{\ucq}{\mathsf{UCQ}}
\newcommand{\cqall}{\cq(\succ,\desc)}
\newcommand{\ucqall}{\ucq(\succ,\desc)}
\newcommand{\cqsucc}{\cq(\succ)}
\newcommand{\ucqsucc}{\ucq(\succ)}
\newcommand{\datasucc}{\mathsf{Datalog}(\succ)}
\newcommand{\dataall}{\mathsf{Datalog}(\succ, \desc)}
\newcommand{\ldataall}{\mathsf{L\mbox{-}Datalog}(\succ, \desc)}
\newcommand{\cdatasucc}{\mathsf{Datalog}(\succ)}
\newcommand{\cldatasucc}{\mathsf{L\mbox{-}Datalog}(\succ)}
\newcommand{\NLogSpace}{\textsc{NLogSpace}}
\newcommand{\NP}{\textsc{NPTime}}
\newcommand{\PTime}{\textsc{PTime}}
\newcommand{\PSpace}{\textsc{PSpace}}
\newcommand{\ExpTime}{\textsc{ExpTime}}
\newcommand{\ExpSpace}{\textsc{ExpSpace}}
\newcommand{\TwoExpTime}{2\textsc{-ExpTime}}
\newcommand{\NPTime}{\textsc{NPTime}}% 
\newcommand{\IN}{\mathit{in}}
\newcommand{\OUT}{\mathit{out}}
\newcommand{\lab}{\mathit{lab}}
\newcommand{\query}{\mathcal{P} \land \lnot \mathcal{Q}}
\spnewtheorem*{lemma*}{Lemma}{\bfseries}{\rmfamily}
\spnewtheorem*{theorem*}{Theorem}{\bfseries}{\rmfamily}
\spnewtheorem*{proposition*}{Proposition}{\bfseries}{\rmfamily}
   \renewcommand{\fm}[1]{}
   \renewcommand{\jo}[1]{}
   \renewcommand{\aw}[1]{}
\begin{document}
\maketitle

\begin{abstract}
We study the problem of eliminating recursion from monadic datalog programs on trees with an infinite set of labels. We show that the boundedness problem, i.e., determining whether a datalog program is equivalent to \emph{some} nonrecursive one is undecidable but the decidability is regained if the descendant relation is disallowed. Under similar restrictions we obtain decidability of the problem of equivalence to a \emph{given} nonrecursive program. We investigate the connection between these two problems in more detail.

\end{abstract}

\section{Introduction}

\jo{PONIŻEJ MOJA ALTERNATYWNA WERSJA INTRODUCTION:}

\jo{Moja ogólna filozofia jest taka, żeby napisać krótko i konkretnie. Jakie problemy badamy i dlaczego (bo dobrze wyeliminowac rekursję ;) ). Jakie rozne ograniczenia nakladamy na programy i dlaczego (bo w ogolnosci jest duzo nierozstrzygalnosci). Na jakich strukturach pracujemy i jakie sa nasze wyniki. Bez motania się w różne rodzaje data-trees, różne wyniki wcześniejsze itd. Bo to tylko zamota. I sprawi, że się będzie trudniej czytać.}

Among logics with fixpoint capabilities, one of the most prominent is
datalog, which augments unions of conjunctive
queries (positive existential first order formulae) with recursion. Datalog
originated as a declarative programming language, but later found many
applications in databases as a query language. The gain in expressive power does not, however, come for free. Compared to unions of conjunctive
queries, evaluating a datalog program is harder~\cite{Vardi82} and basic properties such as containment or equivalence become undecidable \cite{Shmueli93}.

Since the source of the difficulty in dealing with datalog programs is their recursive
nature, the first line of attack in trying to optimize such programs is to eliminate the
recursion. It is well-known that a nonrecursive datalog program can be rewritten as a union of conjunctive queries. The main focus of this paper is therefore the equivalence of recursive datalog programs to unions of conjunctive queries.
%Unfortunately, it is also undecidable if a given datalog program is equivalent to some nonrecursive datalog program. 
\begin{example}
The programs in this example work on databases that use binary predicates \emph{likes} and \emph{knows}, and a unary predicate \emph{trendy}. First, consider the following pair of datalog programs:

%\begin{minipage}{0.03\textwidth}
% $\Ppr_1$
%\end{minipage}
%\begin{minipage}{0.46\textwidth}
%{\footnotesize   \begin{align*}
%      buys(X, Y) &\leftarrow  likes(X,Y)\\ 
%      buys(X,Y) &\leftarrow trendy(X), buys(Z,Y)
%    \end{align*} }
% \end{minipage}
%\begin{minipage}{0.03\textwidth}
% $\Ppr_1'$
%\end{minipage}
%\begin{minipage}{0.46\textwidth}
%{\footnotesize   \begin{align*}
%      buys(X, Y) &\leftarrow  likes(X,Y)\\ 
%      buys(X,Y) &\leftarrow trendy(X), likes(Z,Y)
%\end{align*} }
%\end{minipage}

%\begin{minipage}{0.03\textwidth}
% $\Ppr_1$
%\end{minipage}
\begin{minipage}{0.46\textwidth}
{\footnotesize   \begin{align*}
      &\Ppr_1 \\
      &buys(X, Y) \leftarrow  likes(X,Y)\\ 
      &buys(X,Y) \leftarrow trendy(X), buys(Z,Y)
    \end{align*} }
 \end{minipage}
%\begin{minipage}{0.03\textwidth}
% $\Ppr_1'$
%\end{minipage}
\begin{minipage}{0.50\textwidth}
{\footnotesize   \begin{align*}
      &\Ppr_1' \\
      &buys(X,Y) \leftarrow  likes(X,Y)\\ 
      &buys(X,Y) \leftarrow trendy(X), likes(Z,Y)
\end{align*} }
\end{minipage} \\
The program $\Ppr_1$ is recursive because its second rule refers to the predicate \emph{buys}. It can be shown that $\Ppr_1$ is equivalent to the nonrecursive program $\Ppr_1'$. Consider, on the other hand, the following pair of programs:

%\begin{minipage}{0.03\textwidth}
% $\Ppr_2$
%\end{minipage}
\begin{minipage}{0.46\textwidth}
{\footnotesize   \begin{align*}
      &\Ppr_2 \\
      &buys(X,Y) \leftarrow  likes(X,Y)\\ 
      &buys(X,Y) \leftarrow knows(X,Z), buys(Z,Y)
    \end{align*} }
\end{minipage}
%\begin{minipage}{0.03\textwidth}
% $\Ppr_2'$
%\end{minipage}
\begin{minipage}{0.50\textwidth}
{\footnotesize   \begin{align*}
      &\Ppr_2' \\
      &buys(X, Y) \leftarrow  likes(X,Y)\\ 
      &buys(X,Y) \leftarrow knows(X, Z), likes(Z,Y)
\end{align*} }
\end{minipage}
It can be shown that $\Ppr_2$ is not equivalent to the nonrecursive program $\Ppr_2'$. Moreover, this program is not equivalent to any nonrecursive program.
\end{example}

The example above (taken from \cite{Naughton89}) presents two approaches to eliminating recursion from datalog programs. Either we want to determine for a given datalog program if it is equivalent to \emph{some} nonrecursive datalog program or decide whether a given datalog program is equivalent to a \emph{given} nonrecursive program. These problems bear some similarities but in general they are separate. The latter is decidable \cite{ChaudhuriV92}, while the former, called the \emph{boundedness} problem, is not~\cite{GaifmanMSV93,HillebrandKMV95}.
%The main difficulty in showing this decidability result comes from another problem -- determining whether a given datalog program is contained in a given nonrecursive program. The equivalence and containment problems for datalog problems reduce to each other (see e.g. \cite{ChaudhuriV92}). For this reason papers on datalog often deal only with the containment problem.

Negative results
for the full datalog fueled interest in its restrictions
\cite{BenediktBS12,Bonatti04,CalvaneseGV05}. Important restrictions
include \emph{monadic} programs, using only unary
predicates in the heads of rules; \emph{linear} programs, with at most one use of an
intensional predicate per rule;
and \emph{connected} programs, where within each rule all variables that are mentioned
are connected to each other. Throughout this paper only monadic datalog programs are considered. 
In \cite{CosmadakisGKV88} Cosmadakis \textit{et al.} show that for such programs the boundedness 
problem becomes decidable. Moreover, they use the same techniques to prove that the containment 
problem of two monadic datalog programs is decidable. These results suggest that under some 
additional assumptions the boundedness problem and the equivalence problem are more related.
%These results suggest that for monadic programs the boundedness problem and the equivalence problem are more related.

% \jo{Napisałam "additional assumptions" zamiast "for monadic programs, bo chyba nasze meta-twierdzenie ostatecznie działa nie tylko dla monadic. I sugerowanie, że to monadyczność jest tu kluczowa chyba się kupy nie trzyma.}

In this paper we study connected, monadic datalog programs restricted to tree-structured databases. Our models are finite trees whose nodes carry labels from an infinite alphabet that can be tested for equality. Over such structures the problem of equivalence to a given union of conjunctive queries is known to be undecidable~\cite{AbiteboulBMW13,FFA}. We show that the boundedness problem is also undecidable. In some cases, however, we regain decidability of both problems in the absence of the descendant relation.
% CZESC JOASKA, OD TEGO MIEJSCA ZMIENILEM INTRO (czyli niewiele) :)
On ranked trees we show that the equivalence and the boundedness problems become decidable (in \TwoExpTime{}). On unranked trees we prove that the equivalence of a linear program to a non-recursive one is \ExpSpace-complete. We finish with an analysis of the connection between the equivalence and the boundedness problems and show that under some assumptions they are equi-decidable.

%On unranked trees we show decidability of the equivalence problem when restricting to linear programs. The boundedness problem and the generalization of the equivalence problem to the non-linear case are left as open problems.
%ZASTANAWIAM SIE CZY NAPISAC ZE EQUIVALENCE JEST TYLKO POPRAWIENIEM WYNIKOW A BOUNDEDNESS NOWE, ALE POKI CO WYWALILEM
% The results for the equivalence problem refine the complexities of the previous work in~\cite{FFA}. However, to our knowledge, this is the first paper to study the complexities of the boundedness problem of datalog programs on data trees.
%TO TEZ NIE WIEM CHYBA WYSTARCZY TO CO JEST W ORGANIZATION
% We finish with an analysis of the connection between the equivalence and boundedness problems and show that under some assumptions they are equi-decidable.

% Over ranked trees (in particular, words) the boundedness problem is decidable for monadic connected datalog programs (in \ThreeExpTime{}). The complexity is high but it can be lowered by restricting to linear programs. 
% %Over unranked trees we obtain decidability if the programs are monadic, connected and linear (in \ThreeExpTime{}). 
% We also show that the equivalence of a monadic connected linear datalog program to a given union of conjunctive queries is decidable (in \ExpSpace).

\emph{Organization}. In Section~\ref{sec:preliminaries} we introduce datalog programs and some basic definitions.
In Section~\ref{sec:equivalence} we deal with the problem of equivalence to a given nonrecursive datalog program.
In Section~\ref{sec:boundedness} we analyze the boundedness problem.
Finally, in Section~\ref{sec:boundedness_vs_equivalence} we explore the connection between the two approaches to eliminating recursion from datalog programs and show that under some assumptions the arising decision problems are equi-decidable.
We conclude in Section~\ref{sec:conclusion} with possible directions for future research. Due to the page limit most of the proofs are moved to the appendix.

% \fm{uwagi:
% 
% 1. dodalem pojecie nonrecursive programs. To wynika z tego ze w przykladzie 1 trudno mowic o tym ze to jest ucq a tez nie chce pisac w intro ze to sa zgrubsza te same rzeczy. Mozna zrobic jedna z trzech rzeczy: a) zostawic to tak jak jest i olac, b) zamienic wszystko na UCQ, c) dodac w preliminaries komentarz na temat nonrecursive programs.
% 
% 2. Jak pisze o monadic programs na ogolnych strukturach to nie wiem czy to jest najlepsze miejsce na to bo pozniej pisze o czyms innym a pozniej znowu sie do tego odwoluje, ale na razie nie mam lepszego pomyslu
% 
% 3. Pisząc intro doszedłem do wniosku że może lepiej podzielić rozdział o boundedness na meta twierdzenie i boundedness oddzielnie (nawet jesli ten o meta tiwerdzeniu bedzie krotki). Ale to zastanówcie sie wy.
% 
% 5. @Joaska, praca w ktorej jest napisane ze jak mamy nierownosci etykiet w UCQ (a wlasciwie ona jest tylko o CQ) to zawieranie jest nierostrzygalne jest tu \cite{BjorklundMS08}, mozesz sie do niej jakos odwolac i ew. zerknac do niej (najwazniejsze, czyli tabelka wynikow jest na koncu :P)}

\section{Preliminaries}
\label{sec:preliminaries}
In this paper we work over finite trees labeled with letters
from an infinite alphabet $\Sigma$. The trees are unranked by default,
but we also work with ranked trees, in particular with words. %We write
% $\nodes_t$ for the set of nodes of a tree $t$ and $\lab_t : \nodes_t \to
% \Sigma$ for the function assigning labels to nodes.
% 
% \jo{Czy ta notacja: $\nodes_t$ oraz $\lab_t : \nodes_t \to
% \Sigma$ jest nam naprawdę potrzebna? Póki co nigdzie jej nie używamy.}
 We use the standard notation for axes:
$\succ, \desc$ 
%,\pred, \ances$ 
stand, respectively, for child and descendant relations. %, parent and ancestor relations.
We assume that each node has one label. A binary relation $\sim$ holds between nodes with
identical labels and there is a unary predicate
$a$ for each $a\in\Sigma$, holding for the nodes labeled with $a$.

% Many papers on tree-structured databases work with a slightly
% different data model, called ``data trees'', where each node has a
% label from a finite alphabet and a data value from an infinite data
% domain. In that model labels can be used explicitly in the formulae,
% but cannot be directly tested for equality, and data values can be
% tested for equality, but cannot be used explicitly (as
% constants). These two models are very similar, but not directly
% comparable for query languages with limited negation. However, we can
% quite easily incorporate additional finite alphabet to our setting,
% obtaining a generalization of the two settings, and the complexity
% results do not change. Forbidding the use of constants from the
% infinite alphabet may affect some of our lower bounds but it does not
% affect decidability. For the undecidable fragments of datalog we give
% an additional sketch of the proof for ``data trees'' in the appendix.
% 
% \fm{moze lepiej to jakos zmienic}

%\subsection{Datalog}
We begin with a brief description of the syntax and semantics of
datalog; for more details see~\cite{Datalogfoundations} or~\cite{Ceri1990}. 
A {\bf datalog program} $\mathcal{P}$ over a relational signature $S$
is a finite set of rules of the form $\mathit{head} \leftarrow
\mathit{body}\,,$ where $\mathit{head}$ is an atom over $S$ and
$\mathit{body}$ is a (possibly empty) conjunction of atoms over $S$
written as a comma-separated list. All variables in the body that are not used in the head are implicitly quantified existentially. The {\bf size of a rule} is the number of different variables that appear in it.

% 
% \jo{Ja bym tu zdefiniowała size of a rule, bo potem ta definicja jest od czapy: The {\bf size of a rule} is the number of different variables that appear in it.}

The relational symbols, or predicates, in $S$ fall into two
categories. {\bf Extensional} predicates are the ones explicitly
stored in the database; they are never used in the heads of rules.  In
our setting they come from $\{\succ, \desc, \sim \} \cup
\Sigma$. The alphabet $\Sigma$ is infinite, but the program ${\cal P}$ uses
only its finite subset which we denote by $\Sigma_{\cal P}$.  
{\bf Intensional} predicates, used both in the heads and bodies, are defined by the
rules.

The program is evaluated by generating all
atoms (over intensional predicates) that can be inferred from the
underlying structure (tree) by applying the rules repeatedly, to the
point of saturation. 
%and then taking atoms matching the head of the goal predicate.
Each inferred atom can be witnessed by a \textbf{proof tree}: an
atom inferred by a rule $r$ from intensional atoms $A_1, A_2, \dots,
A_n$ is witnessed by a proof tree with the root labeled by $r$, and $n$
children which are the roots of the proof trees for atoms $A_i$ (if $r$ has no
intensional predicates in its body then the root has no children). 

%Each inferred atom can be witnessed by a \textbf{proof tree}: an atom inferred by a rule $r$ from intensional atoms $A_1, A_2, \dots, A_n$ is witnessed by a proof tree whose root has label $r$, and its children are the roots of the proof trees for atoms $A_i$ (if $r$ has no intensional predicates in its body then the root has no children). 

There is a designated predicate called
the {\bf goal} of the program. We will often identify the goal predicate with the program, i.e., we write $\Ppr(X)$ if the goal predicate of the program $\Ppr$ holds on the node $X$. When evaluated in a given database $D$, the program $\cal P$ results in the unary relation $\Ppr(D) = \{X \in D \mid  \text{such that } \Ppr(X) \text{ holds}\}$. If ${\cal P}(D) \subseteq {\cal Q}(D)$ for every database $D$ then we say that the program $\cal P$ is {\bf contained} in the program $\cal Q$. If the containment holds both ways then the programs $\cal P$ and $\cal Q$ are {\bf equivalent}.

% \jo{Dodałam definicje equivalent program i containement.}

% \jo{Może tą boolean query wyrzucić stąd i zdefiniować gdzie indziej? Gdzie tego w ogóle używamy?}

%For a given database $D$ we define a set $$\Ppr(D) = \{X \in D \ \textbar  \text{ such that } \Ppr(X) \text{ holds}\}$$ and $\Ppr_{b}(D) \in \{0,1\}$ which is equal 1 iff $\Ppr(D)$ is not empty.
%and $\Ppr_{r}(D) = \{X \in D \ \textbar  \text{ such that } \Ppr(X) \text{ holds, and $X$ is the root of $D$}\}$
%MOZE JEDNAK TO NAPISAC GDZIE INDZIEJ

% Each inferred intensional atom can be witnessed by a \textbf{proof
%   tree}, which explains how the atom was inferred.  A proof
% tree is a finite tree over the alphabet ${\cal  P}$, i.e.,  it uses
% rules of  program ${\cal  P}$ as labels; the rule in the
% root has the goal predicate in the head. If a node $v$
% contains a rule $A \leftarrow B_1, B_2, \dots, B_m, C_1, C_2,  \dots,
% C_n$, where the $B_i$'s use extensional predicates and the $C_j$'s use intensional
% predicates, then $v$ has $n$ children $v_1, v_2,  \dots,
% v_n$, and the head of the rule in $v_j$ matches atom $C_j$. In particular, 
% leaves are labelled with rules without intensional predicates in their bodies.
%
% Proof trees correspond to evaluation strategies: if the body of a rule
% contains intensional atoms then the program must evaluate new rules
% for nodes that were chosen as witnesses for these atoms. In the proof
% tree, for each such atom we add a child rule with matching head. 

\begin{example} \label{ex:datalog}
The program below computes the nodes from which one can reach some label~$a$ along a path where each node has a child with
identical label and a descendant with label~$b$ (or has label $b$ itself).

\vspace{-0.8ex}

\noindent\begin{minipage}{0.56\textwidth}
{\footnotesize    \begin{align*}
      P(X) &\leftarrow  X \succ Y, P(Y), X \succ  Y',  X\sim Y',
      Q(X) & (p_1)\\ 
      P(X) &\leftarrow a(X) & (p_2)\\
      Q(X) &\leftarrow X \succ Y, Q(Y) & (q_1)\\
      Q(X) &\leftarrow b(X) & (q_2)
    \end{align*}
}
%    \caption{$\mathcal{P}$}
%    \label{fig:gull}
  \end{minipage} 
  \hfill
  \begin{minipage}{0.16\textwidth}
    \begin{tikzpicture}[scale=0.8]
      \node[draw=none] (z2) at (2, 5) {$p_1$};
      \node[draw=none] (M1) at (1.5, 4) {$q_1$} edge [<-] (z2);
      \node[draw=none] (M3) at (1, 3) {$q_2$} edge [<-] (M1);
      \node[draw=none] (M2) at (2.5, 4) {$p_1$} edge [<-] (z2);
      \node[draw=none] (x1) at (2, 3) {$q_2$} edge [<-] (M2);
      \node[draw=none] (y1) at (3, 3) {$p_2$} edge [<-] (M2);
    \end{tikzpicture}
%   \caption{$\Pi$}
%    \label{fig:tiger}
  \end{minipage}
  \hfill
  \begin{minipage}{0.12\textwidth}
    \begin{tikzpicture}[scale=0.8]
      \node[draw=none] (z2) at (2, 5) {$c$};
      \node[draw=none] (M1) at (1.5, 4) {$c$} edge [<-] (z2);
      \node[draw=none] (M2) at (2.5, 4) {$b$} edge [<-] (z2);
      \node[draw=none] (x1) at (2, 3) {$b$} edge [<-] (M2);
      \node[draw=none] (y1) at (3, 3) {$a$} edge [<-] (M2);
    \end{tikzpicture}
\end{minipage}

\vspace{-1.5ex}

\noindent The intensional predicates are $P$ and $Q$, and $P$ is the goal.
The proof tree shown in the center witnesses that $P$
holds in the root of the tree on the right.
\end{example}

% \jo{Tu jest jakaś nieścisłość. Piszemy wyżej o "goal rule" (i jest tu goal rule podkreślona). Ale napisane jest w przykładzie, że "goal is P". To w końcu w których wierzchołkach ten program jest spełniony: w tych, w których przyczepia się goal rule (ta podkreślona), czy w tych, w których zachodzi P? To jest zresztą ogólniejsze pytanie nie tylko dotyczące tego przykładu.}

The notion of proof trees comes from papers on datalog over general structures (see e.g.~\cite{ChaudhuriV92}). As shown in Example \ref{ex:datalog} proof trees illustrate how the program evaluates. While on general structures for a given proof tree one can always find a model such that the proof tree witnesses a correct evaluation of the program, on tree structures this is not so simple. One reason is that we allow only one label for every node. As a result, rules like $P(X) \leftarrow a(X), b(X)$ cannot be satisfied for $a \neq b$. Moreover, nodes have a unique father.
%Another property of trees is that nodes have a unique father. 
Because of this it is not easy to determine whether a given proof tree is a witness of an evaluation of the program on some model and it does not suffice to eliminate unsatisfiable rules. Proof trees for which such a model exists will be called {\bf satisfiable proof trees}.

%  \emph{rozumiem, że proof tree zawsze ma w liściach reguły nierekurencyjne? tzn. takie kawalki proof tree, ktore sa nie dokonczone, to nie sa proof tree?}
%TAK

\begin{example}
 \label{ex:sat_proof tree}
The program below goes down a tree along a path labeled with $a$. Then it goes up the tree until it finds a node labeled with $b$.
 
 \noindent\begin{minipage}{0.56\textwidth}
{\footnotesize    \begin{align*}
      P(X) &\leftarrow  X \succ Y, a(Y), P(Y)  & (p_3)\\ 
      P(X) &\leftarrow Q(X) & (p_4)\\
      Q(X) &\leftarrow Y \succ X, Q(Y) & (q_3)\\
      Q(X) &\leftarrow b(X) & (q_4)
    \end{align*}
}
%    \caption{$\mathcal{P}$}
%    \label{fig:gull}
  \end{minipage} 
%  \hfill
  \begin{minipage}{0.16\textwidth}
  {\footnotesize 
\begin{tikzpicture}[scale=0.65]
\node[draw=none] (q2) at (0, 10) {$p_3$};
\node[draw=none] (c2) at (0, 9) {$p_4$} edge [<-] (q2);
\node[draw=none] (r22) at (0, 8) {$q_3$} edge [<-] (c2);
\node[draw=none] (r12) at (0, 7) {$q_4$} edge [<-] (r22);
% \node[draw=none] (dl) at (0, 5) {$(\pi_1)$};
\end{tikzpicture}
}
% {\footnotesize \begin{align*}
%   p_3\rightarrow p_3 \rightarrow p_4 \rightarrow q_3 \rightarrow q_4 && (\pi_1) \\
%   p_3\rightarrow p_4 \rightarrow q_3 \rightarrow q_4 && (\pi_2) \\  
%  \end{align*}
%  }

%     \begin{tikzpicture}[scale=0.8]
%       \node[draw=none] (z2) at (5,2) {$p_1$};
%       \node[draw=none] (M1) at (4,2) {$p_1$} edge [<-] (z2);
%       \node[draw=none] (M3) at (3,2) {$p_1$} edge [<-] (M1);
%       \node[draw=none] (M2) at (2,2) {$p_2$} edge [<-] (z2);
%       \node[draw=none] (x1) at (1,2) {$q_1$} edge [<-] (M2);
%       \node[draw=none] (y1) at (0,2) {$q_2$} edge [<-] (M2);
%     \end{tikzpicture}
%   \caption{$\Pi$}
%    \label{fig:tiger}
  \end{minipage}
\begin{minipage}{0.12\textwidth}
  {\footnotesize 
\begin{tikzpicture}[scale=0.65]%MOZNA BARDZIEJ ZMNIEJSZYC STRZALKI ALE WYGLADA TROCHE ZLE WTEDY
\node[draw=none] (q2) at (0, 10) {$p_3$};
\node[draw=none] (c2) at (0, 9) {$p_3$} edge [<-] (q2);
\node[draw=none] (r22) at (0, 8) {$p_4$} edge [<-] (c2);
\node[draw=none] (r12) at (0, 7) {$q_3$} edge [<-] (r22);
\node[draw=none] (dl) at (0, 6) {$q_4$} edge [<-] (r12);
% \node[draw=none] (r13) at (0, 6) {$(\pi_2)$};
\end{tikzpicture}
}
\end{minipage}
  
%   \hfill
%   \begin{minipage}{0.12\textwidth}
%     \begin{tikzpicture}[scale=0.8]
%       \node[draw=none] (z2) at (2, 5) {$c$};
%       \node[draw=none] (M1) at (1.5, 4) {$c$} edge [<-] (z2);
%       \node[draw=none] (M2) at (2.5, 4) {$b$} edge [<-] (z2);
%       \node[draw=none] (x1) at (2, 3) {$b$} edge [<-] (M2);
%       \node[draw=none] (y1) at (3, 3) {$a$} edge [<-] (M2);
%     \end{tikzpicture}
% \end{minipage}
%\vspace{-1.8ex}

\noindent The first proof tree is satisfiable, but the second proof tree is not satisfiable because it enforces both labels $a$ and $b$ on the same node.
\end{example}

% \begin{wrapfigure}{1}{0pt}
% %\begin{center}
% \subfloat[$\Pi$]
% {
% \begin{tikzpicture}[scale=0.8]
% \node[draw=none] (z2) at (2, 5) {$p_1$};
% \node[draw=none] (M1) at (1.5, 4) {$q_1$} edge [<-] (z2);
% \node[draw=none] (M3) at (1, 3) {$q_2$} edge [<-] (M1);
% \node[draw=none] (M2) at (2.5, 4) {$p_1$} edge [<-] (z2);
% \node[draw=none] (x1) at (2, 3) {$q_2$} edge [<-] (M2);
% \node[draw=none] (y1) at (3, 3) {$p_2$} edge [<-] (M2);

% \end{tikzpicture}
% }
% \hspace{1cm}
% \subfloat[$t$]
% {
% \begin{tikzpicture}[scale=0.8]
% \node[draw=none] (z2) at (2, 5) {$c$};
% \node[draw=none] (M1) at (1.5, 4) {$c$} edge [<-] (z2);
% \node[draw=none] (M2) at (2.5, 4) {$b$} edge [<-] (z2);
% \node[draw=none] (x1) at (2, 3) {$b$} edge [<-] (M2);
% \node[draw=none] (y1) at (3, 3) {$a$} edge [<-] (M2);
% \end{tikzpicture}
% }
% %\end{center}
% \caption{A proof tree and its model}
% \label{fig:body}
% \end{wrapfigure}

% \jo{Ja bym w tym przykładzie zrobiła normalne proof tree z normalnymi strzałkami w dół. Bo to brzmi jakby to była jakaś konwencja na całą pracę.}

In this paper we consider only \textbf{monadic} programs, i.e.,
programs whose intensional predicates are at most 
unary. Moreover, throughout
the paper we assume that the programs do not use 0-ary intensional
predicates. 
% (except possibly for the goal predicate, but then it cannot be used
% in the body of any rule).
For general programs this is merely for the sake of
simplicity: one can always turn a 0-ary
predicate $Q$ to a unary predicate $Q(X)$ by introducing a dummy
variable $X$.  For connected programs (described below)
this restriction matters.

For a datalog rule $r$, let $G_r$ be a graph whose vertices are the
variables used in $r$ and an edge is placed between $X$ and $Y$ if the
body of $r$ contains an atomic formula $X \succ Y$ or $X \desc Y$. In $G_r$ we distinguish a {\bf head node} and {\bf intensional nodes}. The latter are all variables from the body of $r$ used by intensional predicates. A
program ${\cal P}$ is {\bf connected} if for each rule $r\in {\cal
P}$, the graph $G_r$ is connected\footnote{One could consider a definition allowing additionally nodes connected by the equality relation but we expect that this would be as hard as the disconnected case e.g. the main problem we leave open in Section~\ref{sec:equivalence}, the equivalence of child-only non-linear programs, becomes undecidable by the results of~\cite{FFA} for boolean queries.}.
% We say that ${\cal P}$ is {\bf downward} if
% for each rule $r\in {\cal P}$, $G_r$ is a directed tree whose root is
% the variable used in the head of $r$. 
% (except for rules for  0-ary goal predicates, where the root can be
% any variable). 
%In fact, each downward program is connected. 

% \fmm{Napisac gdzies, ze w kombinacjach booleowskich przyjmujemy ze
%   programy sa booleowskie}
% %\fmm{W algorytmach uzywamy booleowskich programow, tzn. 0-arnych.}

%%%%%%%% To powinno pojsc w skroconej formie do intro. 
%%%%%%%%
% 
% \jo{A może w paragrafie powyżej nie definiować co to jest linear? Wyrzucić ten fragment na razie (i definicję size of a rule, którą dałabym dużo wcześniej, a także przykład, że reguły mają jakiś size, bo to jest trywialne. Napisać paragraf tylko o tym, co to znaczy connected i zakończyć: The program from Example 1 is connected. A potem dać ten paragraf historyczny, który jest poniżej: Previous work on datalog bla bla bla.... . A jako kolejny, ten: We write Datalo(..) for the class of... . A datalog program is {\bf linear} if ... . For linear programs we shall use the letter L, e.g., .... . Mi się tak bardziej podoba, bo większy nacisk jest dzięki temu na pojęcie connected (nie mieszamy linear i connected w jednym paragrafie i nie wprowadzamy linear przed connected). No a connected to nasze glowne zalozenie jednak.}

Previous work on datalog on arbitrary structures often considered the
case of connected programs \cite{CosmadakisGKV88,GaifmanMSV93}. The
practical reason is that real-life programs tend to be connected (cf. \cite{BancilhonR86}). Also, rules which are not connected
combine pieces of unrelated data, corresponding to the $\emph{cross
product}$, an unnatural operation in the database context. It seems even
more natural to assume connectedness when working with tree-structured
databases. We shall do so.
We write $\dataall$ for the class of connected monadic datalog
programs, and $\datasucc$ for connected monadic programs that do not use
the relation $\desc$.

A datalog program is {\bf linear} if the right-hand side of each
rule contains at most one atom with an intensional predicate (proof
trees for such programs are single branches). 
For linear programs we
shall use the letter $\mathsf{L}$, e.g.,
$\mathsf{L\mbox{-}Datalog}(\succ)$ means linear programs
from $\datasucc$.
 The program from Example~\ref{ex:datalog}  is connected, but not linear.
The program from Example~\ref{ex:sat_proof tree} is both connected and linear.

{\bf Conjunctive queries} (CQs) are existential first order
formulae of the form $\exists x_1 \dots x_k \ \phi$, where $\phi$ is a
conjunction of atoms. We will consider {\bf unions of
conjunctive queries} (UCQs), corresponding to nonrecursive programs with a single
intensional predicate (goal) which is never used in the bodies of
rules. Since UCQs can be seen as datalog programs, we
can speak of connected UCQs and as for datalog, we shall always assume connectedness. We denote the classes of connected queries by $\cqall$, $\cqsucc$,
$\ucqall$, $\ucqsucc$, respectively.

\section{Equivalence}
\label{sec:equivalence}
\hyphenation{counter-example}

% Sometimes we will also consider a Boolean query $\Ppr_{Bool}(D)$ which equals 1 iff $\Ppr(D)$ is not empty.
% 
% \fm{equivalence = containment} 

% We briefly introduce some notation from \cite{FFA}.
% Each inferred atom can be witnessed by a \textbf{proof tree}: an
% atom inferred by the rule $r$ from intensional atoms $A_1, A_2, \dots,
% A_n$ is witnessed by a proof tree whose root has label $r$, and its
% children are the roots of proof trees for atoms $A_i$ (if $r$ has no
% intensional predicates in its body, the root has no children).
%  Notice that for linear programs a proof tree is actually a proof word.
 
% For a proof tree $T$ we define a canonical model. By Lemma \ref{} we can restrict to canonical models.

% and let $T(X)$ be a canonical model for $\Ppr(X)$.
% %The shortest path between two nodes in $T(X)$ is called the distance between two nodes.
% Note that the negative query can be matched only to nodes of distance at most $n$ from $X$.
% Because of this we introduce a new notion of a {\bf unary canonical model}.

% For every rule $r$ in $\Ppr$ and for every conjunct $C$ in $\Qpr$ we define its patterns $t_r$ and $t_C$. We sometimes identify these patterns with the set of their vertices.

% \fm{te nazwy na pewno do zmiany}

For datalog programs the containment problem can be reduced to the equivalence problem. Let $\cal P$ be a datalog program and let $\cal Q$ be a UCQ. Then $\cal P \subseteq \cal Q$ iff $\cal P \vee \cal Q \equiv \cal Q$. Notice that this reduction does not depend on the type of the programs (e.g., disallowing $\desc$ relation; or assuming linearity) but relies on the fact that datalog programs are closed under the disjunction.
 
The containment problem for datalog programs has been studied on trees in other contexts~\cite{AbiteboulBMW13,adamfilip,FrochauxGS14,FFA}. In~\cite{FFA} containment of datalog programs in UCQs on data trees was analyzed in detail for boolean queries, which are queries that return the answer 'yes' if they are satisfied in some node of a given database, and the answer 'no', otherwise. 
More formally, a datalog program $\Ppr$ defines a \emph{boolean query} $\Ppr_{Bool}(D)$ which equals $1$ iff $\Ppr(D)$ is nonempty and $0$ otherwise.
%More formally a boolean program $\Ppr$ results in the boolean relation $\Ppr_{Bool}(D)$ which equals $1$ iff $\Ppr(D)$ is nonempty. 

The containment problem is usually solved by considering the dual problem. For unary queries, it is the question whether there exist a database $D$ and $X \in D$ such that $\Ppr(X)$ and $\neg \Qpr(X)$, where $\neg \Qpr = D \setminus \Qpr(D)$. For boolean queries, it is the question if there exist a database $D$ and $X,Y \in D$ such that $\Ppr(X)$ and $\neg \Qpr(Y)$.
For datalog programs over trees, if we allow the $\desc$ relation this distinction does not make much of a difference (intuitively because using $\desc$ one can move from a node $X$ to any node $Y$). Thus a closer look at the proofs of Theorem~1 and Proposition~3 from \cite{FFA} gives the following.
%To understand the difference between unary and boolean queries let us look closer at the containment problem. Usually it is solved with the dual problem, i.e., whether there exists a database $D$ such that $\Ppr(D) \cap \neg \Qpr(D) \neq \emptyset$, where $\neg \Qpr = D \setminus \Qpr(D)$. In other words the question is if there exists a database $D$ and $X \in D$ such that $\Ppr(X)$ and $\neg \Qpr(X)$. A closer analysis of the boolean version shows that the containment problem is equivalent to the question if there exists a database $D$ and $X,Y \in D$ such that $\Ppr(X)$ and $\neg \Qpr(Y)$.
% 
%For programs that use $\desc$ relation this distinction does not make a big difference, intuitively because using $\desc$ one can move from the node $X$ to any node $Y$. Thus a closer look to the proofs of Theorem~1 and Proposition~3 from \cite{FFA} gives us the following.

\begin{proposition}
\label{thm:undec_containment}
Over ranked and unranked trees the containment problem of $\ldataall$ programs in $\ucqall$ is undecidable.
\end{proposition}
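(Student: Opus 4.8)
The plan is to reduce the \emph{boolean} containment problem, which Theorem~1 and Proposition~3 of~\cite{FFA} show to be undecidable for $\ldataall$ programs inside $\ucqall$, to the \emph{unary} containment problem of the statement. Recall the two dual problems: a unary containment $\Ppr \subseteq \Qpr$ fails exactly when some tree has a node $X$ with $\Ppr(X)$ and $\neg\Qpr(X)$, whereas the boolean containment $\Ppr_{Bool} \subseteq \Qpr_{Bool}$ fails exactly when some tree satisfies $\Ppr$ at some node but $\Qpr$ at no node. The observation that makes the reduction work is the one already flagged above: in the presence of $\desc$ a single node can inspect its whole subtree, so a boolean witness scattered across a tree can be collapsed onto one node.

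Concretely, given a boolean instance $(\Ppr, \Qpr)$ I would fix a fresh label $\# \notin \Sigma_{\Ppr} \cup \Sigma_{\Qpr}$ and, to each candidate tree, prepend a new $\#$-labelled root whose only child is the old root. I then turn the goals into unary goals that test the subtree below this root: demote the datalog goal $P$ of $\Ppr$ to an ordinary intensional predicate and add the single rule $\Ppr'(X) \leftarrow \#(X), X \desc Y, P(Y)$; and replace each disjunct of the UCQ $\Qpr$, viewed as a formula $\Qpr_i(Y)$ with free variable $Y$, by $\Qpr_i'(X) \leftarrow \#(X), X \desc Y, \Qpr_i(Y)$. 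The rule for $\Ppr'$ contains exactly one intensional atom and is connected through $X \desc Y$, so $\Ppr' \in \ldataall$; each new disjunct is a connected CQ, so $\Qpr' \in \ucqall$. By construction $\Ppr'(X)$ holds only at a $\#$-node and exactly when $\Ppr$ holds somewhere below $X$, while $\Qpr'(X)$ holds only at a $\#$-node and exactly when $\Qpr$ holds somewhere below $X$; hence $\neg\Qpr'(X)$ at a $\#$-node means that $\Qpr$ holds nowhere below $X$.

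It then remains to check both directions. If the boolean dual has a solution $D$, the $\#$-root of the extended tree witnesses $\Ppr'(X) \wedge \neg\Qpr'(X)$. Conversely, any unary witness $X$ must be $\#$-labelled with $\Ppr$ true somewhere below it and $\Qpr$ false everywhere below it; since $\Ppr$ and $\Qpr$ use only the downward axes $\succ$ and $\desc$, restricting to the subtree that carries the $\Ppr$-witness yields a tree on which $\Ppr_{Bool} = 1$ and $\Qpr_{Bool} = 0$. Two points need care: first, that prepending the $\#$-root does not alter the evaluation of $\Ppr$ or $\Qpr$ at the original nodes — this holds precisely because no upward axis is available and $\#$ is fresh, so the new node is also invisible to $\sim$; second, that the subtree extracted in the converse direction is evaluated identically in isolation and inside the full tree, again by downward-closedness of the axes.

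The main obstacle, and the reason this is presented as a matter of inspecting~\cite{FFA} rather than a black-box reduction, is ensuring the gadget is compatible with the precise shape of their hard instances in both the ranked and the unranked setting. In the unranked case the construction above applies verbatim. In the ranked case one treats $\#$ as a unary symbol and checks that adding a single unary node on top keeps the instance within the ranked alphabet used by~\cite{FFA}; since their lower bound is already established for ranked trees, the extra unary node preserves rankedness and the same argument goes through, giving undecidability over both classes of trees.
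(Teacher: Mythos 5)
There is a genuine gap, and it sits exactly where you placed your trust: in the claim that ``no upward axis is available'' and that the axes are ``downward-closed''. In this formalism $\succ$ and $\desc$ are binary relations that a rule may use with the head variable on \emph{either} side; the paper's own Example~\ref{ex:sat_proof tree} contains the rule $Q(X) \leftarrow Y \succ X, Q(Y)$, which walks \emph{up} the tree, and the appendix constructions use atoms such as $Y \desc X$ with $X$ the head variable. This breaks both directions of your reduction. For the forward direction, prepending a fresh $\#$-root is not evaluation-preserving: a disjunct of $\Qpr$ such as $G(X) \leftarrow Z \desc X, a(X)$, or any rule containing an unlabeled variable sitting above the old root, can acquire a new match once the old root gains an ancestor, so $\Qpr$ may newly hold below your $\#$-node even though $\Qpr_{Bool}$ was $0$ on the original tree. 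For the converse direction, a derivation of $P(Y)$ inside the full counterexample tree may route through ancestors or siblings of the subtree you extract, so $\Ppr_{Bool}$ need not remain $1$ after restriction. Monotonicity only gives you one inequality in each case, and it is the wrong one for $\Qpr$ in the first direction and for $\Ppr$ in the second. Your final paragraph identifies compatibility with the shape of the hard instances as the main obstacle, but then locates the difficulty in rankedness; the real difficulty is this directionality issue, which your gadget does not address even in the unranked case.

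The intended argument (which the paper only sketches by deferring to Theorem~1 and Proposition~3 of~\cite{FFA}) shares your starting intuition --- use $\desc$ to collapse a boolean witness onto a single node --- but implements it inside the hard instances rather than as a black-box wrapper. Concretely, the UCQ side is built so that the ``error somewhere'' condition propagates to \emph{every} node, via disjuncts of the shape $Q(X) \leftarrow Y \desc X,\ Y \desc Z,\ Q_{err}(Z)$ together with the variants $Q(X) \leftarrow Q_{err}(X)$ and $Q(X) \leftarrow X \desc Y, Q_{err}(Y)$ (compare the program $Q$ in Appendix~\ref{app:undecidability}); note that this deliberately uses $\desc$ in the upward direction to reach a common ancestor and descend again. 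On such instances the unary and boolean semantics of $\Qpr$ coincide at the node where $\Ppr$ is anchored, so no wrapping node and no subtree extraction are needed. If you want to keep a generic reduction, you would at minimum have to preprocess $\Ppr$ and $\Qpr$ so that no rule or disjunct can match above its head variable, or argue directly about the specific programs of~\cite{FFA}.
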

In the rest of this section we work only with fragments of datalog without the $\desc$ relation.
We start with ranked trees.

\begin{theorem}
\label{thm:ranked_trees}
%\label{thm:trees:twoexptime}
The containment problem is \TwoExpTime-complete for $\datasucc$ over ranked trees. In the special case of words it is \PSpace-complete.
\end{theorem}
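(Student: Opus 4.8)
The plan is to establish the containment problem for $\datasucc$ (connected monadic datalog without descendant, over ranked trees) as \TwoExpTime-complete by giving matching upper and lower bounds, and to handle the word case separately for the tighter \PSpace\ bound.

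For the upper bound, I would exploit the fact that without the $\desc$ relation, a connected monadic datalog program on ranked trees behaves in an essentially \emph{local} and \emph{regular} manner. The key observation is that each rule only connects variables through the child relation $\succ$, so the ``reach'' of a single rule application is bounded by the number of variables in the rule. My first step would be to show that the set of proof trees of $\Ppr$, together with the satisfiability constraint from the preliminaries (recall that on trees a proof tree may fail to be realizable by any model because of the single-label restriction and the unique-father property), can be captured by a tree automaton running on an encoding that simultaneously carries the database labels and the proof-tree structure. Because the program is monadic, at each node we only need to track which intensional predicates hold, so the relevant automaton state space is exponential in $|\Ppr|$. The second step is to encode the complement query $\neg\Qpr$ for the UCQ $\Qpr$: here I would turn to the dual problem described just before Proposition~\ref{thm:undec_containment}, namely searching for a database $D$ and a node $X$ with $\Ppr(X)$ and $\neg\Qpr(X)$. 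Since $\Qpr$ is a UCQ, $\neg\Qpr$ is a \emph{universal} condition, and encoding ``no disjunct of $\Qpr$ matches at $X$'' as an automaton constraint costs another exponential because a conjunctive query match on a ranked tree can be tested by a (nondeterministic) tree automaton of exponential size, and complementation/universality of these reaches the second exponential. The upper bound then follows by testing emptiness of the product automaton; emptiness of tree automata is polynomial in the automaton size, giving the \TwoExpTime\ bound overall. A careful accounting of \emph{where} the two exponentials come from (one from the monadic-predicate bookkeeping plus CQ matching, one from handling the negated UCQ / universality) is what yields exactly \TwoExpTime\ rather than more.

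For the lower bound I would aim to reduce an \textsc{AExpSpace}-complete problem (equivalently a \TwoExpTime\ one, since $\textsc{AExpSpace}=\TwoExpTime$) such as the acceptance problem of an alternating exponential-space Turing machine, or the word problem for such a machine on a bounded tape. The natural route on \emph{ranked} trees is to let a tree encode a computation whose branching structure simulates the alternation, and design a $\datasucc$ program whose proof trees (single-branch along each path, since we can keep things close to linear) track a doubly-exponential counter built from an exponential number of bits; the containment in a fixed UCQ then fails exactly when the encoded computation is accepting. The ranked setting is essential here because the bounded branching lets the automaton/program enforce that counter bits are laid out faithfully along children. For the word special case, the same machinery degenerates: words are unary-branching ranked trees, proof trees become paths, the automata become word (finite-state) automata of exponential size, and emptiness/universality testing can be done in \PSpace\ by on-the-fly reachability over exponential state spaces, with a matching \PSpace-hardness reduction from a standard \PSpace-complete problem such as automata universality or a linear-space Turing machine acceptance.

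\textbf{The main obstacle} I anticipate is twofold. First, on the upper-bound side, the subtlety flagged in the preliminaries --- that not every proof tree is \emph{satisfiable} on a tree model --- means the automaton cannot simply accept all proof trees of $\Ppr$; it must verify realizability under the single-label and unique-father constraints, and weaving this satisfiability check into the product automaton without blowing past \TwoExpTime\ requires care. Second, on the lower-bound side, the challenge is to force doubly-exponential behavior using \emph{only} the child relation and \emph{connected monadic} rules: without $\desc$ one cannot freely jump between distant nodes, so the doubly-exponential counter must be threaded through local child-steps, and I expect the delicate part to be arguing that the negated UCQ can detect a single counter fault anywhere in the structure while remaining a fixed, connected UCQ over $\succ$. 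I would therefore spend most of the effort on (i) the satisfiability-aware automaton construction for the upper bound and (ii) the faithful local encoding of the doubly-exponential counter for the lower bound, treating the remaining emptiness and complexity-class bookkeeping as routine.
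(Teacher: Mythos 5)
Your overall architecture (automaton for $\Ppr$, automaton for the negated right-hand side, product, emptiness test, with one exponential saved for the negation step) is in the right family, but there are two genuine gaps. The first and most important: you never address the infinite alphabet. The trees carry labels from an infinite $\Sigma$ and the programs compare them with $\sim$, so no tree automaton can be run on the input until one proves that containment can be checked over a \emph{finite} alphabet. This reduction is the actual source of the complexity gap between words and trees in the paper: over words a linear-size alphabet suffices, whereas over ranked trees the finite alphabet is already of exponential size, and this — not ``monadic-predicate bookkeeping plus CQ matching'' — is where the first exponential comes from. Your accounting of the two exponentials is therefore not grounded; without the finite-alphabet lemma the construction does not start, and with it your claimed sources of blow-up are misattributed. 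Relatedly, the theorem concerns containment of a $\datasucc$ program in another $\datasucc$ program, not merely in a UCQ; your negation step leans on $\Qpr$ being a UCQ (``a conjunctive query match \dots can be tested by a nondeterministic tree automaton''), so you have only sketched an upper bound for a special case. The paper instead builds, for \emph{both} programs, a two-way alternating automaton that walks over the model and simulates rule application locally, and then invokes the known fact that such automata can be converted to single-exponential nondeterministic bottom-up automata for both the language and its complement; that complementation step is exactly what lets a general datalog $\Qpr$ be negated within the budget, and it is not trivial (infinite non-accepting runs must be handled).

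Second, your worry about satisfiability of proof trees is an artifact of your own encoding. If the automaton runs on an encoding ``carrying the proof-tree structure,'' you must indeed verify realizability under the single-label and unique-father constraints. The paper avoids this entirely: its automata run on the database tree itself and simulate the evaluation of the program node by node, so only derivations that are actually realized on the model ever arise, and no separate satisfiability check is needed. I would advise restructuring the upper bound around (i) a finite-alphabet reduction and (ii) a model-walking (two-way alternating) automaton, rather than a proof-tree encoding. Your lower-bound sketch via an alternating exponential-space machine is plausible in outline but is not developed enough to assess; the delicate point you correctly identify — detecting a counter fault with a fixed connected $\succ$-only UCQ — is precisely the part that is left unargued.
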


The above result yields tight complexity bounds for the equivalence problem of $\datasucc$ programs to $\ucqsucc$ programs over ranked trees.
To prove Theorem~\ref{thm:ranked_trees} (see Appendices~\ref{subs:words} and~\ref{subs:trees}) we define automata that simulate the behavior of datalog programs, modifying the approach of \cite{FFA}. The new construction gives better complexity results for non-linear programs\footnote{In \cite{FFA} the non-linear case required an additional exponential blow-up. However, the improvement of complexity is not caused by considering unary instead of boolean queries. It is easy to see that Theorem~\ref{thm:ranked_trees} holds also in the boolean case.}.

In the rest of this section we focus on the equivalence problem of $\datasucc$ programs to $\ucqsucc$ programs over unranked trees. For the containment problem, this question was left open in~\cite{AbiteboulBMW13}.

For boolean queries, the containment problem of $\datasucc$ programs in $\ucqsucc$ programs was proved undecidable in~\cite{FFA}. Decidability was restored for the linear fragment, for which it was shown to be \TwoExpTime-complete. We improve the complexity for unary queries using different techniques (see Appendices~\ref{subs:upper} and~\ref{subs:lower}).

%Over ranked trees containment is \TwoExpTime-complete for $\datasucc$. In the special case of words containment is \PSpace-complete.
%\end{theorem}
%To prove this result we define automata that simulate the behavior of datalog programs, using a similar approach as in \cite{FFA}. This construction gives slightly better complexity results, because we get the same complexity for the linear case and non-linear case. In \cite{FFA} the non-linear case required an additional exponential blow-up. However, this improvement of the complexities does not come from changing the queries from boolean to unary, in fact it is easy to see that the lowered complexities carry on to the boolean queries.
%
%For the rest of this section we shall focus on the equivalence problem of $\datasucc$ programs to $\ucqsucc$ programs over unranked trees. This question for the containment problem was left as an open problem in~\cite{AbiteboulBMW13}.
%For boolean queries the containment problem of $\datasucc$ programs in $\ucqsucc$ programs was proved undecidable in~\cite{FFA}. Decidability was restored only for the linear fragment, with the complexity \TwoExpTime-complete. We improve this result for unary queries.

\begin{theorem}
\label{thm:containment:cl}
The containment problem of an $\cldatasucc$ program in a $\ucqsucc$ program is \ExpSpace-complete over unranked trees.
\end{theorem}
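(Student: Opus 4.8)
The plan is to decide the \emph{dual} problem — existence of a tree $T$ and a node $x$ with $\Ppr(x)\wedge\neg\Qpr(x)$ — by a nondeterministic exponential-space procedure; since $\mathrm{NExpSpace}=\ExpSpace$ by Savitch's theorem, this gives the upper bound. Because $\Ppr$ is linear and child-only, a derivation of $\Ppr(x)$ is a single branch, and the connectedness of each rule forces its variables onto a path of bounded length in $T$; hence a proof of $\Ppr(x)$ is a \emph{walk} in $T$ that starts where some base rule fires and proceeds by parent/child steps (each rule contributing a segment of bounded radius) until it reaches $x$. First I would traverse this walk incrementally, maintaining at each moment only a sliding window of bounded radius around the current node (enough to check the non-intensional atoms of each rule and to detect any embedding of a bounded child-only CQ), the intensional predicate currently being proved (polynomial information), and a determinized record of all partial embeddings of the disjuncts of $\Qpr$ that straddle the current frontier.

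This last record is where the exponential space is spent: certifying $\neg\Qpr(x)$ means guaranteeing that no disjunct of $\Qpr$ embeds \emph{anywhere}, and tracking the set of live partial matches as the witness tree is built is a subset construction of exponential size. The infinite alphabet with $\sim$ is handled by abstracting each node to its identity among the finitely many labels named in $\Ppr,\Qpr$ together with a bounded \emph{equality type} recording which relevant window nodes share its label; since only the boundedly many comparisons occurring in the rules and the CQs ever matter and all of them are local, this abstraction is sound over a finite alphabet. The procedure accepts when the walk reaches a configuration deriving the goal at $x$ while the live-match record never witnesses a completed embedding; its working space is exponential, yielding \ExpSpace. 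The obstacle here is the interaction of unbounded branching with the infinite alphabet: one must make sure that ``no disjunct of $\Qpr$ embeds among the arbitrarily many siblings and descendants'' is faithfully captured by the finite equality-type window together with the determinized match record, and that the freedom to choose fresh labels is used precisely to defeat the $\sim$-constraints of $\Qpr$.

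For the lower bound I would reduce from the acceptance problem of a $2^n$-space-bounded Turing machine $M$ on input $w$ (equivalently, an exponential-corridor tiling problem). A computation of $M$ is encoded as an unranked tree whose main branch runs through the successive configurations, each configuration being an exponentially wide tape; the width $2^n$ is realised by addressing tape cells with $n$-bit counters encoded in the labels, using the infinite alphabet to name counter values and cell contents and $\sim$ to compare them. The linear program $\Ppr\in\cldatasucc$ is designed to walk along this encoding and derive its goal exactly on trees whose shape is that of a syntactically well-formed, accepting run — linearity matching the single path through the computation. The query $\Qpr\in\ucqsucc$ collects the ``error patterns'': for each way the encoding can be locally wrong (incorrect initial configuration, illegal transition between neighbouring cells or between consecutive configurations, a miscounting address, or absence of an accepting state) there is a connected child-only CQ, possibly using $\sim$ to compare adjacent cell-names, that embeds precisely when that defect is present. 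Then $\Ppr\not\subseteq\Qpr$ holds iff some tree is accepted by $\Ppr$ and avoids every error pattern, i.e.\ iff $M$ accepts $w$; as the reduction is polynomial, containment is \ExpSpace-hard.

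The step I expect to be hardest is the counter arithmetic in the lower bound: encoding an $n$-bit cell address so that ``increment by one'' and ``same address in the next configuration'' become the \emph{absence} of bounded child-only patterns. Designing these patterns — using $\sim$ to align corresponding bits across neighbouring cells and across successive configurations — so that their joint absence precisely enforces correct exponential counting, all within connected child-only CQs, is the delicate core of the construction; on the upper-bound side the analogous crux is keeping the neighbourhood abstraction simultaneously finite and faithful under unbounded branching.
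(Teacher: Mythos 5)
There is a genuine gap, and it stems from a single misreading of the semantics: you treat $\neg\Qpr(x)$ as ``no disjunct of $\Qpr$ embeds \emph{anywhere} in the tree''. But $\Qpr$ is a \emph{unary} query: $x \notin \Qpr(T)$ only requires that no disjunct admits a homomorphism mapping its \emph{head variable} to $x$, and since each disjunct is connected, child-only and of size at most $n$, this is a purely local condition on the radius-$n$ neighbourhood of $x$. Your upper-bound procedure, which certifies that $\Qpr$ embeds nowhere, therefore decides a strictly stronger (boolean-style) condition and is incomplete for non-containment: for $\Ppr(X)\leftarrow a(X),\, X\succ Y$ and $\Qpr(X)\leftarrow Y\succ X$ we have $\Ppr\not\subseteq\Qpr$ (witnessed by the root of a two-node tree), yet every model of $\Ppr$ contains an edge and hence an embedding of $\Qpr$ \emph{somewhere}, so your test would wrongly report containment. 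The paper's automaton instead tracks only the partial homomorphisms of the disjuncts of $\Qpr$ whose images stay within distance $n$ of the marked node, together with an ancestor path of length $O(n)$ locating the current position of the (linear) proof word relative to that node; the portions of the proof word that wander farther away cannot affect $\Qpr(x)$ and are handled separately via satisfiability on universal trees.

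The same misreading breaks your lower bound. In your layout the run of $M$ stretches along a main branch, so almost all of the encoding lies at unbounded distance from the node $x$ where $\Ppr$ fires---and a connected, child-only, polynomial-size unary CQ anchored at $x$ simply cannot reach the errors there. This is exactly the obstruction the paper points out when explaining why the boolean undecidability construction of~\cite{FFA} does not transfer to unary queries. The paper's reduction circumvents it by exploiting unbounded branching: every configuration is encoded as a full binary tree of height $n$ hanging from a child of the distinguished node $X$ (labelled $\top$), so that \emph{all} nodes of the encoding lie within distance $O(n)$ of $X$ while the tape still has width $2^n$; each error-detecting disjunct of $\Qpr$ begins with $\top(X)$ and reaches the offending cells along paths of length $O(n)$, comparing configurations via $\sim$ on their identification nodes. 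Repairing your construction essentially requires adopting this ``everything within a bounded radius of $x$'' layout; the bit-level counter gadgets you worry about are then realised by the positional labels $(L,i)$, $(R,i)$ along the height-$n$ address paths rather than by increment patterns along a long branch.
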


Unfortunately our approach does not generalize to the non-linear case. On the other hand, the proof of undecidability provided in~\cite{FFA} also cannot be adapted to work in our setting\footnote{Indeed, the main idea of the undecidability proof is to use the UCQ $\Qpr$ to find errors in the run of a Turing machine encoded by the program $\Ppr$. If the nonrecursive query $\Qpr$ is unary it can only find errors close to the node $X$, such that $\Ppr(X)$.}. We leave the question of the decidability of containment for non-linear programs as an open problem.

The following lemma is proved in Appendix~\ref{subs:lemma} (we do not assume linearity).
%We start with the ranked trees.

\begin{lemma}
\label{lemma:ucq_in_datalog}
The containment problem of $\ucqsucc$ queries in $\datasucc$ is in \NP{} over ranked and unranked trees.
\end{lemma}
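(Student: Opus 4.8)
The plan is to reduce the containment $\Qpr \subseteq \Ppr$ to a bounded number of evaluations of the monadic datalog program $\Ppr$ on small canonical trees, and then to exploit the fact that monadic datalog evaluation is in \NP. First note that, writing the $\ucqsucc$ query as a union $\Qpr = \bigvee_i q_i$ of connected child-only conjunctive queries, we have $\Qpr \subseteq \Ppr$ if and only if $q_i \subseteq \Ppr$ for every disjunct $q_i$, since $\Qpr(X)$ holds exactly when some $q_i(X)$ holds. So it suffices to handle a single CQ $q = q(x_0)$ with distinguished variable $x_0$.

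For each $q$ I would build a \emph{canonical tree} $T_q$ that is a most general tree model of $q$, together with the image $x_0$ of the distinguished variable. Starting from the frozen instance of $q$ (one node per variable, atoms as facts), the only identifications forced over trees come from the unique-parent property: whenever $q$ contains $u \succ w$ and $v \succ w$, the nodes $u$ and $v$ must coincide. Taking the congruence closure of these forced identifications yields a directed graph with at most one parent per node; since $q$ is connected this is a tree precisely when the result is acyclic, and otherwise $q$ has no tree model and $q \subseteq \Ppr$ holds vacuously. Labels are assigned in the most general way: nodes pinned by an atom $a(\cdot)$ (or transitively tied to such a node through $\sim$-atoms) get that label, every remaining class gets a distinct \emph{fresh} label chosen outside $\Sigma_{\Ppr} \cup \Sigma_q$, and if two distinct labels are forced on one node then again $q$ is unsatisfiable and containment holds. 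Over ranked trees the same construction is used, after completing $T_q$ to respect the arities by attaching fresh subtrees with fresh labels. All of this is deterministic and polynomial in $|q|$.

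The heart of the argument is the equivalence: $q \subseteq \Ppr$ if and only if $\Ppr(x_0)$ holds in $T_q$. The direction $(\Rightarrow)$ is immediate, because $T_q$ is itself a tree model of $q$ at $x_0$ (the quotient map is a satisfying assignment). For $(\Leftarrow)$ I would show that $T_q$ is \emph{universal}: for any tree $T$ and node $a$ with $q(a)$, a satisfying assignment factors through the forced folding and gives a map $\bar g\colon T_q \to T$ with $\bar g(x_0)=a$. This $\bar g$ preserves $\succ$, preserves $\sim$, and preserves every label in $\Sigma_{\Ppr}$ (such labels on $T_q$ are forced by atoms of $q$, hence carried to $T$), while the fresh labels lie outside the signature that $\Ppr$ can test. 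Consequently $\bar g$ is a homomorphism with respect to exactly the predicates occurring in $\Ppr$, and since datalog is preserved under such homomorphisms, $\Ppr(x_0)$ in $T_q$ transports to $\Ppr(a)$ in $T$.

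Finally, deciding whether $\Ppr(x_0)$ holds in the fixed small tree $T_q$ is monadic datalog evaluation, which lies in \NP: the derivable facts are monadic, hence only polynomially many in $|T_q|$ times the number of intensional predicates, so a positive instance is certified by a polynomial-length sequence of derived facts together with, for each, the rule and variable assignment deriving it from earlier facts and the tree. Guessing and verifying such certificates simultaneously for all satisfiable disjuncts $q_i$ (the unsatisfiable ones being vacuously contained) gives a single \NP{} procedure. The main obstacle is the correctness of the universality claim in the presence of $\sim$ and the infinite alphabet: one must argue that the forced folding is exactly the set of identifications valid in every model, and that the most-general labeling creates no spurious $\sim$- or label-coincidences, so that $\bar g$ really preserves all predicates that $\Ppr$ uses; the ranked case additionally requires checking that the fresh completion subtrees can always be matched into an arbitrary ranked model.
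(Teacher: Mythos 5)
Your proposal is correct and is essentially the paper's own argument: the paper likewise reduces containment to the disjuncts of the UCQ, builds for each CQ $\phi$ its canonical tree $t_\phi$ (the pattern $\pi_\phi$ folded via the forced unique-parent identifications, with fresh labels for nodes not pinned by $\Sigma_{\Qpr}$, exactly as in its appendix on canonical models and homomorphisms), and then checks $\Ppr$ at the marked node by invoking the \NP{} combined complexity of monadic datalog, guessing a proof tree and verifying it. The one superfluous step is your ``completion'' of $T_q$ in the ranked case: the paper's ranked trees are merely at-most-$R$-ary, so it suffices to declare vacuously contained any disjunct whose folded tree exceeds the rank, and no fresh completion subtrees (nor the matching worry you flag at the end) are needed.
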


As a corollary of Theorem~\ref{thm:containment:cl} and  Lemma \ref{lemma:ucq_in_datalog} we obtain the main result of this section. The lower bound is carried from the containment problem.

\begin{theorem}
 The equivalence problem of an $\cldatasucc$ program to a $\ucqsucc$ program is \ExpSpace-complete over unranked trees.
\end{theorem}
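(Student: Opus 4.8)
The plan is to reduce the equivalence problem to the two containment problems that have already been settled, since $\Ppr \equiv \Qpr$ holds exactly when $\Ppr \subseteq \Qpr$ and $\Qpr \subseteq \Ppr$. For the first inclusion, $\Ppr$ is an $\cldatasucc$ program and $\Qpr$ is a $\ucqsucc$ program, so Theorem~\ref{thm:containment:cl} places the test $\Ppr \subseteq \Qpr$ in \ExpSpace. For the converse inclusion $\Qpr \subseteq \Ppr$, I would observe that $\Ppr$ is in particular a $\datasucc$ program, so Lemma~\ref{lemma:ucq_in_datalog} places the test $\Qpr \subseteq \Ppr$ in \NP. Running both tests and returning their conjunction decides equivalence; since $\NP \subseteq \ExpSpace$, the whole procedure runs in \ExpSpace, which gives the upper bound.

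For the lower bound I would carry the \ExpSpace-hardness of containment (Theorem~\ref{thm:containment:cl}) across to equivalence using the reduction described at the start of the section: for a datalog program $\Ppr$ and a UCQ $\Qpr$ one has $\Ppr \subseteq \Qpr$ iff $\Ppr \vee \Qpr \equiv \Qpr$. Concretely, I would build the disjunction by renaming the goal of $\Qpr$ to the goal of $\Ppr$ and adjoining the rules of $\Qpr$ to those of $\Ppr$. The key point to verify is that this disjunction is again an $\cldatasucc$ program: the rules coming from $\Qpr$ contain no intensional atom in their bodies, so adjoining them preserves linearity (each rule still has at most one intensional body atom) as well as monadicity, connectedness of each rule, and the absence of $\desc$. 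Hence $(\Ppr \vee \Qpr, \Qpr)$ is a legitimate instance of our equivalence problem, the transformation is polynomial, and it maps yes-instances of containment to yes-instances of equivalence and back.

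I expect the only genuinely delicate point to be this class-membership check for $\Ppr \vee \Qpr$ in the hardness reduction, i.e.\ confirming that forming the disjunction breaks neither linearity nor connectedness. Once that observation is in place, both directions follow immediately from Theorem~\ref{thm:containment:cl} and Lemma~\ref{lemma:ucq_in_datalog}, and the \ExpSpace-completeness of equivalence drops out as a corollary.
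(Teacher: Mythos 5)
Your proposal is correct and follows essentially the same route as the paper: the upper bound is obtained by splitting equivalence into the two containments and invoking Theorem~\ref{thm:containment:cl} and Lemma~\ref{lemma:ucq_in_datalog}, and the lower bound is carried over from containment via the reduction $\Ppr \subseteq \Qpr$ iff $\Ppr \vee \Qpr \equiv \Qpr$ stated at the beginning of Section~\ref{sec:equivalence}. Your explicit check that forming the disjunction preserves linearity, connectedness, and the absence of $\desc$ is exactly the observation the paper relies on (it notes the reduction ``does not depend on the type of the programs'') but states without detail.
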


\section{Boundedness}
\label{sec:boundedness}
\newcommand{\pp}{{\mathcal A}^1_{\mathcal P}}
\newcommand{\ppp}{{\mathcal A}^2_{\mathcal P}}
 
Consider a datalog program $\cal P$ with a goal predicate $P$. By ${\cal P}^i(D)$ we denote the collection of facts about the predicate $P$ that can be deduced from a database $D$ by at most $i$ applications of the rules in $\cal P$. More formally, ${\cal P}^i(D)$ is the subset of ${\cal P}(D)$ derived using proof trees of height at most $i$, where the \emph{height} of a tree is the length of the longest path from its root to a leaf. Then obviously $${\cal P}(D) = \bigcup_{i \geq 0} {\cal P}^i(D).$$ We say that the program $\cal P$ is {\bf bounded} if there exists a number $n$, depending only on $\cal P$, such that for any database $D$, we have ${\cal P}(D) = {\cal P}^n(D)$. Intuitively this means that the depth of recursion is independent of the input database\footnote{Observe that we are only interested in the output on the goal predicate. This is why the property we consider is sometimes called the \emph{predicate} boundedness~\cite{HillebrandKMV95}.}.

Each proof tree corresponds to a conjunctive query in a natural way. Therefore, we can always translate a datalog program to an equivalent, but possibly infinite, union of conjunctive queries. If the program is bounded then it is equivalent to a finite subunion of its corresponding conjunctive queries. For full datalog it is known that the opposite implication is also true, i.e., a program is bounded iff it is equivalent to a (finite) UCQ~\cite{Naughton1991233}. The same holds for the class $\datasucc$:
%Moreover, if the program is connected then so is the corresponding union of conjunctive queries. For full datalog it is known that a program is bounded iff it is equivalent to a (finite) UCQ~\cite{Naughton1991233}.  In our case we obtain the following:

\begin{proposition}
\label{prop:bound_ucq}
Let ${\cal P \in \datasucc}$. Then $\cal P$ is bounded iff it is equivalent to a union of conjunctive queries ${\cal Q \in \ucqsucc}$.
\end{proposition}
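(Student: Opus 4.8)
The plan is to prove the two implications separately. The direction from boundedness to equivalence with a UCQ is the easier one and essentially follows from the general datalog theory. Concretely, if $\cal P$ is bounded by some $n$, then ${\cal P}(D) = {\cal P}^n(D)$ for every database $D$. Since ${\cal P}^n(D)$ is computed using only proof trees of height at most $n$, and there are only finitely many proof trees of height at most $n$ over the finite signature $\{\succ, \sim\} \cup \Sigma_{\cal P}$, each such proof tree corresponds to a conjunctive query. I would take the union of the conjunctive queries arising from all \emph{satisfiable} proof trees of height at most $n$; this is a finite $\ucqsucc$ query $\cal Q$, and I would argue $\cal Q \equiv \cal P$. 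The remaining point is that $\cal Q$ is connected and uses only $\succ$ and $\sim$, which follows because $\cal P \in \datasucc$ is connected and does not use $\desc$, so every proof tree glues the node variables along $\succ$-edges into a connected pattern.

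The converse direction, from equivalence with a $\ucqsucc$ query to boundedness, is where the main work lies. Suppose $\cal P \equiv \cal Q$ for a finite $\cal Q \in \ucqsucc$. I would set $n$ to be the maximum, over the disjuncts of $\cal Q$, of the size (number of variables) of a disjunct, or some fixed function thereof. The goal is to show ${\cal P}(D) = {\cal P}^n(D)$ for all $D$. The inclusion ${\cal P}^n(D) \subseteq {\cal P}(D)$ is immediate. For the reverse, suppose $X \in {\cal P}(D)$; then $X \in {\cal Q}(D)$, so some disjunct $C$ of $\cal Q$ maps homomorphically into $D$ witnessing $X$. Since $C$ has at most $n$ variables, it should embed into a bounded-height fragment of a proof tree, and I would argue that this witnesses $X \in {\cal P}^n(D)$. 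The crux is to convert a homomorphism from the (bounded) query $C$ into $D$ back into a short proof tree for $\cal P$; this uses the containment $\cal Q \subseteq \cal P$ together with the fact that $C$ itself, viewed as a database, satisfies $\cal P$ via a proof tree whose height is bounded by the size of $C$.

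The hard part will be the last step: translating a homomorphic image of a small query into a \emph{short} proof tree of the recursive program $\cal P$. The subtlety, emphasized in the discussion of satisfiable proof trees in the excerpt, is that proof trees over trees are constrained by the single-label and unique-parent conditions, so one cannot naively truncate or splice proof trees without checking satisfiability. I would handle this by working with the canonical database of the query disjunct $C$: since $\cal Q \subseteq \cal P$ and $C$ (as a tree-shaped structure) satisfies its own query, $X \in {\cal P}(C)$, so there is a proof tree $T$ witnessing this over the structure $C$. Because $C$ has at most $n$ variables and $\cal P$ is connected and monadic, the essential claim is that $X \in {\cal P}^{f(n)}(C)$ for a bound $f(n)$ depending only on $n$ and $\cal P$ (a proof tree can be assumed to have height bounded by the number of distinct nodes it can meaningfully reference, since repeated intensional facts on the same node can be short-circuited). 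Transporting this short proof tree along the homomorphism $C \to D$ then yields $X \in {\cal P}^{f(n)}(D)$, giving boundedness with the uniform constant $f(n)$. I expect the technical heart of the argument to be this short-proof-tree lemma, namely that on a database with boundedly many relevant nodes a monadic connected program deduces all its goal facts with bounded recursion depth.
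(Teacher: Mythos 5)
Your proposal is correct and follows essentially the same route as the paper: the easy direction via the finitely many proof trees of bounded height, and the hard direction by pulling a witnessing homomorphism from a bounded-size disjunct of $\mathcal{Q}$ back to a bounded-size database on which $\mathcal{P}$ saturates in boundedly many steps. The only (immaterial) difference is that the paper takes $n$ to be the maximum number of rule applications needed when evaluating $\mathcal{P}$ directly on the finitely many possible homomorphic images $h(\pi_\phi)$ inside $t$, whereas you obtain an explicit saturation bound on the canonical database of the disjunct and transport the short proof tree along the homomorphism; your ``short-proof-tree lemma'' (height at most the number of intensional facts, by short-circuiting repeated facts) is exactly the observation the paper uses implicitly.
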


We remark that the above characterization (which we prove in~Appendix~\ref{app:boundedness}) is based on the existence of so-called \emph{canonical databases} for CQs (see e.g.~\cite{Chandra:1977:OIC:800105.803397}) in $\datasucc$. The following example shows that without canonical databases equivalence to some UCQ does not necessarily imply boundedness. It relies on the fact that $\desc$ is the transitive closure of $\succ$. 
%By Proposition~\ref{prop:bound_ucq} there are no such examples for programs in $\datasucc$. 
%The class $\dataall$ does not have this property.

%\begin{example}
%The program $\cal P \in \dataall$ on the left below is not bounded. Its goal predicate is $P$. It goes down a tree looking for $a$ or $b$. It finds $a$ in one step. But finding $b$ can take arbitrarily long. The program $\cal P'$ on the right below is a UCQ equivalent to $\cal P$.
%
%\begin{minipage}{0.03\textwidth}
% $\Ppr$
%\end{minipage}
%\begin{minipage}{0.46\textwidth}
%{\footnotesize   \begin{align*}
%      P(X) &\leftarrow  X \desc Y, a(Y) \\
%      P(X) &\leftarrow  X \succ Y, Q(Y) \\
%      Q(X) &\leftarrow  X \succ Y, Q(Y) \\
%      Q(X) &\leftarrow  b(X)
%    \end{align*} }
%\end{minipage}
%\begin{minipage}{0.03\textwidth}
% $\Ppr'$
%\end{minipage}
%\begin{minipage}{0.50\textwidth}
%{\footnotesize   \begin{align*}
%     P(X) &\leftarrow  X \desc Y, a(Y) \\
%      P(X) &\leftarrow  X \desc Y, b(Y)
%\end{align*} }
%\end{minipage}
%\end{example}

\begin{example}
The program $\cal P \in \dataall$ on the left is not bounded -- finding $b$ in a tree can take arbitrarily long. The program $\cal P'$ on the right is a UCQ equivalent to $\cal P$.

\begin{minipage}{0.03\textwidth}
 $\Ppr$
\end{minipage}
\begin{minipage}{0.46\textwidth}
{\footnotesize   \begin{align*}
      P(X) &\leftarrow  X \desc Y, a(Y) \\
      P(X) &\leftarrow  X \succ Y, Q(Y) \\
      Q(X) &\leftarrow  X \succ Y, Q(Y) \\
      Q(X) &\leftarrow  b(X)
    \end{align*} }
\end{minipage}
\begin{minipage}{0.03\textwidth}
 $\Ppr'$
\end{minipage}
\begin{minipage}{0.50\textwidth}
{\footnotesize   \begin{align*}
     P(X) &\leftarrow  X \desc Y, a(Y) \\
      P(X) &\leftarrow  X \desc Y, b(Y)
\end{align*} }
\end{minipage}
\end{example}
We obtain a negative result for $\ldataall$ (see Appendix~\ref{app:undecidability}).

\begin{theorem}
\label{thm:boundedness:undec}
 The boundedness problem for $\ldataall$ is undecidable over words and ranked or unranked trees.
 \end{theorem}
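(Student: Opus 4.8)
The plan is to prove undecidability by reduction from a known undecidable problem about linear recursion, and the natural candidate is the halting problem for Turing machines (or, equivalently, the Post correspondence problem or the emptiness problem for two-counter machines). The key idea is that boundedness of a linear monadic program intuitively asks whether the length of the recursion is bounded independently of the input tree; so I would encode a computation device into a linear $\ldataall$ program in such a way that the program is bounded if and only if the device halts on all inputs (or fails to have arbitrarily long runs, depending on the precise formulation). Concretely, I would take a Turing machine $M$ and build a linear program $\Ppr_M$ whose proof trees (which for linear programs are single branches, as noted after Example~\ref{ex:sat_proof tree}) correspond to configurations sequences of $M$. A branch of length $n$ would witness a valid halting computation of length $n$; if $M$ has runs of unbounded length then the satisfiable proof trees have unbounded height, so $\Ppr_M$ is unbounded, and if all runs are short then it is bounded.

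The first thing I would do is fix the encoding of machine configurations on a tree using the infinite alphabet with equality. Since labels come from an infinite alphabet that can be tested for equality, and since we have both $\succ$ and $\desc$ available, I would use the equality relation $\sim$ to link a symbol at one level of the branch to its copy at the next level, thereby propagating tape contents down a path; the descendant relation $\desc$ is what lets the program ``reach across'' to compare a cell with its counterpart in the successor configuration, which is precisely the power that is absent in $\datasucc$ and which is why the earlier results regain decidability without $\desc$. This is the step that makes the reduction possible: the combination of an infinite equality-testable alphabet and $\desc$ lets a \emph{linear} (single-branch) program simulate unbounded tape movement and error-checking along one path.

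The main obstacle, and the part requiring the most care, is ensuring that \emph{satisfiability} of proof trees exactly tracks validity of computations. As the paper stresses (Example~\ref{ex:sat_proof tree}), a proof tree need not correspond to any actual database because each node carries a unique label and has a unique parent; so the real work is to design the rules so that a branch is a \emph{satisfiable} proof tree precisely when it encodes a genuine (error-free) run of $M$. I would arrange the rules so that any attempt to cheat on the transition relation forces two incompatible labels on the same node (making the tree unsatisfiable), mirroring the satisfiability trick already illustrated in the excerpt. Getting boundedness to coincide with a decidable-looking property of $M$ — rather than merely with nonemptiness — is the delicate point, since boundedness is about the \emph{height} of proof trees rather than their mere existence.

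Finally, I would verify the reduction in both directions and check that it survives the restriction to the stated classes of structures. For the ``hard'' direction I show that if $M$ admits computations of unbounded length then $\Ppr_M$ derives its goal only via satisfiable branches of unbounded height, so no finite bound $n$ works and the program is unbounded. For the converse I argue that a bound on recursion depth yields a bound on the length of simulated runs. To obtain the result for words, ranked trees, and unranked trees uniformly (as the statement requires), I would first carry out the construction on words — where a branch of the proof tree and the input structure coincide most cleanly — and then observe that words embed into ranked and unranked trees, so the same $\Ppr_M$ (or a straightforward adaptation using only the leftmost branch) witnesses undecidability in all three settings. By Proposition~\ref{prop:bound_ucq}'s failure in the presence of $\desc$ (illustrated by the preceding example), I do not conflate boundedness with equivalence to a UCQ here; the reduction targets boundedness directly.
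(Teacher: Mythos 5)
Your overall architecture (encode configurations of a machine along a path, use $\sim$ to name rows/columns and $\desc$ to reach the corresponding cell of the next configuration, reduce from the existence of arbitrarily long halting runs rather than from halting itself) matches the paper. But there is a genuine gap in your mechanism for handling incorrect encodings. You propose to rule out ``cheating'' by making the offending proof branches \emph{unsatisfiable} (forcing two incompatible labels on one node). That only excludes local inconsistencies that the branch itself touches. A monadic datalog program is positive: it can assert that certain patterns are present but never that patterns are absent, so your branch cannot enforce global well-formedness conditions such as ``no row identifier is reused in two different configurations,'' ``no column number occurs twice in a row,'' or ``each tape cell has a unique successor copy.'' On a database that violates these conditions the positive program can still find a long satisfiable branch that does not correspond to any real run of $M$, and such junk databases exist (with arbitrarily long forced derivations) no matter what $M$ does. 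This kills the bounded direction: even when $M$ has only short halting runs, your $\Ppr_M$ would remain unbounded.

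The paper's proof closes exactly this hole with an idea your proposal lacks: it builds a second, \emph{error-detecting} program $Q$, essentially a UCQ, and takes the goal to be the disjunction of the run-traversing program $P$ and $Q$. On any database containing an encoding error the goal is derived by $Q$ in a constant number of steps, so erroneous databases never witness unboundedness; only correct encodings force long derivations, and these have unbounded length iff $M$ has arbitrarily long halting runs. (Note that boundedness only requires \emph{some} short derivation to exist, which is why adding a fast disjunct suffices.) Two further points you should not wave away: the encoding must be placed going \emph{upwards} toward the root, since parents are unique while children are not, which is what forces $P$ to follow a single path on trees; and the ``bounded'' direction must be verified over all trees, not just words, so ``words embed into trees'' handles only the unbounded direction --- the error-detecting disjunct must work on arbitrary tree-shaped databases as well.
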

%The proof via a reduction from the halting problem for Turing machines is provided in Appendix~\ref{app:undecidability}.
%Because of this result in the following we work only with the fragments of datalog without the $\desc$ relation. For decidability results we use the automaton-theoretic approach of~\cite{CosmadakisGKV88}.
In the following we work with fragments of datalog without the $\desc$ relation. For decidability results we use the automaton-theoretic approach of~\cite{CosmadakisGKV88}.

%For decidability results we use the automaton-theoretic approach of~\cite{CosmadakisGKV88}. We first look at the case of words, where many ideas can be illustrated without much of the technical difficulty of trees. 

\begin{theorem}\label{thm:boundedness:words:cl}
The boundedness problem for $\datasucc$ over words is in \PSpace{}.
\end{theorem}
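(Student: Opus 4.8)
The plan is to adapt the automaton-theoretic boundedness test of Cosmadakis \textit{et al.}~\cite{CosmadakisGKV88} to the setting of connected monadic $\datasucc$ programs over words. First I would exploit the fact that for a linear or non-linear connected monadic program using only the child relation $\succ$ over words, each intensional predicate can only propagate information along the single axis connecting a node to its unique child (or parent, when a rule uses $Y \succ X$). The key object I would build is a nondeterministic finite automaton $\pp$ whose states encode the set of intensional predicates that hold at a position, reading the word and updating this set according to the rules; because the program is monadic and connected over words, the derivable set of intensional facts at each position is determined by a bounded-memory left-to-right (and, if needed, two-way) sweep, so such an automaton exists and has size exponential in $|{\cal P}|$ (one state per subset of $\Sigma_{\cal P}$-decorated predicate configuration).

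The crucial step is to capture \emph{boundedness} as a property of proof-tree \emph{height} rather than of the facts themselves. Following the idea of~\cite{CosmadakisGKV88}, I would construct a second automaton $\ppp$ that, in addition to tracking which goal facts are derivable, tracks the \emph{minimal proof-tree height} needed to derive them, but only up to some threshold; the program is bounded iff there is a uniform bound $n$ such that every derivable goal fact admits a proof tree of height at most $n$ on every input. By Proposition~\ref{prop:bound_ucq}, boundedness of ${\cal P} \in \datasucc$ is equivalent to ${\cal P} \equiv {\cal Q}$ for some ${\cal Q} \in \ucqsucc$, and on words a UCQ has proof trees of bounded height; so the test reduces to checking whether the ``height demand'' of $\pp$ can grow without bound as the input word grows. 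This unboundedness of height can be phrased as the existence of arbitrarily long words forcing arbitrarily deep recursion, which in automaton terms becomes the existence of a suitable cycle: a loop in $\pp$ that strictly increases the required proof-tree height each time it is traversed.

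Concretely, I would reduce boundedness to a reachability-plus-cycle question on the product of $\pp$ with a ``height-growing'' gadget. The program is \emph{un}bounded iff there is a reachable configuration from which one can pump a segment of the word so that each repetition of the pumped segment strictly increases the minimal height of a proof tree for the goal fact at the distinguished position. Detecting the existence of such a strictly-height-increasing loop is a graph-search problem on the exponential-size state space of $\pp$; since reachability and cycle detection can be done in nondeterministic space logarithmic in the size of that state space, and the state space is of exponential size in $|{\cal P}|$, the whole procedure runs in \NLogSpace{} of an exponential object, hence in \PSpace{} by guessing the relevant states and transitions on the fly without materializing the full automaton. This ``on-the-fly'' guessing, which keeps only a constant number of subset-states in memory at any time, is what brings the complexity down from the naive \ExpTime{} to \PSpace{}.

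The main obstacle I expect is the correct definition of the height-tracking component and the proof that a strictly-height-increasing loop exists \emph{if and only if} the program is unbounded. The subtlety is that a longer word, or a repeated infix, need not increase proof-tree height monotonically: non-linear rules can merge several subproofs, so a longer derivation path does not automatically mean a taller proof tree, and conversely different positions may require incomparable proof-tree shapes. I would handle this by carefully distinguishing, within each automaton state, the predicates derivable with a ``short'' proof from those that genuinely require traversing the pumped loop, and by arguing (via a pigeonhole/pumping argument on the exponential state space) that if height exceeds the number of states then a strictly-increasing loop must occur. Making this pumping argument sound for connected but possibly non-linear programs — ensuring that the growth witnessed by the automaton faithfully reflects an unavoidable growth in recursion depth of the \emph{actual} program, in particular respecting the single-label-per-node constraint that makes some proof trees unsatisfiable — is the delicate part of the argument.
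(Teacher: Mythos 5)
There is a genuine gap in your proposal, and it sits exactly where you locate the ``delicate part.'' Your plan is to make the automaton track the minimal proof-tree height (up to a threshold) and to detect unboundedness as the existence of a strictly-height-increasing loop, justified by a pigeonhole argument (``if height exceeds the number of states then a strictly increasing loop must occur''). That reduction is essentially the limitedness problem for a distance/cost automaton, and the naive pigeonhole step is not sound: a run whose cost exceeds the number of states need not contain a loop whose every traversal strictly increases the cost, which is precisely why limitedness requires the heavy machinery of Hashiguchi/Leung/Kirsten rather than ordinary pumping. The difficulty is compounded for non-linear programs, where the height of a proof tree is a \emph{maximum} over branches and therefore does not accumulate additively when an infix of the word is repeated. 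You flag this issue but do not supply the argument that would close it, and without that argument the correctness of the whole test is open.

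The paper avoids height tracking entirely by first proving a locality characterization (Lemma~\ref{lem:bounded}): a $\datasucc$ program ${\cal P}$ is bounded iff there is an $n$ such that whenever $X\in{\cal P}(w)$, already $X\in{\cal P}(v)$ for the $n$-neighbourhood $v$ of $X$. This turns unboundedness into the existence of arbitrarily large \emph{$n$-witnesses} --- words where ${\cal P}$ holds at the marked position globally but fails on the $n$-neighbourhood --- which is a property of ordinary regular languages. One then takes the automaton ${\cal A}_{\cal P}$ for ``${\cal P}$ holds at the marked position,'' its complement-style automaton ${\cal N}_{\cal P}$ for ``${\cal P}$ fails at the marked position,'' and an automaton ${\cal B}_{\cal P}$ accepting words in $L({\cal A}_{\cal P})$ with a long infix in $L({\cal N}_{\cal P})$; since $L({\cal N}_{\cal P})$ is closed under infixes containing the marked position, a standard state-repetition pumping argument on the product part shows that one $N$-witness (for $N$ the number of product states plus one) yields $n$-witnesses for all $n$, and the whole test is an \NLogSpace{} reachability-and-counting procedure on an exponential-size automaton generated on the fly, hence \PSpace{}. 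To repair your proof you would need either to establish this locality lemma (or an equivalent device that eliminates the need to measure proof-tree height inside the automaton), or to carry out a genuine limitedness argument, which is a substantially harder route.
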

In the case of trees the same technique can be applied but the complexity increases (see Appendix~\ref{app:ranked}).
\begin{theorem}
 \label{thm:boundedness:cl}
The boundedness problem for $\datasucc$ over ranked trees is in  \TwoExpTime{}. 
%For $\cldatasucc$ on ranked trees it is in \TwoExpTime{}. 
\end{theorem}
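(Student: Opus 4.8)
The plan is to follow the automaton-theoretic route of~\cite{CosmadakisGKV88} that underlies the word case (Theorem~\ref{thm:boundedness:words:cl}), and to account for the extra exponential incurred when passing from word automata to tree automata. By Proposition~\ref{prop:bound_ucq}, a program $\mathcal P\in\datasucc$ is bounded iff it is equivalent to some $\ucqsucc$, and since $\mathcal P^{i}(D)\subseteq\mathcal P^{i+1}(D)$ always holds, boundedness is equivalent to the existence of an $n$ with $\mathcal P\subseteq\mathcal P^{n}$. Because $\mathcal P$ is connected, monadic and $\desc$-free, every satisfiable proof tree $t$ for the goal ``lives'' on a ranked tree: unfolding its rules pins each intensional atom to a node reachable from the distinguished node by parent/child steps, so $t$ determines a canonical database $D_t$ (a ranked tree over the finite alphabet $\Sigma_{\mathcal P}$ together with a fresh symbol for unconstrained labels) with a distinguished node $X_t$ at which the goal holds. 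The first step is therefore to recast boundedness as a finiteness property of the set of satisfiable proof trees that are \emph{height-minimal}, i.e.\ whose goal fact cannot already be derived on $D_t$ by a strictly shorter proof tree; $\mathcal P$ is bounded iff this set is finite.

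Next I would realize the relevant sets as regular tree languages. Satisfiability of a proof tree is a local condition over ranked trees --- a unique label per node, a unique parent, consistent child directions --- so the satisfiable proof trees form a regular tree language recognized by a tree automaton $\mathcal A$ of size polynomial in $\mathcal P$. To capture derivability I would reuse the automata that simulate $\datasucc$ programs on ranked trees from the proof of Theorem~\ref{thm:ranked_trees}: running such an automaton on the canonical database $D_t$ detects whether the goal holds at $X_t$, and a refinement of it tracks the minimal height needed for that derivation. The height-minimal proof trees are then obtained by intersecting $\mathcal A$ with the complement of the language of proof trees whose canonical database admits a strictly shorter derivation of the goal at the distinguished node.

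Finally, boundedness is decided by testing this intersection language for finiteness: a regular tree language is finite iff its automaton has no loop that is both reachable from an initial state and co-reachable to an accepting state, which is checkable in time polynomial in the size of the automaton. The main obstacle is the complexity, and this is exactly where the word and tree cases diverge. Over words NFAs suffice and nondeterminism keeps us in \PSpace{} without ever materializing a deterministic machine; over ranked trees the covering/redundancy test forces us to complement a nondeterministic tree automaton, for which the only general tool is an exponential determinization. Composed with the (already exponential) automaton that simulates $\mathcal P$, this produces a tree automaton of doubly-exponential size, on which the finiteness check runs in polynomial time, placing the whole procedure in \TwoExpTime{}. The delicate points I expect to spend the most effort on are verifying that height-minimality is genuinely a regular property (so that the complementation step is legitimate) and confirming that the canonical-database encoding loses no model, so that $\mathcal P\subseteq\mathcal P^{n}$ checked on canonical databases transfers to all databases $D$.
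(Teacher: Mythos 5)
There is a genuine gap at the heart of your plan: you never establish that height-minimality is a regular property of proof trees, and this is not a detail you can defer --- it is the whole difficulty. To accept exactly the height-minimal satisfiable proof trees, your automaton, while reading a proof tree $t$, would have to verify that the canonical database $D_t$ admits \emph{no} derivation of the goal at $X_t$ of height strictly less than $h(t)$. That requires comparing the (unbounded) height of the input with the (unbounded) minimal derivation height over $D_t$; a finite-state device can store neither quantity, and you offer no bounded abstraction that would make the comparison possible, so the complementation-and-intersect step has nothing to complement. The paper avoids this comparison entirely by switching from a temporal to a spatial characterization: by Lemma~\ref{lem:trees:bounded} (the tree analogue of Lemma~\ref{lem:bounded}), $\Ppr$ is unbounded iff for arbitrarily large $n$ there exist a tree $t$ and a node $X$ with $X\in\Ppr(t)$ but $X\notin\Ppr(t')$, where $t'$ is the $n$-neighbourhood of $X$. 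The automata then run on the \emph{database} trees over a finite alphabet, not on proof trees: a doubly-exponential bottom-up automaton $\Aa_{\Ppr}$ for ``$\Ppr$ holds at the marked node'' and an automaton ${\cal N}_{\Ppr}$ for its complement are combined into a witness automaton $W_{\Ppr}$ that guesses the boundary of the neighbourhood, and a pumping argument shows that a single $(2N+2)$-witness (for $N$ the number of states of $W_{\Ppr}$) already yields $n$-witnesses for all $n$; one then decides emptiness of a counter-augmented version of $W_{\Ppr}$ rather than finiteness of a language of proof trees. If you insist on the proof-tree viewpoint you would first have to relate minimal derivation height to the radius of the portion of $D_t$ actually used around $X_t$ (for connected, $\desc$-free programs on ranked trees these are within an exponential of each other), which is exactly the content of Lemma~\ref{lem:trees:bounded} --- at which point you have reconstructed the paper's argument.

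Two smaller problems. Replacing all unconstrained labels of $D_t$ by a \emph{single} fresh symbol is unsound in the presence of $\sim$: it creates spurious equalities and hence possibly spurious shorter derivations; the paper's finite-alphabet reduction needs $R^{|\Ppr|}$ fresh labels and a relabelling that keeps reused labels at distance larger than the rule size, and the canonical model uses distinct fresh labels preserving only the equalities forced by $\sim$. Also, the canonical database of a proof tree need not respect the rank bound $R$, so ``lives on a ranked tree'' itself requires an argument (merging siblings, as in the universal-tree construction) that you do not supply.
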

Over words, the relations $\succ$ and $\desc$ are interpreted as the ``next position'' and the ``following position''.
Let $X$ be a position in a word $w$. The $n$-\emph{neighbourhood} of $X$ in $w$ is an infix of $w$, which begins on position $\max(1,X-n)$ and ends on position $\min(|w|,X+n)$. The following lemma is motivated by Proposition 3.2 of~\cite{CosmadakisGKV88}. 
Its proof is provided in Appendix~\ref{app:ranked}.
%\begin{lemma}\label{lem:bounded}
%Let $\cal P$ be a $\cldatasucc$ program. Then the following conditions are equivalent: 
%\begin{enumerate}
%\item $\cal P$ is bounded,
%\item there exists $n > 0$ such that for every word $w$ and position $X$ if $X \in {\cal P}(w)$ then $X \in {\cal P}(v)$, where $v$ is the $n$-neighbourhood of $X$ in $w$. \label{prop}
%%\item there exists $n > 0$ such that for every word $w$ and position $X$ we have $X \in {\cal P}(w)$ iff $X \in {\cal P}(v)$, where $v$ is the $n$-neighbourhood of $X$ in $w$. 
%\end{enumerate}
%\end{lemma}
\begin{lemma}\label{lem:bounded}
Let $\cal P$ be a $\datasucc$ program. Then $\cal P$ is bounded iff
there exists $n > 0$ such that for every word $w$ and position $X$ if $X \in {\cal P}(w)$ then $X \in {\cal P}(v)$, where $v$ is the $n$-neighbourhood of $X$ in $w$.
\end{lemma}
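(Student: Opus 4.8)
The plan is to prove the two implications separately, exploiting the correspondence between the two quantities that measure the complexity of a derivation of $X \in {\cal P}(w)$: the \emph{height} of a proof tree, which boundedness controls, and the positional \emph{reach} of a proof tree, i.e.\ how far from $X$ the positions it mentions may lie, which the locality condition controls. The bridge between the two is the observation that the infix $v$ (the $n$-neighbourhood) shares with $w$ all the extensional atoms among its own positions: the labels, the relation $\sim$, and --- crucially --- the relation $\succ$ between positions that both lie in $v$. Consequently a proof tree may be moved freely between $w$ and $v$ as long as it mentions only positions of $v$.

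For the left-to-right implication, suppose $\cal P$ is bounded, say ${\cal P} = {\cal P}^m$, and let $k$ be the maximal size of a rule of $\cal P$. If $X \in {\cal P}(w)$ then $X$ has a proof tree of height at most $m$. Here I would use that ${\cal P} \in \datasucc$: since $\desc$ is absent and every rule is connected, the variables of a single rule are linked by a path of $\succ$-edges, so on a word their positions all lie within distance $k-1$ of the head position. Walking down one level of the proof tree therefore shifts the head by at most $k-1$, and every extensional atom at a node refers to positions within $k-1$ of that node's head; hence along any branch of length at most $m$ every position mentioned in the tree lies within $n := (m+1)k$ of $X$. Taking $v$ to be the $n$-neighbourhood, all these positions belong to $v$, and by the bridge observation the same proof tree witnesses $X \in {\cal P}(v)$.

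For the converse, assume the locality condition holds for some $n$, and let $X \in {\cal P}(w)$. Then $X \in {\cal P}(v)$ for the $n$-neighbourhood $v$, a word of length at most $2n+1$. The key step is to bound the height of a proof tree over this short word independently of $w$. On a word of length $L$ there are at most $c \cdot L$ distinct intensional atoms, where $c$ is the number of intensional predicates of $\cal P$. I would take a proof tree assembled from minimal-height derivations of each atom: then the minimal height $h(\cdot)$ strictly decreases from a node to each of its children, so $h$ is strictly decreasing along every root-to-leaf branch; in particular the intensional atoms along a branch are pairwise distinct, and the height is at most $c \cdot (2n+1) =: m$. Transplanting this height-$m$ tree back from $v$ into $w$ --- again legitimate because it mentions only positions of $v$ --- yields $X \in {\cal P}^m(w)$. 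As $m$ depends only on $\cal P$ and the fixed $n$, the program is bounded.

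I expect the main obstacle to be this right-to-left direction, specifically the step that converts a bound on word length into a bound on proof-tree height: one must show that on a fixed-length word every derivable fact admits a short proof, which is exactly the no-repetition argument above and is where monadicity (unary intensional predicates, hence only $c \cdot L$ atoms) is used essentially. A secondary point requiring care is the transplantation across the infix: one must verify that a proof tree valid in the shorter word stays valid in the longer one, the only delicate case being a $\succ$-atom incident to a position on the boundary of $v$ --- but such an atom would require a neighbour lying outside $v$, so it simply cannot occur in a tree confined to $v$, and validity transfers in both directions.
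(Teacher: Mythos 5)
Your proof is correct, but it takes a more direct route than the paper's. The paper proves both implications by passing through the characterization ``$\cal P$ is bounded iff it is equivalent to a UCQ in $\ucqsucc$'' (Proposition~\ref{prop:bound_ucq}): for the forward direction it takes $n$ to be the size of the largest CQ in the equivalent union and uses connectedness of that CQ; for the converse it explicitly builds an equivalent UCQ out of the equality types of all words of length at most $2n+1$ that satisfy $\cal P$, and then invokes Proposition~\ref{prop:bound_ucq} (whose proof hides the termination argument inside ``evaluate $\cal P$ on finitely many canonical trees and take the maximum number of rule applications''). You instead argue entirely at the level of proof trees: connectedness and the absence of $\desc$ confine a height-$m$ proof tree to an $O(mk)$-neighbourhood of $X$, and monadicity gives the explicit bound $c\cdot(2n+1)$ on the height of a minimal derivation over a short word via the no-repeated-atom-along-a-branch argument. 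What your approach buys is self-containedness (no appeal to Proposition~\ref{prop:bound_ucq} or to canonical models) and an explicit, elementary bound on the recursion depth as a function of $n$ and $|\cal P|$, which is in the spirit of the ``computable bound'' discussion of Section~\ref{sec:boundedness_vs_equivalence}; what the paper's route buys is that the same equality-type machinery generalizes directly to the ranked-tree version (Lemma~\ref{lem:trees:bounded}). Your identification of the delicate points --- the transplantation of proof trees across the infix boundary and the essential use of monadicity --- is accurate, and both are handled correctly.
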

\begin{proof}[of Theorem~\ref{thm:boundedness:words:cl}]
A word $w$ such that for some position $X$ in $w$ we have $X \in {\cal P}(w)$ but $X \not \in {\cal P}(v)$, where $v$ is the $n$-neighbourhood of $X$ in $w$ will be called an $n$-\emph{witness}. By Lemma~\ref{lem:bounded} a $\datasucc$ program $\cal P$ is unbounded iff there exist $n$-witnesses for arbitrarily big $n > 0$.

Consider a $\datasucc$ program $\cal{P}$. Let $\Sigma_0$ be an alphabet that contains the set of labels used explicitly in the rules of $\cal{P}$ together with $N$ ``fresh'' labels, where $N$ is the size of the biggest rule in $\cal{P}$. It is known~\cite{FFA} (and easy to verify) that any word $w$ can be relabeled so that the obtained word $w'$ uses only labels from $\Sigma_0$, and for each position $X$ we have that $X \in {\cal P}(w)$ iff $X \in {\cal P}(w')$. This is also true with respect to infixes, i.e., for every infix $v$ of $w$, and every position $X$ it holds that $X \in {\cal P}(v)$ iff $X \in {\cal P}(v')$, where $v'$ is the corresponding infix of $w'$. Hence, we can verify the existence of $n$-witnesses
%boundedness of $\cal P$ 
over the finite alphabet $\Sigma_0$.

In the proof of Theorem~\ref{thm:ranked_trees} (see Appendix~\ref{subs:words}) a nondeterministic automaton is introduced that recognizes words over the alphabet $\Sigma_0$ satisfying $\cal{P}$. More precisely, the constructed automaton $\cal{A_P}$ works over the alphabet $\Sigma_0 \times \{0,1\}$, and accepts a word $w$ iff it has exactly one position $X$ marked with $1$ such that $X \in {\cal P}(w)$. We denote the language recognized by $\cal{A_P}$ by $L(\cal{A_P})$. The size of this automaton is exponential in the size of $\cal{P}$. 
%Additionally, in the proof of Proposition 2 of \cite{FFA} a deterministic automaton was introduced that recognizes words over the alphabet $\Sigma_0$ satisfying $\cal{P}$. A slight modification of the given argument provides us with an automaton $\cal{A_P}$ over the alphabet $\Sigma_0 \times \{0,1\}$, such that the recognized words have exactly one position marked with $1$. A word $w$ is accepted by $\cal{A_P}$ if it has some position $X$ marked with 1 and $X \in {\cal P}(w)$. We denote the language recognized by $\cal{A_P}$ by $L(\cal{A_P})$. The size of this automaton is exponential in the size of $\cal{P}$. 
%

Similarly, we obtain an automaton $\cal N_P$ recognizing these words over the alphabet $\Sigma_0 \times \{0,1\}$ which have exactly one position marked with $1$ but do not belong to $L(\cal{A_P})$. The size of $\cal N_P$ is also exponential in the size of $\cal{P}$ (there is no exponential blow up because the constructions in Appendix~\ref{subs:words} go through alternating automata) and the language it recognizes will be denoted $L(\cal N_P)$. Note that this language is closed under infixes containing the marked position.

We define a nondeterministic automaton $\cal B_P$ which accepts exactly those words belonging to $L(\cal{A_P})$ which have an infix that belongs to $L(\cal{N_P})$. The states and transitions of $\cal B_P$ are the states and transitions of the product automaton $\cal A_P \times \cal N_P$ together with the states and transitions of two copies of the automaton $\cal A_P$ denoted $\pp$ and $\ppp$. Let $q_{init}$ be the initial state of $\cal N_P$. For each state $q$ of $\pp$ we add to $\cal B_P$ an epsilon transition from the state $q$ to the state $(q,q_{init})$ of the product automaton. Now, let $F$ be the set of final states of $\cal N_P$. For each state $q$ of $\ppp$ and each $q_{fin} \in F$ we add to $\cal B_P$ an epsilon transition from the state $(q,q_{fin})$ to $q$. 
%Additionally, we remove from $\cal A_P$ the transitions with labels from $\Sigma_0 \times \{1\}$. 
The initial state of $\cal B_P$ is the initial state of $\pp$ and the final states of $\cal B_P$ are the final states of $\ppp$. 
Hence, an accepting run of the automaton $\cal B_P$ starts in $\pp$, moves to the product automaton at some point, reads an infix that belongs to $L(\cal{N_P})$ and finally goes to $\ppp$ to accept.

Let $N$ be the number of states of the product automaton $\cal A_P \times \cal N_P$ plus $1$. Suppose that $\cal B_P$ accepts an $N$-witness $w$. Then, due to the pumping lemma, it accepts $n$-witnesses for arbitrarily big $n > 0$. To end the proof show that checking whether $\cal B_P$ accepts some $N$-witness is in \NLogSpace{} in the size of the automata $\cal A_P$ and $\cal N_P$ (i.e., in \PSpace{} in the size of $\cal P$).

An $N$-witness is a word that belongs to $L(\cal{A_P})$ but the $N$-neighbourhood of the position marked with $1$ belongs to $L(\cal{N_P})$. The \NLogSpace{} algorithm simulates a run of the automaton $\cal B_P$. The size of $\cal B_P$ is exponential in the size of ${\cal P}$ but its states and transitions
can be generated on the fly in polynomial space. The algorithm guesses a state from the $\cal A_P \times \cal N_P$ part and checks if it is reachable from the initial state. This is a simple reachability test which is in \NLogSpace{}. Then it guesses some run of the $\cal A_P \times \cal N_P$ part, counts the number of transitions done before the one marked with $1$, and ensures that it is at least~$N$. After the transition marked with $1$ it ensures that the automaton makes at least $N$ more transitions before leaving the $\cal A_P \times \cal N_P$ part. For both of these counting procedures we need $\log(N)$ tape cells. Finally, the algorithm performs a second reachability test to check if the automaton can reach a final state.

%Then the algorithm proceeds with simulating the run to end with an accepting state. This is a simple reachability test, which is in \NLogSpace{}.
% After the second epsilon transition the algorithm only needs to check if the final state is reachable which also can be done in logarithmic space.

%\fm{Ja bym napisał że zgadujemy stan w tej produktowej czesci i robimy reachability test do niego ze stanu poczatkowego. Potem te przejscia z licznikiem, a potem znowu reachability test do stanu koncowego (nie wydzielałbym momentu z epsilon przejsciem bo to i tak musi sie stac). Troche zmienilem zdanie koncowe. Poza tym trzeba dopisać standardową formułkę: ``The size of tu nazwa automatu is exponential in size of ${\cal P}$, but its  states and transitions can be generated on the fly in polynomial space.'' No bo oczywiscie nie mozemy tych automatow stworzyc w calosci bo sa za duze}
There are three possible ways of how an $N$-witness $v$ may look like. For simplicity, the algorithm described above does not deal with the case when the $N$-neighbourhood that belongs to $L({\cal N_P})$ is shorter then $2N+1$ (which can happen if it begins at the first position of $w$ or ends at the last position of $w$). Those possibilities can be verified similarly. \qed
\end{proof}
Notice that if $\Ppr$ is bounded then $N$ from the proof above is the bound on the depth of recursion. Since the size of the constructed automaton is exponential in the size of the program $\Ppr$, the UCQ which is equivalent to this program consists of proof trees of size at most exponential in the size of $\Ppr$.

\section{Boundedness vs equivalence}
\label{sec:boundedness_vs_equivalence}
In this section we focus on the similarities between the boundedness and the equivalence problem for datalog programs. 
In Sections \ref{sec:equivalence} and \ref{sec:boundedness} those problems are treated separately but with similar techniques. 
Also in \cite{CosmadakisGKV88}, where boundedness and equivalence are considered for monadic programs on arbitrary structures, 
both problems are solved using the same automata-theoretic construction. For these reasons we investigate the connection 
between the two problems in more detail. In contrast to the previous sections, in this section the structures under consideration are not necessarily trees or words.
\begin{definition}
A class $\cal C$ of datalog programs over a fixed class of databases is called {\bf well-behaved} if:
\begin{enumerate}
\item for every program $\cal P \in \cal C$ all the UCQs corresponding to the proof trees for $\cal P$ belong to $\cal C$,
\item containment of a UCQ in a datalog program is decidable for $\cal C$.
\end{enumerate}
\end{definition}
Condition (1) is satisfied for most natural classes of programs. In particular by the class of all datalog programs on arbitrary structures and the class $\datasucc$ on trees. For the class of datalog programs on arbitrary structures Condition (2) is also known to hold true (see \cite{ChandraLM81,CosmadakisK86,Sagiv88}). Lemma~\ref{lemma:ucq_in_datalog} shows that the class $\datasucc$ on trees satisfies Condition (2). Hence both those classes are well-behaved.
% We sketch the proof in Appendix~\ref{}.

% \fm{tu ta uwaga ze to nie dziala np jak mamy nierownosc}

% \fm{Nexptime na NP/coNP i w rozdziale o equivalence dopisac ze z tego i z containment wynika ograniczenie na equivalence}

We say that ${\cal C}$ has a {\bf computable bound}
if there exists a computable function $f$ such that if a datalog program ${\cal P}$ in ${\cal C}$ is bounded and $f({\cal P}) = n$ then ${\cal P}(D) = {\cal P}^n(D)$ for any database $D$, i.e., for bounded programs the function $f$ returns a bound on the depth of recursion. 
For programs which are not bounded $f$ returns some arbitrary natural numbers.
\begin{example}
\label{ex:full_bound}
Consider the full datalog. It follows from the results of~\cite{CosmadakisGKV88} that the class of monadic datalog programs on arbitrary structures has a computable bound. It is not stated explicitly but a closer analysis of the proofs gives that for a bounded program $\Ppr$ the depth of recursion can be bounded polynomially in the size of the automaton constructed to check if $\Ppr$ is bounded. For example, for a linear connected program the size of such an automaton is bounded exponentially in the size of the program.
\end{example}
The following theorem for a well-behaved class ${\cal C}$ with a computable bound establishes a connection between the problems of boundedness and equivalence to a given UCQ.
\begin{theorem}
\label{thm:boundedness:meta}
For any well-behaved class ${\cal C}$ with a computable bound the following conditions are equivalent: 
\begin{enumerate}
\item boundedness is decidable,
\item it is decidable whether two programs are equivalent, given that one of them is a UCQ.
\end{enumerate}
\end{theorem}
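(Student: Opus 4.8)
The plan is to establish both implications through a single characterization of boundedness in terms of the explicit finite approximations ${\cal P}^n$: the computable bound produces a candidate UCQ equivalent to a bounded program, and well-behavedness lets one compare that UCQ with any given UCQ.

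First I would record the basic observation underlying everything. For a bounded program ${\cal P} \in {\cal C}$, the UCQ ${\cal P}^{f({\cal P})}$ is equivalent to ${\cal P}$: by the defining property of the computable bound, $n = f({\cal P})$ satisfies ${\cal P}(D) = {\cal P}^n(D)$ for every database $D$, and by well-behavedness condition~(1) the finitely many CQs arising from proof trees of height at most $n$ all belong to ${\cal C}$, so ${\cal P}^n$ is a genuine UCQ in the class. Conversely, if ${\cal P}$ is not bounded then for the fixed number $n = f({\cal P})$ there is a database witnessing ${\cal P}(D) \neq {\cal P}^n(D)$, hence ${\cal P} \not\equiv {\cal P}^n$. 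This yields
\[
{\cal P} \text{ is bounded} \iff {\cal P} \equiv {\cal P}^{f({\cal P})}.
\]
For the implication (2) $\Rightarrow$ (1) this is already a decision procedure: given ${\cal P}$, compute the UCQ ${\cal P}^{f({\cal P})}$ and invoke the oracle of~(2) to test ${\cal P} \equiv {\cal P}^{f({\cal P})}$; the answer decides boundedness, and the oracle is applied to an instance one side of which is a UCQ, as required.

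For the implication (1) $\Rightarrow$ (2) I would decide ${\cal P} \equiv {\cal Q}$ for a datalog program ${\cal P} \in {\cal C}$ and a UCQ ${\cal Q}$ through the characterization
\[
{\cal P} \equiv {\cal Q} \iff {\cal P} \text{ is bounded and } {\cal P}^{f({\cal P})} \equiv {\cal Q}.
\]
The right-hand side is decidable: boundedness is decidable by~(1), and once the UCQ ${\cal P}^{f({\cal P})}$ is formed, testing ${\cal P}^{f({\cal P})} \equiv {\cal Q}$ amounts to two containments between UCQs, each of which is a containment of a UCQ in a (nonrecursive) datalog program and hence decidable by well-behavedness condition~(2). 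The direction ``$\Leftarrow$'' is immediate from the first observation, since ${\cal P} \equiv {\cal P}^{f({\cal P})}$ when ${\cal P}$ is bounded.

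The main obstacle is the direction ``$\Rightarrow$'': one must show that a program in ${\cal C}$ equivalent to a UCQ is necessarily bounded. This is exactly the point that fails for general classes (cf.\ the descendant example preceding Theorem~\ref{thm:boundedness:undec}), so it must genuinely use well-behavedness and cannot be a formal manipulation. The argument I would give mirrors Proposition~\ref{prop:bound_ucq}: if ${\cal P} \equiv {\cal Q}$ then ${\cal Q} \subseteq {\cal P} = \bigcup_{i} {\cal P}^i$, and because the containment of the \emph{fixed finite} UCQ ${\cal Q}$ in ${\cal P}$ is witnessed at bounded proof-tree depth — the canonical-database/compactness phenomenon underlying the decision procedure for condition~(2) — there is a single $m$ with ${\cal Q} \subseteq {\cal P}^m$; combined with ${\cal P}^m \subseteq {\cal P} \equiv {\cal Q}$ this gives ${\cal P} \equiv {\cal P}^m$, so ${\cal P}$ is bounded. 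Making this last step precise, namely extracting a uniform depth bound $m$ from the containment ${\cal Q} \subseteq {\cal P}$, is where the specific structure of the well-behaved class (canonical databases in the intended instances: full datalog on arbitrary structures and $\datasucc$ on trees) is needed, and is the part I expect to require the most care.
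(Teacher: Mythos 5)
Your direction (2) $\Rightarrow$ (1) coincides with the paper's: compute ${\cal P}^{f({\cal P})}$ and feed the equivalence test to the oracle. The problem is in (1) $\Rightarrow$ (2). Your proposed characterization ${\cal P} \equiv {\cal Q} \iff {\cal P} \text{ bounded and } {\cal P}^{f({\cal P})} \equiv {\cal Q}$ rests on the implication ``${\cal P}$ equivalent to a UCQ implies ${\cal P}$ bounded,'' and this does \emph{not} follow from the hypotheses of the theorem. Well-behavedness only asserts closure of the proof-tree UCQs under membership in ${\cal C}$ and decidability of UCQ-in-datalog containment; it says nothing about canonical databases or about containments ${\cal Q} \subseteq {\cal P}$ being witnessed at uniformly bounded proof-tree depth. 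You correctly identify this as the delicate point and gesture at the argument of Proposition~\ref{prop:bound_ucq}, but that argument needs every CQ of ${\cal Q}$ to have finitely many canonical models on which ${\cal P}$ can be evaluated — a property of specific classes like $\datasucc$ on trees, not of well-behaved classes in general (the $\dataall$ example in Section~\ref{sec:boundedness} is exactly a program equivalent to a UCQ yet unbounded). So as written, your argument proves the theorem only for classes admitting canonical databases, which is strictly weaker than the statement.

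The paper sidesteps this entirely. To decide ${\cal P} \subseteq {\cal Q}$ when ${\cal P}$ is unbounded, it forms the union program ${\cal R}$ with goal ${\cal R}(X) \leftarrow {\cal P}(X)$, ${\cal R}(X) \leftarrow {\cal Q}(X)$ and tests boundedness of ${\cal R}$ rather than of ${\cal P}$. Since the ${\cal Q}$-atoms are derived in a constant number of steps, unboundedness of ${\cal R}$ yields, for arbitrarily large $i$, nodes in ${\cal R}(D) \setminus {\cal R}^i(D)$ which must satisfy ${\cal P}$ but not ${\cal Q}$, refuting ${\cal P} \subseteq {\cal Q}$; if ${\cal R}$ is bounded, the computable bound turns it into a UCQ ${\cal R}'$ and ${\cal P} \subseteq {\cal Q}$ reduces to ${\cal R}' \equiv {\cal Q}$, decidable by condition (2). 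This union trick is the missing idea: it replaces the unprovable ``equivalent to a UCQ $\Rightarrow$ bounded'' with the observation ${\cal P} \subseteq {\cal Q} \iff {\cal P} \vee {\cal Q} \equiv {\cal Q}$, already used at the start of Section~\ref{sec:equivalence}. You would need to incorporate it (or an equivalent device) to close the gap.
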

\begin{proof}
Let $f$ be the function from the definition of the computable bound. For the implication from (1) to (2), take programs $\cal P$ and $\cal Q$ which belong to $\cal C$ and assume that $\cal Q$ is a UCQ. Since $\cal C$ is well-behaved, we only need to show how to decide whether $\cal P$ is contained in $\cal Q$. It follows from the assumption that we can verify if $\cal P$ is bounded. If this is the case, then let $f({\cal P}) = n$. Observe that $\cal P$ is equivalent to the UCQ $\cal P'$ that corresponds to the proof trees for $\cal P$ of height at most $n$. It remains to decide whether the UCQ $\cal P'$ is contained in $\cal Q$.

Suppose now that $\cal P$ is not bounded and consider a union $\cal R$ of the programs $\cal P$ and $\cal Q$. More formally, let $\cal R$ be a program containing the rules of both programs $\cal P$ and $\cal Q$. If the predicate ${\cal Q}$ occurs in the program ${\cal P}$ we rename it so that the predicates do not coincide. The goal predicate $\cal R$ holds for $X$ iff we have ${\cal P}(X)$ or ${\cal Q}(X)$. For this we introduce two additional rules ${\cal R}(X) \leftarrow {\cal P}(X)$ and ${\cal R}(X) \leftarrow {\cal Q}(X)$. The atoms ${\cal Q}(X)$ are all inferred in one step. Therefore, if $\cal R$ is unbounded then there exists $X$ satisfying ${\cal P}(X)$ such that ${\cal Q}(X)$ does not hold, and hence $\cal P$ is not contained in $\cal Q$. If $\cal R$ is bounded then using $f$ we construct an equivalent UCQ $\cal R'$ and check whether it is equivalent to $\cal Q$. If this is the case then $\cal P$ is contained in $\cal Q$. Otherwise it is not.

%If $\cal R$ is unbounded then obviously $\cal P$ is not contained in $\cal Q$. If $\cal R$ is bounded then using $f$ we construct an equivalent UCQ $\cal R'$ and check whether it is equivalent to $\cal Q$. If this is the case, then $\cal P$ is contained in $\cal Q$. Otherwise it is not.

For the other implication, consider a datalog program $\cal P \in C$ and let $f({\cal P}) = n$. Then $\cal P$ is bounded iff ${\cal P}(D) = {\cal P}^n(D)$ for any database $D$. Let $\cal Q$ be the UCQ that corresponds to the proof trees of $\cal P$ of height at most $n$. It suffices to decide whether the programs $\cal P$ and $\cal Q$ are equivalent. But this is decidable from the assumption that ${\cal C}$ is well-behaved. \qed
\end{proof}
While assuming that a class of programs is well-behaved is natural, the existence of a computable bound is a strong assumption. It is needed since an algorithm that solves the boundedness problem might not be constructive, meaning that we do not know how big the equivalent UCQ is. However, deciding if such a function exists is usually as hard as solving the boundedness problem. From Example~\ref{ex:full_bound} we know that for monadic programs on arbitrary structures there exist constructive algorithms for the boundedness problem, and hence we have a computable bound.
On the other hand, the undecidability results of the boundedness problem for datalog on arbitrary structures rely heavily on the fact that such a computable bound does not exist. In~\cite{GaifmanMSV93,HillebrandKMV95} the authors present reductions from the halting problem for 2-counter machines and Turing machines. If a datalog program is bounded then the size of the equivalent UCQ corresponds to the length of an accepting run of these machines, which of course cannot be bounded by a computable function. The results of our paper are, in this sense, similar: the positive results provide computable bounds whereas the negative results rely on the fact that such a function does not exist.
For these reasons we conjecture that for well-behaved classes of datalog programs
%like full datalog on arbitrary structures or $\datasucc$ on trees, 
the decidability of the boundedness problem is equivalent to the decidability of finding a computable bound. If this conjecture holds true then Theorem~\ref{thm:boundedness:meta} becomes an implication from (1) to (2) because the opposite implication is trivially satisfied.

%\begin{theorem}
%Downward programs on unranked trees correpsond to programs on general structures.
%\end{theorem}

\section{Conclusions}
\label{sec:conclusion}
The equivalence to a given nonrecursive program and the boundedness problem for $\dataall$ are undecidable. To regain decidability we considered programs that do not use the $\desc$ relation.
We showed that equivalence to a given UCQ over ranked trees is decidable, and over unranked trees it is decidable in the case of linear programs.
%For the equivalence to a UCQ of non-linear programs we have a partial result: it is decidable for programs that start evaluating in the root of the database tree.
We also showed the decidability of boundedness on words and ranked trees. 
%For unranked trees we have decidability only for linear programs.
%The equivalence to a UCQ over unranked trees is decidable in case of linear programs.
%For the equivalence to a UCQ of non-linear programs we have a partial result: the equivalence is decidable for programs that start evaluating from the root of the database tree.
In the most general case of non-linear $\datasucc$ programs over unranked trees we do not know if the two problems under consideration are decidable 
and we leave these questions as open problems.
%In the end of Section \ref{sec:equivalence} we introduced a different type of queries, i.e., datalog programs that start evaluating from the root of the database.  It seems that in that setting our techniques could work even for non-linear programs.

We also investigated the connection between the boundedness and the equivalence to a UCQ. 
We showed that these problems are equivalently decidable for classes of programs with a computable bound. 
We suspect, however, that the existence of a computable bound for a class of programs is equivalent to the decidability of the boundedness problem. 
We also leave this as an open problem.

%WYWALIŁBYM TO :P
% Another direction for further research is to consider the class of downward programs introduced in~\cite{FFA}. This is a class of datalog programs which can evaluate only downward in the tree. In~\cite{FFA} the containment problem was investigated for downward programs but in the context of boolean queries.

% \subparagraph*{Acknowledgements}
% 
% I want to thank \dots

%%
%% Bibliography
%%

%% Either use bibtex (recommended), but commented out in this sample

%\bibliography{dummybib}

%% .. or use bibitems explicitely

\bibliography{datalog}

\newpage

\appendix
\section{Definitions}
\subsection{Automata}
Throughout the paper all decidability results use automata constructions. 
We briefly recall the standard automata model for ranked trees here.

A (bottom-up) \textbf{tree automaton} ${\cal A} = \langle \Gamma, Q,
\delta, F \rangle$ on at most $R$-ary trees consists of a finite
alphabet $\Gamma$, 
a finite set of states $Q$, a set of accepting states $F\subseteq Q$,
and transition relation $\delta \subseteq \bigcup_{i=0} ^ n Q \times
\Gamma \times Q^i$. A run on a tree $t$ over $\Gamma$ is a
labeling $\rho$ of $t$ with elements of $Q$ consistent with the
transition relation, i.e., if $v$ has 
children $v_1, v_2, \dots, v_k$ with  $k\leq n$, then $(\rho(v), \lab_t(v), \rho(v_1), \dots,
\rho(v_k) )\in \delta$. In particular, if $v$ is a leaf we have $(q,a)
\in \delta$. Run $\rho$ is accepting if it assigns a state from $F$ to
the root. A tree is accepted by $\Aa$ if it admits an
accepting run. The language recognized by
${\cal A}$, denoted by $L(\Aa)$, is the set of all accepted trees. We
recall that testing emptiness of a tree automaton can be done in
$\PTime$, but complementation involves an exponential blow-up. 
For a special case, when the model is words testing emptiness is in \NLogSpace.

% To work on unranked trees we encode them as binary trees in the usual
% way.  An unranked tree $t$ is turned into a binary tree $t_b$ as
% follows: put an order on children for every node in $t$; copy the root
% from $t$ to $t_b$; for each node $v$ from $t$ and let $v_b$ be its
% copy in $t_b$, as the left child of $v_b$ we put a copy of the
% previous sibling of $v$, and as the right child we put a copy of the
% last child of $v$. Thus we can use the same automata for the encoded
% unranked trees.

As an intermediate automata model, closer to datalog than the
bottom-up automata, we shall use the two-way alternating automata
introduced in \cite{CosmadakisGKV88}.  A \textbf{two-way alternating
automaton} $\Aa = \langle \Gamma, Q, q_I , \delta \rangle$ consists of
an alphabet $\Gamma$, a finite set of states $Q$, an initial state
$q_I\in Q$, and a transition function
\[\delta \colon Q \times \Gamma \to \mathrm{BC}^+\big(Q\times \{ -1,
0, 1\}\big)\] describing actions of automaton $\Aa$ in state $q$ in a
node with label $a$ as a positive boolean combination of atomic
actions of the form $(p,d) \in Q\times \{ -1, 0, 1\}$.

A run  $\rho$ of $\Aa$ over tree $t$
is a tree labelled with pairs $(q,v)$, where $q$ is a state
of $\Aa$ and $v$ is a node of $t$, satisfying the following
conditions: the root of $\rho$ is labelled with the pair consisting of
$q_0$ and the root of
$t$, and 
if a node of $\rho$ with label $(q,v)$ has children with labels
$(q_1,v_1),\dots, (q_n,v_n)$, and $v$ has label $a$ in $t$, then there
exist $d_1, \dots, d_n\in \{ -1, 0, 1\} $ such that: 
\begin{itemize}
\item $v_i$ is a child of $v$ in $t$ for all $i$ such that
  $d_i=1$;
\item $v_i = v$ for all $i$ such that $d_i=0$; 
\item $v_i$ is the parent of $v$ in $t$ for all $i$ such that $d_i=-1$; and
\item boolean combination $\delta(q, a)$ evaluates to  $\textit{true}$ when
  atomic actions \linebreak $(q_1, d_1), \dots, (q_n,d_n)$ are  substituted by
  $\textit{true}$, and other atomic actions are substituted by $\textit{false}$.
\end{itemize}
Tree $t$ is accepted by automaton $\Aa$ if it admits a finite run. By
$L(\Aa)$ we denote the language recognized by $\Aa$; that is, the set of trees
accepted by $\Aa$.

According to the definition above, two-way
alternating automata only distinguish between going up, down, and
staying where they are. In a more general model, appropriate for
ordered ranked trees, one could also distinguish between going to the first child, the
second child, etc. Given that our datalog programs are not able to
make such distinction, this simplified definition suffices. 

The computation model of two-way alternating automata is very similar
to that of datalog programs, making them a perfect intermediate
formalism on the road to nondeterministic bottom-up automata. From
there one continues thanks to the following fact. 
\begin{proposition}[\cite{CosmadakisGKV88}]
%,Vardi98
\label{prop:two-way}
 Given a two-way alternating automaton $\Aa$ (interpreted over words
 or ranked trees), one can compute (in time polynomial in the size of
 the input and output) single-exponential nondeterministic bottom-up
 automata recognizing the language $L(\Aa)$ and its complement,
 respectively. 
\end{proposition}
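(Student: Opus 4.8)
The plan is to obtain both nondeterministic automata from a single construction applied to two different two-way alternating automata. The two ingredients are: (i) two-way alternating automata are closed under complementation \emph{without} any blow-up, and (ii) a single two-way alternating automaton can be turned into a single-exponential nondeterministic bottom-up automaton. Combining them, I would first build the \emph{dual} automaton $\Aa^{d} = \langle \Gamma, Q, q_I, \delta^{d}\rangle$, where $\delta^{d}(q,a)$ is obtained from $\delta(q,a)$ by swapping $\wedge$ with $\vee$ and $\mathit{true}$ with $\mathit{false}$ while leaving the atomic actions $(p,d)$ untouched. Acceptance of a fixed input $t$ by $\Aa$ amounts to the existence of a finite run, i.e.\ to a win for the automaton player in the associated reachability game on $t$; this game is determined, and dualizing $\delta$ swaps the two players together with the winning condition, so $L(\Aa^{d}) = \overline{L(\Aa)}$. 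Since $\Aa^{d}$ has the same size as $\Aa$, it then suffices to produce, for an arbitrary two-way alternating automaton, one single-exponential nondeterministic bottom-up automaton recognizing its language, and to apply this uniformly to $\Aa$ and to $\Aa^{d}$.

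The heart of the argument is therefore the elimination of two-wayness and alternation in a single exponential step, for which I would use a summarization (crossing-sequence) technique. The states of the nondeterministic automaton are \emph{interfaces} living in a set of size $2^{O(|Q|)}$: for each node $v$ the automaton guesses a set $C_v \subseteq Q \times \{-1,1\}$ recording in which states and directions the computation crosses the edge between $v$ and its parent (a pair $(q,1)$ for each thread entering $v$ from its parent, a pair $(q,-1)$ for each thread leaving $v$ upward). Thus $C_v$ summarizes the behaviour of the whole subtree rooted at $v$ as seen across that edge. Processing the tree bottom-up, at a node $v$ labelled $a$ with already-guessed child interfaces the automaton checks \emph{local consistency}: for every state $q$ that is an obligation at $v$ (arriving from the parent via $C_v$, or returned upward by a child), the combination $\delta(q,a)$ must evaluate to $\mathit{true}$ once the atomic actions $(p,0)$, $(p,-1)$, $(p,1)$ are set according to whether $p$ is, respectively, a local obligation at $v$, a state sent up in $C_v$, or a state sent down to some child. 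At the root one additionally requires that there be no upward crossings and that $q_I$ be dischargeable within the tree. Each such test depends only on the interfaces of $v$ and its children, so it fits a bottom-up automaton, and since every interface is a subset of $Q\times\{-1,1\}$ the state space stays single-exponential.

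Two points need genuine care, and the second is the main obstacle. First, the interface must simultaneously account for the disjunctive choices made by the guessing automaton and the conjunctive branching of $\delta$, so one must verify both that the crossing guesses are expressive enough to witness an accepting run and, conversely, that any locally consistent guess can be realized by a genuine run, so that the language is preserved. Second, and more delicate, even on a \emph{finite} tree a run may be infinite by bouncing between a node and its parent, whereas acceptance requires a finite run; the interface must therefore certify well-foundedness rather than mere local satisfiability. I would handle this by augmenting the crossing information with a rank that strictly decreases along the edges a thread traverses, and checking locally that ranks decrease, which rules out infinite excursions while keeping the state space single-exponential. The word case is the special instance in which every node has at most one child and $C_v$ is the classical crossing sequence. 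Finally, the construction is visibly computable in time polynomial in the size of the input automaton plus the (exponential) output, since each state is an interface and each transition is a tuple of interfaces whose local-consistency test is polynomial in $|Q|$; applying it to $\Aa$ and to $\Aa^{d}$ yields the two required automata.
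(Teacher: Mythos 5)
The paper does not actually prove this proposition---it is imported wholesale from the cited reference---and the only remark the paper attaches to it is precisely the warning that ``complementing two-way alternating automata is not trivial because there can be infinite runs that are not accepting.'' Your first step falls squarely into this trap. You claim that dualizing the transition function (swapping $\wedge$ with $\vee$ and $\mathit{true}$ with $\mathit{false}$) yields an automaton $\Aa^{d}$ with $L(\Aa^{d})=\overline{L(\Aa)}$ by determinacy of the associated game. But the acceptance condition in this model is the existence of a \emph{finite} run, i.e.\ a reachability objective for the automaton player; the opponent's complementary objective is a safety one, and the opponent also wins by forcing the play to be infinite. Dualizing $\delta$ while keeping the finite-run acceptance condition therefore does not complement the language. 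Concretely, take $Q=\{q_I\}$ and $\delta(q_I,a)=(q_I,0)$ for every letter $a$: every run is infinite, so $L(\Aa)=\emptyset$, yet $\Aa^{d}=\Aa$ (there is nothing to dualize), hence $L(\Aa^{d})=\emptyset\neq\overline{L(\Aa)}$. To make dualization sound you would have to give $\Aa^{d}$ the dual acceptance condition, under which infinite branches are good---a different automaton model, to which your subsequent translation, built entirely around certifying well-foundedness, no longer applies as stated.

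The second half of your argument (interfaces $C_v\subseteq Q\times\{-1,1\}$ together with bounded ranks enforcing well-foundedness) is a reasonable route to a single-exponential nondeterministic bottom-up automaton for $L(\Aa)$ itself; the rank bound can be justified because a positional winning strategy for the automaton player never revisits a position, so each state--direction pair crosses a given edge at most once. But your guessed interface is an \emph{existential} witness of acceptance, so flipping the final states of the resulting automaton does not recognize the complement, and once dualization is off the table you have supplied no other mechanism for the complement---which is the entire point of the proposition, since complementing the nondeterministic automaton directly would cost another exponential. The construction in the cited reference avoids this by making the annotation canonical: the nondeterministic automaton guesses, at each node, the exact least-fixed-point summary of which states admit finite accepting sub-runs relative to the context, and verifies bottom-up that the guess is not merely consistent but \emph{is} that least fixed point; acceptance and rejection are then read off the same canonical annotation by complementary final conditions. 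Without either that canonicity argument or a corrected dualization, the complement half of the statement remains unproven in your write-up.
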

Notice that complementing two-way alternating automata is not trivial because there can be infinite runs that are not accepting.

\subsection{Canonical models and homomorphisms}
Let $r$ be a satisfiable rule of a datalog program $\Ppr$. Recall from Section \ref{sec:preliminaries} that $G_r$ is a graph of nodes from $r$. A {\bf pattern} $\pi_r$ has the same nodes and edges as $G_r$ but the type of edge between nodes ($\succ$ or $\desc$) is distinguished. The nodes are labeled with variable names. If there is an extensional unary predicate, e.g. $a(X)$, specified by the rule then we replace the label $X$ with $a$. We simulate the relation $\sim$ by repeating variable labels. % Notice that as in the rules variables correspond to nodes, in patterns they correspond to labels. 

Since in our setting the relation $\desc$ is disallowed, we can always transform a satisfiable rule $r$ into an equivalent rule $r'$ such that $\pi_{r'}$ is a tree. This is because our models are trees and therefore nodes that have a common child can be merged into one node. 

\begin{example}\label{ex:tree_pattern}
The rule $P$ is transformed into its tree version $P'$. On the right there are patterns corresponding to these rules. The repeated occurrence of $X$ represents the relation $\sim$ in the patterns.

\noindent\begin{minipage}{0.56\textwidth}
{\footnotesize    \begin{align*}
      P(X) &\leftarrow  X \succ Y, Y\succ Z, T \succ  Z,  a(T), X \sim Z \\
      & \downsquigarrow \\
      P'(X) &\leftarrow  X \succ Y, Y\succ Z,  a(Y), X \sim Z
    \end{align*}
}
  \end{minipage} 
  \begin{minipage}{0.16\textwidth}
    \begin{tikzpicture}[scale=0.8]
      \node[draw=none] (z2) at (2, 5) {$X$};
      \node[draw=none] (M1) at (1.5, 4) {$Y$} edge [<-] (z2);
      \node[draw=none] (M3) at (2, 3) {$X$} edge [<-] (M1);
      \node[draw=none] (M2) at (2.5, 4) {$a$} edge [->] (M3);
      \node[draw=none] (M2) at (3.5, 4) {$\rightsquigarrow$};
    \end{tikzpicture}
%   \caption{$\Pi$}
%    \label{fig:tiger}
  \end{minipage}
  \begin{minipage}{0.12\textwidth}
    \begin{tikzpicture}[scale=0.8]
      \node[draw=none] (z2) at (2, 5) {$X$};
      \node[draw=none] (M1) at (2, 4) {$a$} edge [<-] (z2);
      \node[draw=none] (M3) at (2, 3) {$X$} edge [<-] (M1);
    \end{tikzpicture}
\end{minipage}
\end{example}

A {\bf homomorphism} from a pattern $\pi_r$ to a model tree $t$ is a function between nodes that preserves the extensional predicates. A proof tree is witnessing an evaluation of the program on a given model $t$ iff for all rules there is a homomorphism from their patterns to $t$ such that the intensional nodes are mapped to the same nodes as the head nodes in the following rules.
The connection between patterns and datalog is explained in more detail in \cite{FFA}.

From a satisfiable proof tree we obtain a {\bf canonical model}. First we change rules to patterns and merge head nodes with intensional nodes. Nodes labeled with variables are relabeled with fresh labels (preserving the equalities forced by~$\sim$). The obtained graph can be seen as a pattern of the proof tree. Then we turn it into a tree similarly as in Example~\ref{ex:tree_pattern}.

It is easy to see that it suffices to consider the containment problem only on canonical models. If there is a model $t$ for $\Ppr \wedge \neg \Qpr$ then there is a witnessing proof tree for $\Ppr$ on $t$. The canonical model corresponding to this proof tree is also a model for $\Ppr \wedge \neg \Qpr$.

\section{Equivalence}\label{app:equivalence}
% Throughout the paper we work with the Boolean variant of
% containment, i.e., satisfiability of ${\cal P} \land \lnot {\cal Q}$,
% where ${\cal P}$ and ${\cal Q}$ are treated as Boolean queries, with
% all variables quantified out existentially. Using well known methods
% one can reduce the unary variant to the Boolean one (not by simply rewriting queries, though). 
% 
% To warm up, we first look at the special case of words (i.e., unary trees), where complexities are different than for trees of higher arity, and where some ideas can be illustrated without much of the technical difficulty of trees of higher arity.  

We decide containment by constructing an automaton that is non-empty iff there is a counterexample to containment.
To do this, we mark a single node in a tree, and use the automaton to verify if the goal predicates of programs in question are satisfied in this node.
Formally, we extend the alphabet by taking its product with $\{ 0,1 \}$, 
and recognize models which have exactly one node marked with $1$. To obtain tight complexity bounds, we use two-way alternating automata.
The same technique was used in \cite{FrochauxGS14}.

%Formally, we will show that both in case of words and trees it is enough to verify containment over a finite alphabet $\Sigma_0$
%Then we will extend the alphabet to $\Sigma_0 \times \{ 0,1 \}$ 
%and work with those models that have exactly one position (node) with $1$ as the second component of the label. 

\subsection{Special case: words}\label{subs:words}

Over words, the relations $\downarrow$ and $\desc$ are interpreted as the ``next position'' and the ``following position''.  

% Over data words and data trees, containment of
% monadic datalog programs with descendant is undecidable -- even
% containment of linear monadic programs in unions of conjunctive
% queries \cite[Proposition 3.3]{AbiteboulBMW13}. As discussed in
% Section \ref{sec:preliminaries}, despite obvious similarities between
% our setting and the setting of data trees, neither lower bounds nor
% upper bounds carry over immediately. The reduction in
% \cite{AbiteboulBMW13} relies on the presence of finite alphabet and
% cannot be directly adapted to our setting, but with a little effort
% the use of finite alphabet can be eliminated. Once descendant is disallowed, we immediately regain decidability.  

% \begin{proposition}
% \label{prop:words}
% %\label{prop:undecidability-words}
% Containment over words is  \PSpace-complete for $\datasucc$.
% \end{proposition}
\begin{lemma}
 \label{lem:words:automata}
Let $\Ppr \in \datasucc$ and let $\Sigma_0$ be a finite alphabet. There exists a two-way alternating automaton 
that accepts all words over $\Sigma_0 \times \{0,1\}$ with exactly one position with label $(a,1)$ for some $a \in \Sigma_0$, 
such that $\Ppr$ holds in that position. The automaton can be constructed in time polynomial in $|\Ppr|$ and $|\Sigma_0|$.
\end{lemma}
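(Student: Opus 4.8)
The plan is to turn the program $\Ppr$ into a two-way alternating automaton whose states are indexed by the intensional predicates of $\Ppr$, so that the automaton being in state $q_P$ at a position $p$ witnesses that $P$ holds at $p$. The run tree of the automaton will then mirror a proof tree of $\Ppr$. Because the automaton verifies every constraint directly against the actual input word, any accepting run automatically corresponds to a \emph{satisfiable} configuration, so the complications about satisfiability of proof trees mentioned in Section~\ref{sec:preliminaries} (one label per node, unique predecessor) never surface here.

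First I would preprocess each satisfiable rule into its tree-pattern form. On a word $\succ$ is a deterministic forward step and two variables sharing a child must coincide, so the pattern $\pi_r$ of every rule is a \emph{path} whose nodes occupy consecutive positions, with the head node $X$ at a fixed offset inside it. Hence, once the automaton fixes the position $p$ of $X$, the positions of all remaining variables of $r$ are determined and lie within distance $N$ of $p$, where $N$ is the size of the largest rule. This is precisely where the absence of $\desc$ together with connectedness is used: they guarantee bounded, deterministic navigation, whereas with $\desc$ a variable could sit at an unbounded distance that a finite automaton cannot reach by single steps.

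The transition for $q_P$ on the current letter is the disjunction over all rules $r$ with head $P$; this alternation guesses which rule derives $P$ at $p$. For a fixed rule $r$ I take the conjunction of three kinds of obligations, each realized by a sub-computation that walks a bounded number of steps to the relevant offset and then acts locally: (a) for every pattern node carrying a fixed extensional label $a$, check that the reached position carries $a$; (b) for every $\sim$-equivalence class of variable nodes (the repeated variable labels), existentially guess a common label through a finite disjunction $\bigvee_{a\in\Sigma_0}$ and, inside it, check that every member of the class carries $a$; (c) for every intensional atom $P'(Y)$ in the body, walk to the offset of $Y$ and continue in state $q_{P'}$, recursing. Walking to an offset of absolute value at most $N$ uses $O(N)$ auxiliary counting states per direction. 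Base rules (without intensional atoms) produce only checks, hence finite run branches, matching proof trees of finite height. A handful of regular states handle the initial phase: verify that exactly one position carries the mark $1$ and launch $q_{P_{\mathrm{goal}}}$ at that position (the search for the marked position is a single state looping with $+1$ moves, which is unbounded but finite, while all predicate-verification navigations remain bounded by $N$).

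For correctness one argues by induction on the height of the run (equivalently, of the associated proof tree) that $q_P$ accepts at $p$ iff $P$ holds at $p$ in the word; both directions follow from the definition of a homomorphism from $\pi_r$ and from the determinacy of navigation on words. For the size bound there are $O(|\Ppr|)$ predicate states and $O(|\Ppr|\cdot N)$ auxiliary navigation states, and every transition is a positive boolean combination of size $O(N\cdot|\Sigma_0|)$, the factor $|\Sigma_0|$ arising solely from the label-guessing disjunctions of step (b); hence the whole automaton is computable in time polynomial in $|\Ppr|$ and $|\Sigma_0|$. The main technical obstacle I expect is step (b): the guessed common label of each $\sim$-class must be shared across all members of the class while we use only positive boolean combinations and keep the number of states independent of $|\Sigma_0|$. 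This is resolved by pushing the label choice into a single finite disjunction inside the transition function rather than into the state space, so that the shared guess is forced by the structure $\bigvee_{a}\bigwedge_{Y}(\text{label }a\text{ at }Y)$.
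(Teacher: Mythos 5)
Your proposal is correct and follows essentially the same route as the paper: normalize each rule so its body pattern is a path with the head at a fixed offset, build a two-way alternating automaton whose states correspond to the rules/intensional predicates with bounded-offset navigation (at most the size $N$ of the largest rule), realize $\sim$ via a positive disjunction $\bigvee_{a\in\Sigma_0}$ inside the transition, and add an initial scanning phase plus a check that exactly one position is marked. The only cosmetic differences are that the paper indexes states by rules rather than predicates and uses generalized moves in $\{-N,\dots,N\}$ (converted to a standard automaton with a $2N+1$ blow-up) instead of explicit counting states, and it keeps letters of $\Sigma_0$ as states for the label checks; none of this affects correctness or the polynomial bound.
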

\begin{proof}
 Let us fix a program ${\cal P} \in \datasucc$ and a finite alphabet
$\Sigma_0$. The alphabet is $\Sigma_0 \times \{0,1\}$ but most of the time the second component is ignored. 
Since we work over words (and consider only connected
programs) without loss of generality we can assume that each rule $r$ is
of the form 
\[ H(x_0) \leftarrow \bigwedge_{i=k}^{\ell-1} x_i \da x_{i+1} \land
\psi(x_{k}, x_{k+ 1}, \dots, x_{\ell}),\] where $k\leq 0
\leq \ell$ and $\psi(x_{k}, x_{k + 1}, \dots, x_{\ell})$ is a
conjunction of atoms over unary predicates and $\sim$; that is, it does not use $\da$. This
means that the pattern corresponding to the body of $r$ is a word.

In the automaton ${\cal A}_{\cal P} = \langle \Sigma_0, Q, q_0, \delta
\rangle$ we are about to define we allow transitions of a slightly
generalized form: the transition function $\delta$ assigns to each
state-letter pair a positive boolean combination of elements of
\[Q\times \{- N, -N+1, \dots, N\}\] for a fixed constant $N\in \mathbb{N}$, rather
than just $Q\times \{-1, 0, 1\}$. The semantics of this is the
natural one: $(q,k)$ means that the automaton moves by $k$ positions
(left or right depending on the sign of $k$) and changes state to
$q$. Each generalized automaton can be transformed to a standard one
at the cost of enlarging the state-space by the factor of $2N+1$. In
our case $N$ will be bounded by the maximal number of variables used in a rule of $\cal P$.

Let us describe the automaton $\Aa_{\cal P}$. The state-space $Q$ is \[\Sigma_0 \cup
{\cal P} \cup \{q_0\}\,;\] that is, it consists of the letters from
$\Sigma_0$, the rules of $\cal P$ and an additional initial state
$q_0$.  The transition relation $\delta$ is defined as follows. In the
initial state, regardless of the current letter,  we loop moving to
the right until we reach the position 
in the word where we start evaluating $\cal P$:
\begin{align*}
 \delta (q_0, (\_\, ,0)) = (q_0, +1) , \\
 \delta (q_0, (\_\, ,1)) = (r_{\textrm{goal}}, 0)\,, \\
\end{align*}
where $r_{\textrm{goal}}$ is the goal rule of ${\cal P}$. 
This is the only case when ${\cal A}_{\cal P}$ does not ignore the component $\{0,1\}$ in the alphabet. 
That is we require that there is 1 in the second component when the first goal rule is applied.  When we are
in state $r\in {\cal P}$, regardless of the current letter,  we check that the body of $r$ can 
be matched in the input word in such a way that $x_0$ is mapped
to the current position:
\[ \delta(r,\_\,) = \bigwedge _{ a(x_i) } (a,i) \; \land
\bigwedge _{x_i \sim x_j} \bigvee _{\;b \in \Sigma_0}  (b,i)
\land (b,j) \; \land \bigwedge_{R(x_i)} \bigvee_{\;r' \in {\cal P}_R}
(r',i)\,,\]
where $a(x_i)$, $x_i\sim x_j$, and $R(x_i)$ range respectively over labels, $\sim$,
and intensional  atoms of $r$, and ${\cal P}_R \subseteq
\cal P$ is the set of rules defining intensional predicate $R$.
In state $a\in\Sigma_0$  we simply check that the letter in the
current position is $a$:
\[ \delta(a,a) = \top \,, \quad \textrm{and} \quad \delta(a,b) = \bot \textrm{ for } b\neq a\,.\]
Checking correctness and the size bounds for $\Aa_{\cal P}$ poses no
difficulties. Taking a product of $\Aa_{\Ppr}$ with an automaton  (of size linear in $|\Sigma_0|$) 
that checks if there is exactly one position with label $(a,1)$ for some $a \in \Sigma_0$
gives the automaton from the statement.
\end{proof}

Now we can show the proof of Theorem \ref{thm:ranked_trees} for the case of words.
\begin{proof}[of Theorem \ref{thm:ranked_trees} (words)]
In Proposition~2 of~\cite{FFA} it is shown that over words it suffices to
check satisfiability of $\query$ over an alphabet $\Sigma_0$
of linear size. 
% \aw{Tutaj jest poprzednia wersja}
% We will show that for a given 
% program ${\cal P} \in \datasucc$ and given finite alphabet $\Sigma_0$
% one can construct in $\PTime$ a two-way alternating automaton
% ${\Aa}_{\cal P}$ recognizing words over $\Sigma_0$ that satisfy $\cal 
% P$. Since the queries are unary, we need an automaton for $\Ppr(x) \wedge \neg \Qpr(x)$.
% For this reason we use an extended alphabet $\Sigma_0 \times\{0,1\}$. The automaton ${\Aa}_{\cal P}$ ignores the second component $\{0,1\}$, except that it starts the evaluation of $\Ppr$ only in nodes marked with $1$ (this will become more clear in the construction). Finally, we modify ${\Aa}_{\cal P}$ by taking the product with an automaton ${\Aa_{1}}$ which recognizes words over $\Sigma_0 \times\{0,1\}$ that have exactly one position with $1$ in the second component.
For programs $\Ppr$ and $\cal Q$, let $\Aa_{\Ppr}$ and ${\cal A}_{\cal Q}$ be alternating two-way automata given by Lemma~\ref{lem:words:automata}. From automata ${\cal A}_{\cal P}$ and ${\cal A}_{\cal Q}$,
by Proposition~\ref{prop:two-way}, we obtain one-way non-deterministic
automata $\Bb_{\cal P}$ and $\Bb_{\lnot{\cal Q}}$ of exponential size
that recognize respectively the language $L(\Aa_{\cal P})$ and the
complement of $L(\Aa_{\cal Q})$. From this we easily get a product
automaton $\Bb_{\cal P \land \lnot Q}$ equivalent to  the query ${\cal  P}(x) \wedge \neg {\cal Q}(x)$. 
Indeed, it accepts all words over $\Sigma_0$ with exactly one position $x$ marked with $1$, such that ${\cal  P}(x) \wedge \neg {\cal Q}(x)$.
  
%Moreover, with the extension of the alphabet to $\Sigma_0 \times \{0,1\}$ this automaton recognizes the query ${\cal  P}(x) \wedge \neg {\cal Q}(x)$. By the product construction with $\Aa_1$ the node $x$ is the same in both queries.

The size of $\Bb_{\cal P \land \lnot Q}$
is exponential in the size of ${\cal P, Q}$, but its states and
transitions can be generated on the fly in polynomial space.  To check
emptiness of $\Bb_{\cal P \land \lnot Q}$ we make a simple
reachability test, which is in \NLogSpace{}. Altogether, this gives a
\PSpace{} algorithm. 
\qed
\end{proof}

\subsection{Ranked trees}\label{subs:trees}

The results for words can be lifted to ranked trees: complexities
are higher, but the general picture remains the same. 
%Once again, first we construct an alternating, two-way automaton
%that checks if a program is satisfied in a marked node of a tree.
%Then we apply Proposition~\ref{prop:two-way} and construct an 
%automaton that is non-empty iff there exists a counterexample to containment of two programs.

\begin{lemma} \label{lem:trees:automata}
Let ${\cal P}\in\datasucc$ be a program with rules of size at most
$n$ and let $\Alf$ be a finite alphabet.
There exists a two-way alternating automaton $\Aa_{\cal P}$ of 
size ${\cal O}(\|{\cal P}\| \cdot |\Alf|^n \cdot n )$ recognizing trees over
$\Sigma_0 \times \{0,1\}$ with exactly one node with label $(a,1)$ for some $a \in \Sigma_0$,  such that $\Ppr$ holds in that node.
\end{lemma}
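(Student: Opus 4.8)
The plan is to lift the word construction from Lemma~\ref{lem:words:automata} to ranked trees, keeping the same overall architecture but replacing the ``pattern is a word'' normalization with the ``pattern is a tree'' normalization justified in the appendix on canonical models. First I would fix $\Ppr \in \datasucc$ with rules of size at most $n$, and recall that since $\desc$ is disallowed and models are trees, every satisfiable rule $r$ can be rewritten so that its pattern $\pi_r$ is a tree (as in Example~\ref{ex:tree_pattern}). The root of this pattern tree is the head variable $x_0$, and each node is labeled either by a concrete letter $a\in\Alf$ (when the rule fixes a label), by a variable name, and possibly carries intensional predicates; the relation $\sim$ is encoded by repeated variable labels. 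I would discard unsatisfiable rules (those forcing two distinct labels on one node) in a preprocessing step.

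The automaton $\Aa_{\cal P} = \langle \Alf, Q, q_0, \delta\rangle$ has state set
\[
Q = \Alf \;\cup\; \set{(r,x) : r \in {\cal P},\ x \text{ a node of } \pi_r} \;\cup\; \set{q_0},
\]
so that a rule-state remembers both the rule and which node of its pattern tree the automaton is currently trying to match. As in the word case, $q_0$ nondeterministically walks to the uniquely $1$-marked node (this is the only place the $\{0,1\}$ component is read) and there launches the goal rule at its root node. When in state $(r,x)$ at a tree node $v$, the transition $\delta$ is a positive boolean combination asserting: the label atoms at $x$ match the current letter; for every child $y$ of $x$ in $\pi_r$ the automaton moves to the corresponding child direction and enters state $(r,y)$ (a move by $+1$, but now ranging over the actual children of $v$); for every $\sim$-edge the two endpoints agree on some guessed letter $b\in\Alf$; and for every intensional atom $R(x_i)$ we branch, as in the word case, over $\bigvee_{r'\in{\cal P}_R}$ into the root state of the chosen defining rule $r'$, staying at the same node (direction $0$) so that the head node of $r'$ is identified with the intensional node of $r$. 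Because the automaton distinguishes only up/stay/down, matching ``some child'' rather than ``a specific child'' is exactly what the simplified two-way model (noted in the appendix) allows, and it matches the fact that our datalog cannot distinguish sibling order.

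The size bound and the main obstacle both come from handling the $\sim$-edges and the matching of patterns that branch. For the size, I expect the factor $|\Alf|^n$ to arise because verifying a pattern with up to $n$ nodes connected by $\sim$ requires guessing and propagating a consistent letter assignment; the cleanest way to get ${\cal O}(\|{\cal P}\|\cdot|\Alf|^n\cdot n)$ is to precompute, for each rule, the finitely many $\Alf$-valuations of its variables compatible with $\sim$ and the fixed-label atoms, and let the state space carry this information, contributing $|\Alf|^n$ per rule and the extra factor $n$ from indexing pattern nodes. The step I expect to be the main obstacle is arguing correctness of the $\sim$-handling: on words two $\sim$-linked positions are reached by a single numeric offset, but on a tree the two endpoints of a $\sim$-edge may sit in different subtrees, so the automaton must guess a common letter and independently verify it in each branch rather than physically navigate between them. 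I would therefore prove correctness by the standard induction: an accepting run of $\Aa_{\cal P}$ on a marked tree decomposes into subruns, one per pattern node, and these subruns witness exactly a satisfiable proof tree for $\Ppr$ evaluating to true at the marked node, and conversely; the delicate direction is checking that independently-verified $\sim$ constraints indeed reassemble into a single consistent homomorphism from the proof tree's pattern into $t$, which is precisely where the canonical-model correspondence from the appendix is invoked. Finally, taking the product with a linear-size automaton enforcing exactly one $1$-marked node yields the automaton in the statement.
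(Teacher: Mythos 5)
Your construction is essentially the paper's: the paper first eliminates $\sim$ by expanding each rule into $|\Alf|$ variants per $\sim$-atom (the program-level version of your ``carry a letter valuation in the state'', and the source of the same $|\Alf|^n$ factor), then normalizes each rule so that it only inspects the current node and demands neighbours satisfying auxiliary intensional predicates (the program-level version of your $(r,x)$ pattern-node states, costing the same ${\cal O}(n)$ factor), and observes that the result \emph{is} a two-way alternating automaton. You correctly identify the $\sim$-handling as the place where the word argument does not transfer, and your fix is the right one.

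There is, however, one genuine error. You assert that ``the root of this pattern tree is the head variable $x_0$'' and then describe transitions that, from state $(r,x)$, only move to \emph{children} of $x$ in $\pi_r$ (direction $+1$), launching each rule ``at its root node'' while physically sitting at the node that must be the image of the \emph{head} variable. In $\datasucc$ the head node of a rule can sit anywhere in the pattern tree: the rule $Q(X) \leftarrow Y \succ X,\ Q(Y)$ from Example~\ref{ex:sat_proof tree} has $Y$ as the pattern root and the head $X$ as its child, and the program must walk \emph{up} the model tree. With your transitions the automaton would begin matching $\pi_{r'}$ at its root while standing at the node identified with the head of $r'$, so the pattern would be anchored at the wrong place, and no downward-only exploration can verify it. The repair is exactly what the paper's normal form provides: explore $\pi_r$ starting from the head node and allow moves in direction $-1$ (to the parent) as well as $+1$, i.e., each atomic step demands a neighbour that is either a child or the parent. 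The paper also explicitly notes that the $\IN$/$\OUT$ nodes are ``positioned arbitrarily'' in the pattern for precisely this reason. With that correction your size accounting and the final product with the single-marked-node checker go through as in the paper.
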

\begin{proof}
Let us fix a program ${\cal P}\in\datasucc$ and a finite alphabet
$\Alf$. Given that we are only interested in trees over alphabet
$\Alf$, we can eliminate the use of $\sim$ from ${\cal P}$: if a rule
contains $x\sim y$ we replace this rule with $|\Alf|$ variants in which
$x\sim y$ is replaced with $a(x)\land a(y)$ for $a\in\Alf$. 
The size
of the program grows by a ${\cal O}(|\Alf|^n)$ factor; the size of the
rules grows only by a constant factor. 

Since we are working on trees we can further transform the program so
that the patterns corresponding to the rules of the program are trees (with $\IN$ and
$\OUT$ nodes positioned arbitrarily). Indeed, it can be done by unifying
variables $x$ and $y$ whenever the rule contains $x\da z$ and $y \da
z$ for some variable $z$, and removing rules containing atom $u\da u$,
or atoms $a(u)$ and $b(u)$ for some variable $u$ and distinct letters $a$ and
$b$ (see Example~\ref{ex:tree_pattern}). This modification does not increase the size of the program. 

Finally, we rewrite each rule into a set of rules of the form 
\[H(x_0) \leftarrow a(x_0) \land \bigwedge_{i=1}^\ell \mathrm{ax}_i(x_0,x_i) \land
\psi(x_0, x_1, \dots, x_\ell)\] where $a\in \Alf$,
$\mathrm{ax}_i(x_0,x_i)$ is either $x_0\da x_i$ or $x_i\da x_0$, and $\psi(
x_0, x_1, \dots, x_\ell)$ is a conjunction of
(monadic) intensional atoms. That is, one rule can only test the
label and some intensional predicates for the current node, and demand
existence of neighbours (children or parents) satisfying some intensional 
predicates. This modification introduces auxiliary intensional
predicates, but the size of the program icreases only by ${\cal O}(n)$ factor. 

The resulting program is essentially a two-way alternating automaton $\Bb$,
only given in a different syntax. 
The automaton from the statement is obtained by modifying the automaton $\Bb$ similarly as in the case of words.
%$\Aa_{\Ppr}$ is a product of an automaton of constant size, enforcing existence of exactly 
%one node with label $(a,1)$ for some $a \ in \Alf$ and an automaton that finds a node with label $(a,1)$ for some $a$, then starts $\Bb$ from this node.
\end{proof}

\begin{proof}[of Theorem \ref{thm:ranked_trees} (trees)] 
In Theorem~1 of~\cite{FFA} it is shown that for trees it suffices to verify containment over a finite alphabet $\Sigma_0$, although for trees $\Sigma_0$ is of exponential size.
% For a  given program ${\cal P}\in\datasucc$ with rules of size at most
% $n$ and a finite alphabet
% $\Alf$, we shall construct a two-way alternating automaton $\Aa_{\cal
%   P}$ of size ${\cal O}(\|{\cal P}\| \cdot |\Alf|^n \cdot n )$ recognizing trees over
% $\Sigma_0$ that satisfy ${\cal P}$.
Using Lemma~\ref{lem:trees:automata} and
Proposition~\ref{prop:two-way} we reduce the containment problem to the
emptiness problem for a nondeterministic tree automaton of a double
exponential size in $|\Ppr|$, and test emptiness with the standard $\PTime$
algorithm. 
\qed
\end{proof}
The lower bounds can be obtained by a straightforward modifications of the results in \cite{FFA}.

\subsection{Satisfiability on unranked trees}

\begin{proposition}
\label{prop:sat:cl}
The satisfiability problem for $\cldatasucc$ on unranked trees is in \ExpTime{}. %For $\cdatasucc$ on unranked trees it is in \TwoExpTime{}. 
\end{proposition}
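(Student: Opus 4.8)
The plan is to reduce satisfiability of a program $\Ppr \in \cldatasucc$ to the nonemptiness of a tree automaton, reusing the machinery developed for containment but applied to a single program rather than to a query $\query$. As in the rest of the paper, I would mark one node of the input tree with $1$ and build an automaton that accepts exactly those trees whose marked node satisfies the goal predicate of $\Ppr$; then $\Ppr$ is satisfiable if and only if this automaton is nonempty. Since we only ever need the automaton for $\Ppr$ itself, and never have to complement the automaton of a second program, I expect to spend one fewer exponential than in the containment proof of Theorem~\ref{thm:containment:cl}, which is exactly what should bring the cost down to \ExpTime{}.

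First I would reduce to a finite alphabet. By the standard relabelling argument (used already in the proof of Theorem~\ref{thm:boundedness:words:cl} and taken from~\cite{FFA}), any model can be relabelled over an alphabet $\Sigma_0$ consisting of the labels occurring in $\Ppr$ together with a number of fresh labels bounded by the maximal rule size, without changing whether the goal holds at any given node; for satisfiability this suffices because the only equalities a proof tree can force come from the $\sim$-atoms inside single rules, whose patterns are of bounded size. Over $\Sigma_0 \times \{0,1\}$ I would then construct a two-way alternating automaton $\Aa_\Ppr$ recognising the trees with a single marked node at which $\Ppr$ holds, exactly as in Lemma~\ref{lem:trees:automata}, with two adaptations for the unranked setting: a rule demanding several children with different constraints is realised by conjunctive (alternating) branching in which each atomic action $(p,1)$ descends to \emph{some} child, and the relation $\sim$ is handled by guessing a common label inside the transition function, as in Lemma~\ref{lem:words:automata}, rather than by expanding each rule into $|\Sigma_0|^{n}$ variants. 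Keeping $\sim$ inside transitions is what keeps the state space of $\Aa_\Ppr$ polynomial; eliminating it up front would reintroduce the exponential factor of Lemma~\ref{lem:trees:automata} and ultimately cost a second exponential.

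It remains to test nonemptiness. I would encode the unranked tree in the usual first-child/next-sibling fashion as a binary tree, so that a ``move to a child'' of $\Aa_\Ppr$ becomes a two-way walk that steps to the first child and then nondeterministically follows next-sibling edges; this turns $\Aa_\Ppr$ into a two-way alternating automaton over ranked (binary) trees to which Proposition~\ref{prop:two-way} applies, yielding a nondeterministic bottom-up automaton of single-exponential size. Testing emptiness of this automaton is in \PTime{}, so the whole procedure runs in \ExpTime{}. The main obstacle is the construction of the previous paragraph: obtaining a \emph{single}-exponential automaton over unranked trees, that is, simultaneously realising the arbitrary branching of unranked rules by alternation without blow-up and treating $\sim$ by label-guessing so that the state space stays polynomial and the only exponential is the one incurred by the two-way-to-one-way conversion of Proposition~\ref{prop:two-way}. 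If instead one expanded $\sim$ eagerly or passed through the ranked construction of Lemma~\ref{lem:trees:automata} verbatim, the bound would degrade to \TwoExpTime{}.
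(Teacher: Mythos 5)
There is a genuine gap at the heart of your construction: the claim that the two-way alternating automaton $\Aa_\Ppr$ over unranked trees can be kept polynomial-size by ``guessing a common label inside the transition function'' for $\sim$. That trick works in Lemma~\ref{lem:words:automata} only because rule patterns over words are lines and the automaton can address the positions of $x_i$ and $x_j$ directly by relative offsets. Over trees, a rule whose pattern branches must be decomposed into local steps (as in Lemma~\ref{lem:trees:automata}), and a constraint $x_i\sim x_j$ with $x_i$ and $x_j$ in different subpatterns has to be checked by two conjunctive branches of the run that make their existential child-choices \emph{independently}. Either you thread the guessed labels of all $\sim$-classes crossing a pattern edge through the state --- which costs $|\Sigma_0|^{O(n)}$ states, i.e.\ exactly the blow-up of Lemma~\ref{lem:trees:automata} that you are trying to avoid --- or the branches desynchronize and the automaton accepts unsatisfiable proof trees (this is precisely the ``satisfiable proof tree'' subtlety stressed in the preliminaries: distinct branches may place $x_i$ at different actual nodes). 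So $\Aa_\Ppr$ is single-exponential, not polynomial, and after Proposition~\ref{prop:two-way} (or a direct emptiness test for alternating tree automata, which is itself \ExpTime{} in the size of the automaton) you land at \TwoExpTime{}. The paper states this explicitly: combining Lemma~\ref{lem:trees:automata} with Proposition~\ref{prop:two-way} only yields an automaton of doubly exponential size.

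A second, telling symptom is that your argument never uses linearity, even though the proposition is stated for $\cldatasucc$. The paper's \ExpTime{} bound is obtained by a different route that relies on linearity essentially: first a model-shape reduction (Lemma~\ref{lemma:universal}) showing that a satisfiable program is satisfiable in a \emph{universal} $\Sigma_0$-tree, and then a monotone fixpoint computation (Lemma~\ref{lem:partialmatchings}) over the exponential-size set of \emph{partial matchings} --- infixes of proof \emph{words} with partial homomorphisms at their two ends --- computing $\mathit{matched}(U^a_n)$ for growing $n$ until stabilization. Proof trees of linear programs are single branches, which is what makes this set of partial matchings only exponential. To repair your proof you would either need to establish a genuinely polynomial-size alternating automaton despite $\sim$ (which the above suggests is not possible), or adopt the universal-tree/partial-matching argument.
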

Before proving this result let us introduce the notation.
\begin{definition}\label{def:universal_tree}
Let $\Sigma_0$ be a finite alphabet. A {\bf universal} $\Sigma_0$-tree is a 
full $|\Sigma_0|$-ary tree over $\Sigma_0$ such that every non-leaf node has a child with each label from $\Sigma_0$.
For $a \in \Sigma_0, n \in \mathbb{N}$, we will denote by $U^{a}_{n}$ a universal $\Sigma_0$-tree of height $n$ and with $a$ in the root. 
\end{definition}
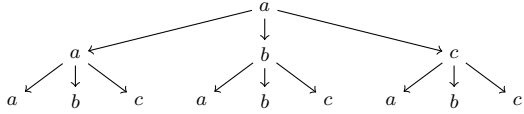
\begin{figure}[!h]
\begin{center}
 \begin{tikzpicture}[scale=0.83, transform shape]
 \node[draw=none] (a1) at (5,2.5) {$a$};
  \node[draw=none] (a2) at (2,1.75) {$a$} edge[<-] (a1);
  \node[draw=none] (b2) at (5,1.75) {$b$} edge[<-] (a1);
  \node[draw=none] (c2) at (8,1.75) {$c$} edge[<-] (a1);
  \node[draw=none] (a31) at (1,1) {$a$} edge[<-] (a2);
  \node[draw=none] (b31) at (2,1) {$b$} edge[<-] (a2);
  \node[draw=none] (c31) at (3,1) {$c$} edge[<-] (a2);
  \node[draw=none] (a32) at (4,1) {$a$} edge[<-] (b2);
  \node[draw=none] (b32) at (5,1) {$b$} edge[<-] (b2);
  \node[draw=none] (c32) at (6,1) {$c$} edge[<-] (b2);
  \node[draw=none] (a33) at (7,1) {$a$} edge[<-] (c2);
  \node[draw=none] (b33) at (8,1) {$b$} edge[<-] (c2);
  \node[draw=none] (c33) at (9,1) {$c$} edge[<-] (c2);
 \end{tikzpicture}
\caption{A universal $\Sigma_0$-tree $U^{a}_{2}$ for $\Sigma_0 = \{a,b,c\}$ } 
\end{center}
% %\includegraphics{universaltree}
\end{figure}

The proof will proceed as follows. First, we will show that if $\cal P$ is satisfiable, then it is satisfiable in a universal $\Sigma_0$-tree.
Then it is easy to see (combining Lemma \ref{lem:trees:automata} and Proposition \ref{prop:two-way}) that the set of universal $\Sigma_0$-trees satisfying $\Ppr$ is regular and recognized by an automaton with number of states double exponential in $|\Ppr|$. 
For linear programs however, we can do better and get an \ExpTime{} algorithm. 
\begin{lemma}\label{lemma:universal}
Let $\cal P \in \cdatasucc$ and let $\Sigma_0$ be a finite set of labels, s.t. $\Sigma_{\cal P} \subseteq \Sigma_0$.
The program $\Ppr$ is satisfiable iff $\Ppr$ is satisfiable in a universal $\Sigma_0$-tree.
\end{lemma}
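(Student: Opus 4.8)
The converse implication is immediate: a universal $\Sigma_0$-tree is a finite tree labelled over $\Sigma_0 \subseteq \Sigma$, hence a legitimate model, so satisfiability in such a tree is a special case of satisfiability. The plan for the forward implication is to take an arbitrary witnessing model, relabel it so that it uses only letters from $\Sigma_0$, and then map it homomorphically into a sufficiently tall universal $\Sigma_0$-tree; preservation of datalog under homomorphisms then transports the witness.

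First I would pass to a model over $\Sigma_0$. Assume $\Ppr$ is satisfiable, so there is a tree $t$ and a node $X$ with $\Ppr(X)$, witnessed by a satisfiable proof tree. I would relabel $t$ into a tree $t'$ by leaving untouched every node whose label already lies in $\Sigma_0$ and sending every remaining node to one fixed letter $c \in \Sigma_0$. The decisive observation is that the rule bodies of $\Ppr$ contain only positive atoms ($a(x)$, $x \sim y$, $x \succ y$, and intensional atoms), so collapsing labels cannot invalidate a proof: any atom $a(x)$ occurring in the witness forces a label $a \in \Sigma_{\cal P} \subseteq \Sigma_0$, which is therefore preserved by the relabeling, and any equality $x \sim y$ that held in $t$ still holds in $t'$ because equal labels are sent to equal labels. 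Hence the very same proof tree witnesses $\Ppr(X)$ in $t'$, and $t'$ is a tree over $\Sigma_0$.

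Next I would embed $t'$ into a universal tree. Writing $a$ for the label of the root of $t'$ and $n$ for its height, I would build a homomorphism $h \colon t' \to U^a_n$ by mapping the root of $t'$ to the root of $U^a_n$ and, inductively, sending each child carrying a label $\ell$ to the unique $\ell$-labelled child of the image of its parent. Such a child always exists where needed: whenever a node of $t'$ has children it sits at depth strictly less than $n$, so its image is a non-leaf of $U^a_n$ and, by the defining property of universal trees, has a child with every label of $\Sigma_0$. This depth-preserving map respects the child relation and all labels, hence also $\sim$; it may identify distinct nodes of $t'$, but that is harmless for a homomorphism.

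Finally I would invoke preservation of monadic datalog under homomorphisms of the extensional structure: composing each pattern homomorphism of the proof tree for $\Ppr(X)$ in $t'$ with $h$ yields pattern homomorphisms into $U^a_n$ that still glue consistently, so the same proof tree witnesses $\Ppr(h(X))$ in $U^a_n$. Thus $\Ppr$ is satisfiable in the universal $\Sigma_0$-tree $U^a_n$. The step I expect to require the most care is the relabeling: one must check that merging all out-of-$\Sigma_0$ labels into a single letter breaks no conjunct of any rule, which hinges entirely on the fact that the rules use no negation, in particular no inequality $\not\sim$ between labels.
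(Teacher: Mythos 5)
Your proof is correct and follows essentially the same route as the paper's: relabel the witnessing tree into $\Sigma_0$ (using that the rules contain no negation or $\not\sim$), then transport the witness into a universal tree via a label-preserving map, which is exactly the paper's step of merging equally-labelled siblings and padding with extra nodes, phrased as a single homomorphism into $U^a_n$. The only cosmetic difference is that you package the merging and padding into one explicit depth-preserving homomorphism and invoke preservation of positive programs under homomorphisms, whereas the paper performs the sibling merges iteratively and then adds nodes; both arguments rest on the same facts.
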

\begin{proof}
It suffices to show that if $\Ppr$ is satisfiable, then it is satisfied in some universal $\Sigma_0$-tree. 
The other direction is obvious.
Let $t$ be a model for $\Ppr$. Recall that $\Sigma_{\Ppr}$ is the set of constants used in $\cal P$.
First, we can change all labels from $t$ that are not in $\Sigma_{0}$ to a single label chosen from $\Sigma_{0}$ (preserving the equalities).
Since $\datasucc$ programs do not use negation and $\not \sim$ this operation can only make the set $\Ppr(t)$ bigger. 
Next, we perform the following operation. If a node of $v$ has two or more children with the same labels 
then we merge these children into one node $v$. The resulting node has children from both of the merged nodes. 
It is easy to check that this operation preserves homomorphisms and does not change the emptiness of the set $\Ppr(t)$.
We apply this procedure until there are no siblings with the same label.
Finally we add nodes to the obtained tree so that it becomes a universal $\Sigma_0$-tree. 
Of course adding nodes cannot decrease the set $\Ppr(t)$, which finishes the proof. \qed
\end{proof}
% In this section, we will identify datalog rules with their tree patterns, and write about 
% homomorphisms from rules meaning homomorphisms from the corresponding patterns.
From now on we assume that $\Ppr$ is a linear program.
We will actually prove a stronger result that will be useful for deciding the containment of a datalog program in a UCQ.
We will show an algorithm for calculating all possible ways of evaluating the program $\Ppr$ in the universal $\Sigma_0$-tree such that the evaluation uses the root of this tree. 

First, we need to introduce a notion of a partial matching of a datalog program.
We say that a rule is matched to a tree $t$ if there is a homomorphism from its pattern into $t$.
Let $r_1r_2 \dots r_n$ be a proof word.
A {\bf partial matching} $m$ of a program $\cal P$ into a tree $t$ 
is an infix $r_i\dots r_j$ of a proof word such that all the rules $r_{i+1},\dots,r_{j-1}$
are matched completely and $r_i$ and $r_j$ are partially matched, such that the images of the intensional nodes are equal to the following head nodes. 
% For proof trees instead of an infix we take  a connected fragment of the proof tree and require partial homomorphisms from the patterns of the rules in the root and leaves; and full homomorphism from the patterns of the remaining rules.
%  A subtree of a proof tree is matched if all
% its rules are matched and additionally all those homomorphism agre, 
% that is if a rule $r$ is a child of a rule $r'$ in the proof subtree, 
% then the head node of $r$ and the node from $r'$ that 
% has the intensional predicate $r$ are mapped to the same node in the tree.
% \aw{.... to jest fatalny fragment, dwie zasady, ktore maja node'y, i drzewo, ktore tez je ma....} 

% We want to calculate the set of partial matchings that start and end in the root of an universal tree.
Each partial matching $m$ can be represented by a pair of partial homomorphisms from the patterns of the first and the last rule of the infix of the proof word. 
% In the non-linear case, we represent a subtree $\tau_s$ of a proof tree by a pair of partial homomorphism from the pattern of the rule in the root of $\tau_s$
% and a set of partial homomorphisms from the pattern of the leaves of $\tau_s$.
We are interested in the partial matchings that map one of the nodes of the pattern to the root of the tree. Thus each partial homomorphism can be represented as a partial function from pattern $\pi$ into $\Sigma_0$.
% The $\mathsf{NON}$ symbol will mean that this node has not yet been matched.
Of course there are also partial matchings with nodes mapped below the root of the tree, and one end of a partial matching may be not possible to extend. This situation can arise when the goal rule is at the beginning of the matching; or the non-recursive rules are in the last position (leaves).
We use an additional symbol $\mathsf{OK}$ to mark this situation.

We denote the set of all partial matchings of $\Ppr$ by $\mathit{Match}({\Ppr})$.
% \aw{nie wiem, czy ten zbiór Match ma być explicite wypisany}
The size of $\mathit{Match}({\Ppr})$ is exponential in the size of $\Ppr$.
Obviously it suffices to calculate the set of all partial matchings into a tree to determine if it satisfies $\cal P$.
\begin{lemma}\label{lem:partialmatchings}
Let $\Sigma_0$ be a finite alphabet. 
The set of partial matchings of $\Ppr$ matched in a root of any $\Sigma_0$-universal 
tree can be calculated in time exponential in $|\Ppr|$ for linear programs. 
%and double exponential for non-linear ones.
% The size of this set is at most exponential in $|\Ppr|$ for linear programs and double exponential for non-linear ones.
 \end{lemma}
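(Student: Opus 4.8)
The plan is to compute $\mathit{Match}(\Ppr)$ restricted to the root of a universal $\Sigma_0$-tree by a fixed-point iteration over the possible partial matchings, exploiting the linearity of $\Ppr$ together with the highly regular structure of universal trees. First I would observe that, for a linear program, a proof word is a single branch $r_1 r_2 \dots r_n$, and a partial matching anchored at the root is entirely determined by a partial homomorphism of the pattern of the boundary rule $r_i$ (mapping one of its nodes to the root) together with a partial homomorphism of the pattern of the other boundary rule $r_j$. Since we are matching into a universal $\Sigma_0$-tree and have already eliminated $\sim$ in favour of explicit labels from $\Sigma_0$ (as in Lemma~\ref{lem:trees:automata}), each such partial homomorphism can be encoded as a partial function from the (constant-size) pattern $\pi$ into $\Sigma_0$, so the whole set $\mathit{Match}(\Ppr)$ of candidate matchings has size at most exponential in $|\Ppr|$, as already remarked before the statement.

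Next I would set up the iteration. The key point is that a universal $\Sigma_0$-tree $U_n^a$ looks the same from every internal node: below any node with label $a$ we again find (essentially) a universal tree, so ``matchable at the root'' is a property that can be propagated purely combinatorially. Concretely, I would maintain a set $S$ of partial matchings known to be realizable with their anchor at the root of some universal tree, initialize $S$ with the base matchings coming from non-recursive rules (those whose pattern can be satisfied directly, using the $\mathsf{OK}$ marker to record a boundary that cannot be extended further, e.g.\ when the goal rule sits at the top of the branch or a non-recursive rule sits at a leaf), and then close $S$ under the extension step: given a matching whose lower boundary rule $r_j$ places its intensional node at the root, and given a rule $r_{j+1}$ defining the corresponding intensional predicate whose head node can be placed at a child of the root via a homomorphism consistent with the universal structure, we add the extended matching. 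Because a universal tree offers a child with every label in $\Sigma_0$, checking whether such a one-step extension is possible reduces to a local homomorphism test of a single constant-size pattern into the root-plus-children gadget, which is decidable in polynomial time in $|\Sigma_0|$ and $|\Ppr|$.

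The iteration reaches a fixed point because $S \subseteq \mathit{Match}(\Ppr)$ and the latter set is of exponential size, so there are at most exponentially many rounds, each doing exponentially much work of adding and testing matchings; hence the overall running time is exponential in $|\Ppr|$, as claimed. Correctness of the fixed point splits into soundness (every matching we add is genuinely realizable in a universal tree, which follows by induction on the extension step, using that adding the missing universal children never destroys an existing homomorphism) and completeness (every realizable root-anchored matching is eventually added, which follows by induction on the height of the subtree actually used by the witnessing homomorphism, again invoking the self-similarity of universal trees to peel off the top rule and recurse on a child). I expect the main obstacle to be the completeness direction: one has to argue carefully that an arbitrary witnessing homomorphism into a universal tree can be reorganized so that each successive rule of the proof branch is matched ``locally'' at successive levels, i.e.\ that the self-similar structure lets us assume the branch descends level by level, rather than matching variables in some tangled way across the tree. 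This is where the tree-shaped patterns (from the preprocessing in Lemma~\ref{lem:trees:automata}) and the every-label-child property of universal trees do the real work, and it is the step I would write out in full detail.
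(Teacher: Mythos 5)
Your overall strategy---a monotone fixed-point computation over the exponential-size set $\mathit{Match}(\Ppr)$, terminating after at most exponentially many rounds---is the right one, and your size and complexity accounting agrees with the paper's. But the specific decomposition you chose has a genuine gap, and it is exactly the one you flag at the end as ``the main obstacle.'' You extend a matching one rule of the proof word at a time, placing the head of the next rule ``at a child of the root'' and testing a local homomorphism into the root-plus-children gadget. This does not model how a proof branch of a linear program actually sits inside a universal tree: the head node of $r_{j+1}$ is identified with the intensional node of $r_j$, which after a few rules may be anywhere in the tree, not adjacent to the root; a single rule's pattern is a tree spanning up to $N$ levels with its head and intensional nodes positioned arbitrarily, so the walk traced by the branch can move \emph{up} as well as down, return to the root, and descend into a different child subtree. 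Your representation of a matching (a pair of boundary partial homomorphisms valued in $\Sigma_0$, anchored at the root) does not record where the current intensional node is or how much depth remains below it, so the ``local test at the root'' is testing the wrong neighbourhood. The repair you propose---reorganizing the witnessing homomorphism so that the branch descends level by level---is not available: the order of the rules in the proof word is fixed by the program's semantics and cannot be rearranged, and there genuinely are satisfiable proof words whose walk oscillates between levels.

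The paper's proof avoids this entirely by running the induction on the \emph{height of the universal tree} rather than on the length of the proof word. It computes the sets $\mathit{matched}(U^{a}_{n})$ for all $a \in \Sigma_0$ simultaneously, obtaining $\mathit{matched}(U^{a}_{n})$ by joining, at a root labelled $a$, partial matchings drawn from $\bigcup_{b \in \Sigma_0} \mathit{matched}(U^{b}_{n-1})$ and adding $\mathit{matched}(U^{a}_{n-1})$. Because each child subtree is itself a universal tree whose full set of root-anchored partial matchings is already known, an excursion of the branch into a child subtree---however tangled---is accounted for by one element of $\mathit{matched}(U^{b}_{n-1})$, and only the join at the root needs to be computed explicitly. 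Monotonicity $\mathit{matched}(U^{a}_{n-1}) \subseteq \mathit{matched}(U^{a}_{n})$ gives stabilization after at most $|\mathit{Match}(\Ppr)|$ rounds, each costing $O(|\mathit{Match}(\Ppr)|)$ time, hence \ExpTime{}. If you want to keep your rule-by-rule formulation, you would have to enrich the state with the position (at least the remaining depth) of the current intensional node and then prove a stabilization-in-depth lemma---at which point you have essentially reconstructed the paper's height-indexed iteration.
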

% \fm{gdzie te zasady sa przymatchowane? nie wiem czy nie trzeba wprowadzic definicji patternu, bo nie ma homomorfizmow z zasad w datalogu}
% \jo{A co to w ogóle znaczy, że "... are matched" albo "... are matched completely"? Nie widzę definicji pojęcia "matched".}
% \jo{Partial homomorphisms from ... - do czego są te homomorfizmy?}
% \jo{"partial matchings in the root" - nie zrozumiałe. Rozumiem, że chodzi o takie, że program przyczepia się w korzeniu, ale nie ma definicji tego.}
\begin{proof}
For a tree $t$ we will denote the set of partial matchings in the root of $t$ by $\mathit{matched}(t)$.
Observe that because every partial matching of $\Ppr$ in $U^{a}_{n-1}$ is also 
a partial matching of $\Ppr$ in $U^{a}_{n}$ then $\mathit{matched}$ is monotonic, i.e., $\mathit{matched}(U^{a}_{n-1}) \subseteq \mathit{matched}(U^{a}_{n})$.
This observation yields a simple algorithm. 
There are $|\Sigma_0|$ different universal trees of height $n$. 
To calculate $\mathit{matched}(U^{a}_{n})$ for each $a \in \Sigma_0$ it suffices to join partial matchings 
from $\bigcup_{b \in \Sigma_0} \mathit{matched}(U^{b}_{n-1})$ using the root node labeled with $a$ and add the previously calculated $\mathit{matched}(U^{a}_{n-1})$.
Note that if $\mathit{matched}(U^{a}_{n}) = \mathit{matched}(U^{a}_{n+1})$, then $\mathit{matched}(U^{a}_{n}) = \mathit{matched}(U^{a}_{m})$ for all $m > n$.
Therefore, the described procedure requires at most $|\mathit{Match}({\cal P})|$ steps to terminate, each step takes
$O(|\mathit{Match}({\cal P})|)$ time which gives an \ExpTime{} algorithm. \qed
\end{proof}
\subsection{Proof of the upper bound in Theorem \ref{thm:containment:cl}}\label{subs:upper}
Let $\Ppr$ be a $\cldatasucc$ program and let $\Qpr$ be a $\ucqsucc$. Our goal is to determine whether for all databases $D$ we have $\Ppr(D) \subseteq \Qpr(D)$. We solve the dual problem and look for a counterexample for the containment, i.e., a database $D$ and a node $X \in D$ such that $X \in \Ppr(D)$ but $X \not \in \Qpr(D)$. Moreover, we can assume that $D$ is a canonical model. Let $n$ be the size of the biggest conjunct in $\Qpr$. Since $\Qpr$ is nonrecursive and connected, to determine if $X \in \Qpr(D)$ it suffices to check the subtree of $D$ containing nodes of distance at most $n$ from~$X$.

We shall refer to $\Ppr$ as the positive query and to $\Qpr$ as the negative query. We define an automaton $\Aa = (Q, A, \delta, q_0, F)$ that essentially recognizes satisfiable proof words for $\Ppr$  simultaneously checking if the negative query is satisfied on the canonical model of the read word.
The alphabet $A = \{r_1, \dots, r_m\}$ is the set of rules of the program $\Ppr$.
% As we shall see the automaton will not read the whole proof word, only the parts that are ``close'' to $X$. For the remaining parts, that are ``far'' from $X$, we need the states of the automaton obtained in Proposition \ref{}.

% \fm{trzeba gdzieś dopisać że rozpatrujemy tylko spójne programy i UCQ}
% \fm{dopisac co to jest satisfiable proof word}

%The intuition is that letters (rules) create a proof word for $\Ppr$.

% For rules such that their nodes in the canonical model are far from $X$ we use $\epsilon$-transitions.

% \fm{chyba trzeba jakoś lepiej napisać o tym co to znaczy że wierzchołki jakiejś zasady są daleko od $X$}

We define the set of states $Q$ as a cartesian product of three components, i.e., $Q = Q_1 \times Q_2 \times Q_3$. We describe each component separately. Recall that  $\Sigma_{\Ppr}$ denotes the set of constants used explicitly in rules of program $\Ppr$. Let $N$ be the size of the biggest rule in $\Ppr$. Let $B_1$ be an alphabet of $N$ different letters and let $B_2$ be an alphabet of $2n+1$ letters, disjoint from $B_1$.

In the first component $Q_1$ the automaton stores a labeled pattern corresponding to the currently read letter (rule). Formally, 
\[
Q_1 = \sum_{r \in A} (\Sigma_{\Ppr} \cup B_1 \cup B_2)^{\pi_r}.
\]
We identify the pattern $\pi_r$ with the set of its nodes, thus $Q_1$ is the set of patterns, whose nodes are labeled with elements of the set $\Sigma_{\Ppr} \cup B_1 \cup B_2$. The intended meaning of $B_1$ and $B_2$ will be explained later.

In the second component $Q_2$ %is the path starting from the current $\OUT$ node $v$ of nodes above that are of distance at most $2n$ from $X$. 
the automaton stores a word $w$ of length at most $2n+1$ and its position compared to the node $X$. Formally
\[
 Q_2 = \sum_{\substack{1 \leq i \leq 2n+1 \\ 0 \leq k,l \leq n}} \; (\Sigma_{\Ppr} \cup B_1 \cup B_2)^{[i]} \times (k,l),
\]
where $[i] = \{1, \dots, i\}$ and $(\Sigma_{\Ppr} \cup B_1 \cup B_2)^{[i]}$ is the set of words of length $i$ with labels from $\Sigma_{\Ppr} \cup B_1 \cup B_2$. This word is a representation of an ancestor-path starting from the intensional node $v_i$ of the current pattern stored in $Q_1$. This is necessary to verify if the proof word is satisfiable. The ancestor path could be arbitrary long but, as we will see, we only need to remember nodes that are of distance at most $n$ from $X$ (there are at most $2n+1$ such nodes). Additionally the automaton remembers how this path lays compared to $X$. For this it stores a pair of numbers $(k,l)$ such that $0 \leq k,l \leq n$. Let $v_a$ be the least common ancestor of $X$ and $v_i$. The number $k$ denotes the distance between $X$ and $v_a$, and the number $l$ denotes the distance between $v_a$ and $v_i$. Note that $k+l$ is the distance between $v_i$ and $X$. Also if $k = 0$ then $v_i$ is a descendant of $X$, and if $l = 0$ then $X$ is a descendant of $v_i$.

% \fm{current czy previous $\OUT$ node?, rysunek zdecydowanie}

The last component $Q_3$ is the set of partial homomorphisms of the patterns corresponding to CQs from the negative query $\Qpr$. Let $w$ be the word stored in the second component and let $A_\Qpr$  be the set of all conjuncts $\phi$ from $\Qpr$. Formally, $Q_3 = \sum_{\phi \in A_\Qpr} F_\phi$ where $F_\phi$ is the set of all partial functions from $\pi_\phi$ to $\Sigma_{\Ppr} \cup B_1 \cup B_2 \cup \{\flat\} \cup \{w_1,\dots,w_{|w|} \}$. The interpretation of the labels will be explained later.

% \fm{trzeba w preliminaries wymyślić jakąś dobrą notacje. Ja bym chciał, żeby reguły/CQ miały w sobie tego $X$, bo to tak powinno byc w tych unarnych zawieraniach}

We now define the transition relation $\delta$. Suppose that the automaton reads a new letter $r$. Let $q = (q_1, q_2, q_3) \in Q_1 \times Q_2 \times Q_3$ be the previous state. We show how the automaton calculates its new state $q' = (q_1', q_2', q_3')$.

In the first component the automaton starts from checking if the rule $r$ is proper for the intensional predicate in the previous rule; or if it is the first letter then the automaton checks if it is the goal predicate. If none of these cases holds then the automaton immediately rejects the word. Otherwise it labels $\pi_r$ in two phases. In the first phase it labels its head node $v_h$ with the same label that the intensional node in $q_1$ has. Also the labels of the nodes that are ancestors of $v_h$ must match the corresponding labels from the path in $q_2$. Then the automaton labels nodes that have an explicit label from $\Sigma_{\Ppr}$. In the second phase the automaton guesses the remaining labels from $\Sigma_{\Ppr} \cup B_1 \cup B_2$ respecting the $\sim$ relation. If there is a node on which $\sim$ forces two different labels, then the automaton rejects the word. This way we use a small alphabet to represent an arbitrary large set of labels. If in the state $q'$ we use a label that is also used in 
the state $q$ but $\sim$ does not force them to be the same then we assume that in the canonical model they are different labels.

% \fm{nie wiem czy dodawać stan śmietnik, czy lepiej pisać reject word, także notacja $\IN(t_r)$ do uzgodnienia}

In the second component the automaton updates first the pair $(k,l)$ so that it agrees with the location of the new intensional node with respect to $X$. Then it creates a new ancestor-path whose labels have to agree with the labels of the old path in $q_2$, and the labels of those nodes in $q_1'$ that are ancestors of $v_i$. The case when the distance of the new intensional node to $X$ is bigger than $n$ is explained later.% The remaining nodes get fresh labels from $B_1 \cup B_2$, i.e., different from the ones chosen in the first component. Notice that the size of $B_1$ and $B_2$ suffices to require fresh labels. 

In the last component the automaton starts from updating the old partial functions. All labels that appeared in $q_1$ but were not used in the first phase are replaced with $\flat$. The intended meaning is that these labels no longer appear in the model. Actually this is where we use the crucial feature of the canonical models. Since we use fresh labels whenever it is possible the automaton can forget all labels that will no longer appear.

% \fm{Może tę uwagę dopisać później też przy definicji kanonicznych modeli? Żeby zwrócić uwagę że dzięki niej możemy korzystać ze skończonego alfabetu (troche to śmieszne że używamy w modelu jak najwięcej etykiet, żeby w automacie było ich mało)}

The automaton forgets all partial homomorphisms that have unmapped nodes such that their label is forced by $\sim$ to be equal to a node labelled by $\flat$. This is because such homomorphisms can never be fulfilled. Then the automaton extends the remaining homomorphisms with new nodes from $q_1'$. The label $w_i$ denotes the fact that the node was mapped to the corresponding node from the path in $q_2'$. The next step is to relabel the partial homomorphisms so that they agree with the new path. This way the automaton knows where it can extend the homomorphisms. Note that if there is a partial homomorphism without any $w_i$ then it can be discarded because it cannot be extended. If at any time one of the homomorphisms becomes a full homomorphism then the automaton rejects the word.

% For rules that their intensional nodes are of distance bigger than $n$ from $X$ we use the satisfiability results.
% Intuitively all nodes that are of distance bigger than $n$ have no impact on the negative query so the automaton works only with the satisfiability problem of the positive query. The details are included in the appendix.
So far we explained the behavior for the letters in the proof word that have the intensional node of distance at most $n$ from $X$. This is of course not the only possible case, but we already noticed that nodes of bigger distance have no impact on the negative query. Because of this now we can use the results for the satisfiability problem. Suppose that the automaton reads a letter $r$ such that its intensional node is of distance bigger than $n$ from $X$.
The automaton updates the third component of its state in the usual way and rejects the word if a full homomorphism is found. Let $v$ be the ancestor of the intensional node in $r$ such that $v$ is of distance $n$ from $X$. The automaton assumes that there is a universal tree $t_v$ over the alphabet $\Sigma_{\Ppr} \cup B_1 \cup B_2$ (see Definition \ref{def:universal_tree}) below $v$. It calculates the set $\mathit{matched}(t_v)$ and finds all matchings that have the rule $r$ as the first rule with the node $v$ in the root. The automaton chooses one of the matchings but the last rule $r'$ can also have the intensional node below $v$. Then it proceeds with $r'$ as it did with $r$. Eventually the automaton guesses a matching such that the intensional node of the last rule $r''$ is of distance at most $n$ from $X$. Then it stores $r''$ in $q_1'$ and updates the other states in the usual way. If instead of the last rule there is $\mathsf{OK}$ then the automaton accepts the word.

Notice that the node $v$ could not exist. This happens when the least common ancestor of $X$ and the intensional node of $r$ is of distance bigger than $n$ from $X$. If such a situation occurs then, since we assumed that we work on canonical models, all nodes from the next rules will be of distance bigger than $n$ from $X$. Thus it suffices to check satisfiability starting from the rule $r$.

% Notice that we can restrict to the satisfiability problem. This is because the downward path is not unique and all new nodes in the canonical model will be of distance bigger than $n$.

We slightly modified the canonical models using universal trees. For the positive program we showed in Lemma \ref{lemma:universal} that we can use universal trees; and for the negative program we assured that the changes are on nodes that are of distance bigger than $n$ from $X$. 

The constructed automaton is non-empty iff there is a canonical model for $\Ppr \land \neg \Qpr$. We need to bound the size of the set of states. In the first component every labelled rule $(B_1 \cup B_2 \cup \Sigma_{\Ppr})^{\pi_r}$ is of exponential size in $|\Ppr|$ and the number of rules is bounded by the size of $\Ppr$. The second component is a set of triples: two numbers and a word of size at most $2n+1$, which is exponential in the size of $\Ppr, \Qpr$. The third component is the powerset of all partial homomorphisms which is double exponential in the size of $\Ppr$ and $\Qpr$. Thus the whole automaton is bounded double exponentially. However, its states and transitions can be generated on the fly in exponential space. To check its emptiness we make a simple reachability test, which is in \NLogSpace{}. We use the results about satisfiability to generate all transitions, but by Proposition \ref{prop:sat:cl} this can be done in \ExpTime. Altogether, this gives an algorithm in \ExpSpace.

% \section{Lower bounds}
\subsection{Proof of the lower bound in Theorem \ref{thm:containment:cl}}\label{subs:lower}
%\begin{proposition}
%\label{prop:expspace-hardness}
% The containment problem for $\cldatasucc$ is \ExpSpace-hard.
%\end{proposition}
%\begin{proof}
We consider the satisfiability problem of $\cal P
 \wedge \neg \cal Q$, where $\cal P \in \cldatasucc$ and $\cal Q \in \ucqsucc$. To prove hardness, for a number $n$ and a Turing machine $M$, we construct datalog programs $\cal P$ and $\cal Q$ of size polynomial in $|M|$ and $n$ such that $\cal P
 \wedge \neg \cal Q$ is satisfiable iff $M$ accepts the empty word using not more than $2^n$ tape cells. The program $\cal P$ will encode the run of the machine,
and the program $\cal Q$ will ensure its correctness.

Assume that $B$ is the tape alphabet of $M$, $Q$ is the set of states, $F$ is the set of accepting states and $\delta$ is the transition relation. The finite alphabet used by the programs will contain sets $B$ and $B \times Q$. The symbols from $B \times Q$ will be used to mark the position of the head on the tape and the state of the machine.
%The alphabet used by the program will contain $Q$ and two copies of $B$, 
%denoted $B$ and $\widehat{B}$.
%(symbols from $\widehat{\Sigma}$ are denoted with $\widehat{\phantom{x}}$)
%The symbols from $\widehat{B}$ will be used to mark the position of the head on the tape, 
%so $\widehat{a}$ means that the machine's head is over symbol $a$. We store the state of $M$ in the node below the node labeled with $\widehat{\phantom{x}}$.

%\fm{moze powinnismy w preliminaries wydzielic skonczona czesc alfabetu i napisac ze to jest wlasnie to, a pozniej np jak jest ze etykieta oznacza konfiguracje, to napisac ze ta etykieta jest z nieskonczonej czesci}

We now define the rules of the positive program $\cal P$. The program starts in a node labeled with $\top$. We encode each configuration of $M$ (the current state and the tape contents) by enforcing a full binary tree of hight $n$. For this we need the alphabet $\Sigma_{\cal P}$ to contain the set $\sum_{1\leq i \leq n} \{ (L,i), (R,i) \}$. The predicates $(L,i)$ and $(R,i)$ denote the left and right son of the previous node, respectively. The tape is encoded in the nodes below the leafs of the tree. The label of the node above the root of the tree is used as an identificator of the encoded configuration. We will refer to it as an \emph{identification node}.

%\fm{moze napisac ze L i R to left i right, a i to wysokosc w drzewie?}

The goal rule is 
\begin{align*}
G_{\cal P}(X) & \leftarrow \top(X), X \succ Y, Init(Y), Y \succ Z, conf(Z).
\end{align*}
It means that the encoding of the initial configuration of the machine, which identification node is labeled with $Init$, is stored in the tree (note that $Init$ belongs to $\Sigma_{\cal P}$).
The program will then traverse the configuration trees one by one in an infix order.
%\fm{tu bym dopisal ze init jest z tego skonczonego alfabetu}
\begin{align*}
conf(X) & \leftarrow  X\succ Y, (L,1)(Y), downleft^1(Y) \\
downleft^i (X) & \leftarrow  X\succ Y, (L,i+1)(Y), downleft^{i+1}(Y) \\
&i=1, \dots,n-1 \\
downleft^n(X) & \leftarrow X\succ Y, store(Y) \\
store(X) & \leftarrow a(X), Y \succ X, (L,n)(Y), upleft^n(X)\\
store(X) & \leftarrow a(X), Y \succ X, (R,n)(Y), upright^n(X)\\
&\text{for every symbol $a \in B \cup (B \times Q)$} \\
upleft^i(X) & \leftarrow Y\succ X, downright^{i-1}(Y) \\
&i=1,\dots,n \\
% ul^0_q(X) & \leftarrow  \\
downright^i (X) & \leftarrow X\succ Y, (R,i+1)(Y), downleft^{i+1}(Y) \\
& i=0,\dots,n-1 \\
% downright^n(X) & \leftarrow X\succ Y, store(Y) & \\
upright^i(X) & \leftarrow Y\succ X, (R,i-1)(Y), upright^{i-1}(Y) \\ 
&i=2,\dots,n \\
upright^i(X) & \leftarrow Y\succ X, (L,i-1)(Y), upleft^{i-1}(Y) \\ 
&i=2,\dots,n \\
upright^1(X) & \leftarrow Y\succ X, next(Y).
\end{align*}
Observe that when we reach $downleft^n$ we stop traversing the tree and the program uses the rule $store$ to write the content of the tape. That is why there is no rule $downright^n$.

The program finishes traversing the tree in $next$ and goes to the next configuration of the machine. We ensure that the identification node of the next configuration has the same label as the root of the tree which encodes the previous one. This will enable the negative program to check the correctness of the encoding.
\begin{align*}
next(X) & \leftarrow Y \succ X, Z \succ Y, Z \succ \hat Y, \hat Y \sim X, \hat Y \succ \hat X, conf(\hat X).
\end{align*}

We finish when we find an accepting state. That is for every letter $a \in B$ and every $q \in F$ we have two non-recursive rules
\begin{align*}
store(X) & \leftarrow (a, q)(X), Y \succ X,  (L,n)(Y) \\
store(X) & \leftarrow (a, q)(X), Y \succ X,  (R,n)(Y).
\end{align*}

Now let us define the rules of the negative program $\cal Q$, which will be a disjunction of queries describing possible errors in the encoding. The content of the tape has to be defined uniquely. Hence, for each pair of different symbols $a$ and $b$ from $B \cup (B \times Q)$ we have a rule
\begin{align*}
G_{\cal Q}(X) \leftarrow & \top(X), X \succ Y, Y \succ X_0, X_0 \succ X_1, X_1 \succ X_2, \ldots, X_n \succ  Z_1, X_n \succ Z_2, a(Z_1), b(Z_2).
\end{align*}
%Similarly, we have a rule ensuring that in each configuration tree there is at most  one cell of the tape marked as being the position of the head of the machine. That is, for each pair of different symbols $a$ and $b$ from $B \times Q$ there is a rule
%\begin{align*}
%Q(X) \leftarrow & \top(X), X \succ Y, Y \succ X_0, X_0 \succ X_1, X_0 \succ X'_1, \ldots, \\
%& \ldots, X_{n-1} \succ X_n, X'_{n-1} \succ X'_n, X_n \succ Z, X'_n \succ Z', a(Z), b(Z').
%\end{align*}

%\fm{ja bym rozwazyl jakas notacje w stylu $\succ_n$ i podobną na oznaczenie dwoch sciezek glebokosci $n$ o tych samych etykietach na sciezce}

We cannot ensure that each configuration tree has its identification node labeled differently, but we can guarantee that trees with the same labels of the identification nodes encode the same configurations. For each pair of different symbols $a$ and $b$ from $B \cup (B \times Q)$ we introduce a rule
\begin{align*}
G_{\cal Q}(X) \leftarrow & \top(X), X \succ Y, X \succ \hat Y, Y \sim \hat Y, Y \succ X_0, \hat Y \succ \hat X_0, X_0 \succ X_1,  \hat X_0 \succ \hat X_1, X_1 \sim \hat X_1, \ldots, \\
& \ldots, X_{n-1} \succ X_n, \hat X_{n-1} \succ \hat X_n, X_n \sim \hat X_n, X_n \succ Z, \hat X_n \succ \hat Z, a(Z), b(\hat Z).
\end{align*}
We can also easily enforce that the configuration tree labeled with $Init$ encodes the initial configuration of the machine with an empty word stored on the tape.

Finally we have to make sure that the way the positive program $\cal P$ moves from one configuration to another is consistent with the transition function of the machine. To do this we consider changes in the content of any three consecutive tape cells, i.e., we take all tuples $(a_1, a_2, a_3, b_1, b_2, b_3)$ of symbols from $B \cup (B \times Q)$, such that: if $a_1, a_2, a_3$ encode a content of three consecutive tape cells $i, i+1, i+2$, respectively, then it is not possible for the machine to have $b_1, b_2, b_3$ on those positions in the next configuration.
%\fm{ja bym dopisał zdanie o tym ze taka relacja ktora sprawdza czy te trojki sie zgadzaja wystarcza zeby sprawdzic poprawnosc przejsc, przy okazji punkt 1 chyba do wyrzucenia}
For each of those tuples there is a set of $2(n-1)$ rules in $\cal Q$. The rules are constructed depending on the least common ancestor of the three leafs which encode the consecutive tape cells. We write them down for $n=3$. There are two rules that deal with the case when the least common ancestor is the root of the tree
\begin{align*}
G_{\cal Q}(X) \leftarrow & \top(X), X \succ Y,  Y \succ X_0, X \succ \hat Y, X_0 \sim \hat Y, \\
& X_0 \succ X_1 \succ X_2 \succ X_3, X_0 \succ X'_1 \succ X'_2 \succ X'_3, X_0 \succ X''_1 \succ X''_2 \succ X''_3, \\
& (L,1)(X_1), (R,2)(X_2), (L,3)(X_3), \\ 
& (L,1)(X'_1), (R,2)(X'_2), (R,3)(X'_3), \\ 
& (R,1)(X''_1), (L,2)(X''_2), (L,3)(X''_3), \\
& Y \succ \hat X_0, \hat X_0 \succ \hat X_1 \succ \hat X_2 \succ \hat X_3, \hat X_0 \succ \hat X'_1 \succ \hat X'_2 \succ \hat X'_3, \hat X_0 \succ \hat X''_1 \succ \hat X''_2 \succ \hat X''_3, \\
& X_1 \sim \hat X_1, X'_1 \sim \hat X'_1, \ldots, X''_3 \sim \hat X''_3, \\
& X_3 \succ Z_1, a_1(Z_1), X'_3 \succ Z_2, a_2(Z_2), X''_3 \succ Z_3, a_3(Z_3), \\
& \hat X_3 \succ \hat Z_1, b_1(\hat Z_1), \hat X'_3 \succ \hat Z_2, b_2(Z_2), \hat X''_3 \succ \hat Z_3, b_3(\hat Z_3) \\
G_{\cal Q}(X) \leftarrow & \top(X), X \succ Y,  Y \succ X_0, X \succ \hat Y, X_0 \sim \hat Y, \\
& X_0 \succ X_1 \succ X_2 \succ X_3, X_0 \succ X'_1 \succ X'_2 \succ X'_3, X_0 \succ X''_1 \succ X''_2 \succ X''_3, \\
& (L,1)(X_1), (R,2)(X_2), (R,3)(X_3), \\ 
& (R,1)(X'_1), (L,2)(X'_2), (L,3)(X'_3), \\ 
& (R,1)(X''_1), (L,2)(X''_2), (R,3)(X''_3), \\
& Y \succ \hat X_0, \hat X_0 \succ \hat X_1 \succ \hat X_2 \succ \hat X_3, \hat X_0 \succ \hat X'_1 \succ \hat X'_2 \succ \hat X'_3, \hat X_0 \succ \hat X''_1 \succ \hat X''_2 \succ \hat X''_3, \\
& X_1 \sim \hat X_1, X'_1 \sim \hat X'_1, \ldots, X''_3 \sim \hat X''_3, \\
& X_3 \succ Z_1, a_1(Z_1), X'_3 \succ Z_2, a_2(Z_2), X''_3 \succ Z_3, a_3(Z_3), \\
& \hat X_3 \succ \hat Z_1, b_1(\hat Z_1), \hat X'_3 \succ \hat Z_2, b_2(Z_2), \hat X''_3 \succ \hat Z_3, b_3(\hat Z_3).
\end{align*}
And there are another two rules to deal with the case when the least common ancestor is labeled with $(L,1)$ or $(R,1)$
\begin{align*}
G_{\cal Q}(X) \leftarrow & \top(X), X \succ Y,  Y \succ X_0, X \succ \hat Y, X_0 \sim \hat Y, \\
& X_0 \succ X_1, X_1 \succ X_2 \succ X_3, X_1 \succ X'_2 \succ X'_3, X_1 \succ X''_2 \succ X''_3, \\
& (L,2)(X_2), (L,3)(X_3), \\ 
& (L,2)(X'_2), (R,3)(X'_3), \\ 
& (R,2)(X''_2), (L,3)(X''_3), \\
& Y \succ \hat X_0 \succ \hat X_1, \hat X_1 \succ \hat X_2 \succ \hat X_3, \hat X_1 \succ \hat X'_2 \succ \hat X'_3, \hat X_1 \succ \hat X''_2 \succ \hat X''_3, \\
& X_1 \sim \hat X_1, X_2 \sim \hat X_2, \ldots, X''_3 \sim \hat X''_3, \\
& X_3 \succ Z_1, a_1(Z_1), X'_3 \succ Z_2, a_2(Z_2), X''_3 \succ Z_3, a_3(Z_3), \\
& \hat X_3 \succ \hat Z_1, b_1(\hat Z_1), \hat X'_3 \succ \hat Z_2, b_2(Z_2), \hat X''_3 \succ \hat Z_3, b_3(\hat Z_3) \\
G_{\cal Q}(X) \leftarrow & \top(X), X \succ Y,  Y \succ X_0, X \succ \hat Y, X_0 \sim \hat Y, \\
& X_0 \succ X_1, X_1 \succ X_2 \succ X_3, X_1 \succ X'_2 \succ X'_3, X_1 \succ X''_2 \succ X''_3, \\
& (L,2)(X_2), (R,3)(X_3), \\ 
& (R,2)(X'_2), (L,3)(X'_3), \\ 
& (R,2)(X''_2), (R,3)(X''_3), \\
& Y \succ \hat X_0 \succ \hat X_1, \hat X_1 \succ \hat X_2 \succ \hat X_3, \hat X_1 \succ \hat X'_2 \succ \hat X'_3, \hat X_1 \succ \hat X''_2 \succ \hat X''_3, \\
& X_1 \sim \hat X_1, X_2 \sim \hat X_2, \ldots, X''_3 \sim \hat X''_3, \\
& X_3 \succ Z_1, a_1(Z_1), X'_3 \succ Z_2, a_2(Z_2), X''_3 \succ Z_3, a_3(Z_3), \\
& \hat X_3 \succ \hat Z_1, b_1(\hat Z_1), \hat X'_3 \succ \hat Z_2, b_2(Z_2), \hat X''_3 \succ \hat Z_3, b_3(\hat Z_3).
\end{align*}
%\jo{Napisać/narysować jaki kształt ma to kodowanie?}
%\jo{Nadal nie jestem pewna, czy muszę zapewniać, że drzewa z tym samym "identification node" wyglądają tak samo.}
%\jo{Te ostatenie zasady napisałam inaczej: dużo strzałek pod rząd. Może wcześniejsze też tak zmienić?}
\qed
%\end{proof}

\subsection{Proof of Lemma \ref{lemma:ucq_in_datalog}}\label{subs:lemma}
Take programs $\cal P \in \datasucc$ and $\cal Q \in \ucqsucc$. 
For every query $\phi$ in $\cal Q$ consider the pattern~$\pi_\phi$. 
Each of these patterns corresponds to a tree $t_{\phi}$ which is unique up to renaming of labels that are not explicitly mentioned by $\cal Q$.
Additionally, $t_\phi$ has one marked node $X$ corresponding to the head node of $\phi$.
It remains to check if ${\cal P}(X)$ holds for each of these trees. 
% The programs $\cal P$ and $\cal Q$ have a fixed number of intensional predicates and the size of databases under consideration is bounded (by the size of the biggest query in $\cal Q$). Therefore, the height of the proof trees we need to consider is bounded (polynomial in the size of $\cal P$ and $\cal Q$). The branching of the proof trees is also linearly bounded in the size of $\Ppr$.
It is well known that the combined complexity of monadic programs is \NPTime-complete. For each $t_{\phi}$ it suffices to guess the proof tree and verify the correctness of the guess.

\section{Boundedness}\label{app:boundedness}

\begin{proof}[of Proposition~\ref{prop:bound_ucq}]
The 'only if' part is obvious. For the 'if' part, suppose that a datalog program $\cal P$ is equivalent to a union of conjunctive queries ${\cal Q}$. For every rule $\phi$ of $\cal Q$ consider a pattern~$\pi_\phi$. With each of these patterns we associate a set of trees: the possible homomorphic images of $\pi_{\phi}$. Up to renaming of the labels which are not explicitly mentioned by $\cal Q$ there are finitely many such trees (this is because $\phi$ is connected and does not use the relation~$\desc$). We evaluate the program $\cal P$ on each of these trees and take $n$ to be the biggest number of applications of the rules in $\cal P$ that we need.
Now let $t$ be any tree. We will show that ${\cal P}(t) = {\cal P}^n(t)$. To this end, consider a node $X$ of $t$ such that ${\cal P}(X)$. Since the programs $\cal P$ and $\cal Q$ are equivalent, ${\cal Q}(X)$ also holds. This means that for some CQ $\phi$ of $\cal Q$ there is a witnessing homomorphism $h$ from $\pi_\phi$ to $t$. Thus, we need at most $n$ applications of the rules in $\cal P$ to derive ${\cal P}(X)$, because $h(\pi_\phi)$ is a fragment of $t$. \qed
\end{proof}

\subsection{Undecidability of the boundedness problem in general}\label{app:undecidability}
%In this section we prove Theorem~\ref{thm:boundedness:undec}.

\begin{proof}[of Theorem~\ref{thm:boundedness:undec}]
We will reduce the following problem:
given a  Turing machine $M$, are there arbitrary long runs of $M$ that start from an empty tape and end in the halting state (denoted HALT).
This problem is undecidable, because for a machine $M$, for every transition of $M$ that goes from state $q$ seeing symbol $a$ on tape to HALT state, 
we add another transition that stays in the state $q$ after reading $a$ and does not change the position of $M$'s head.
Thus, if $M$ had a run that halted, modified $M$ has arbitrary long halting runs.

Let $M$ be a Turing machine. We can assume without loss of generality that $M$ has one tape, semi-infinite to the right.
We will construct two programs, $P$ and $Q$. Program $P$ will find the encoding of the run of $M$ on an empty input in the tree and $Q$ will detect errors in the encoding.
The $Q$ program will be equivalent to a union of an UCQ.
Moreover, we will ensure that for every correct run of $M$, there is only one corresponding encoding.
Our program $P_{M}$ will be an alternative of $P$ and $Q$:
\begin{align*}
 P_M(X) :- P(X) \\
P_M(X) :- Q(X)
\end{align*}
If a tree contains an error in the encoding, $P_{M}$ will 
hold for every node of the tree in just 3 steps of the computation, because $Q$ qill be equivalent to an UCQ.
The constructed program will be \emph{not bounded} if and only if $M$ has arbitrary long halting runs.

The run of $M$ will be encoded as a word describing consecutive configurations. Configurations will be separated by \# symbols. The beginning of the encoding will be a $\mathsf{START}$ symbol and the end will be denoted by $\mathsf{END}$. Each position on the tape will be encoded by 4 consecutive nodes, $R-N-C-T$ where $R$ will denote row number, 
$N$ the number of the next row, $C$ the column number and $T$ the encoded tape symbol. 
$s$ will be marked with $0$ or $1$ denoting if the head of $M$ is in this position.
Because we consider trees, the encoding will be placed in the tree from some node upwards to the root. 
This way, the program will have only one path on which it can match. 
Otherwise (that is, going downwards in the tree) the correctness of the encoding cannot be guaranteed.

For each transition $\tau$ of $M$, there will be a set of rules verifying that the two consecutive encoded configurations of $M$ are 
consistent with $\tau$. Single rules will verify that the contents of the tape are copied/changed correctly between the configurations.
To ensure that, the rule will look at each 3 consecutive positions. 
For each triple of tape symbols, there will be rule that matches 3 positions encoding those tape symbols.
A rule $P_{\tau, a_1,a_2,a_3}^{i} (X)$ is true in $X$ if 3 positions described directly above $X$ contain symbols $a_1,a_2,a_3$ and 
the symbol in the next configuration in the same position as $a_2$ is also consistent with $\tau$. If the head of tape, this symbol should just be copied, 
but if head of $M$ is in the position with $a_1,a_2 or a_3$ the symbol can change between configurations.
The $i=1$ if the head of the tape was already seen in this configuration, $0$ otherwise.
For example, for a position where the head has not been seen in this configuration and there is no head in the inspected positions:
\begin{align*}
P_{\tau, (0,s_1), (0,s_2),(0,s_3)}^{0} (R_1) :- \\
&T_3 \succ C_3 \succ N_3 \succ R_3 \succ T_2  \succ C_2 \succ N_2 \succ R_2 \succ T_1 \succ C_1 \succ N_1 \succ R_1 \\   
&T_5 \succ C_5 \succ N_5 \succ R_5 \succ T_4 \succ C_4 \succ N_4 \succ R_4 \desc T_3 \\
&(0,s_1)(T_1), (0,s_2)(T_2), (0,s_3)(T_3), (0,s_2)(T_5) \\
&R_4 \sim R_5, R_4 \sim N_1, C_5 \sim C_2, C_4 \sim C_1 \\
&N_1 \sim N_2 \sim N_3, R_1 \sim R_2 \sim R_3, R_4 \sim R_5 \\
&P_{\tau, (0,s_2), (0,s_3), (i,s_4)}^{0} (R_1) 
\end{align*}
There will be such rule for any possible tape symbol $(i,s_4)$.
A quadruple $R_i, N_i, C_i, T_i$ of variables describes one position of the tape, in the configuration $R_i$, with next configuration $N_i$ and in column $C_i$.
The symbol stored in this position is $T_i$. 
Additionally, there will be  rules for changing rows, that checks two last positions before the \# and ensures that the next row is either the same length as the previous one or one position longer (that is, has 4 more nodes), depending on the movement of the head.
There will be also rules for the final row of the encoding (that is after reaching halting state), $P_{fin}$.
$P_{fin}$ will just go to the last \#, and $P$ will be true in the root of the tree (with $\mathsf{END}$ label) if $P_{fin}$ is matched in the last \#:
  
% For example, the rule for the changing rows when the head of the machine does not move right from the rightmost visited position of tape.
% \begin{align*}
%  P_{\tau, s} (X) :- Y_5 \succ Y_4 \succ Y_3 \succ Y_2 \succ Y_1 \desc X_4, \\
% X_8 \succ X_7 \succ X_6 \succ X_5 \succ X_4 \succ X_3 \succ X_2 \succ X_1 \succ  X, \\
% \#(X_4), \#(Y_5), \sim(X_2,Y_3),  
% \end{align*}
% Similar rule needs to be constructed for the case when the head moves form the rightmost seen position to the right and when the head of 
% $M$ is in this position.
The program $Q$ is given below, where $Q_{err}$ is an alternative of all possible errors in the encoding.
\begin{align}
Q(X) :- Y \desc X, Y \desc Z , Q_{err} (Z)  \\
Q(X) :- Q_{err}(X) \\
Q(X) :- X \desc Y, Q_{err}(Y)
\end{align}
Note the necessity of this triple alternative as $\desc$ is a proper descendant relation, that is $X \desc X$ does not hold.
This way, $Q$ holds in every node of the tree if $Q_{err}$ is found anywhere.
The possible errors are
\begin{enumerate}
\item \# or tape symbol appearing on the wrong position, for example detecting symbol $(0,s)$ used as a colum number
\begin{align*}
Q_{err}(X) :- \#(X), X_3 \succ X_2 \succ X_1 \succ X, (0,s)(X_3) \\
Q_{err}(X) :- \#(X), X_3 \succ X_2 \succ X_1 \desc Y \succ X, \sim (Y,X_1), (0,s)(X_3)
\end{align*}
Similarly such rules can be constructed for next row, row and \# used a tape symbol.
\item two consecutive \# symbols, detected by $Q_{err}(X) :- \#(X), X \succ Y, \#(Y)$. 
\item any node appears above the $\mathsf{END}$, detected by $Q_{err}(X) :- \mathsf{END}(X), Y \succ X$
\item any node appears below the $\mathsf{START}$, detected by $Q_{err}(X) :- \mathsf{START(X)}, X \succ Y$
\item row number used in two different rows, detected by 
\begin {equation*}
Q_{err}(X) :- \# (X), Y_1\succ X, Y_2 \desc Y_1, \# (Y_2), Z \succ Y_2, Z \sim Y_1 
\end {equation*}
\item the same column number twice in one row, detected by 
\begin{align*}
Q_{err}(X) :- \#(X), Z_3\succ Z_2 \succ Z_1 \desc Y_3 \succ Y_2 \succ Y_1 \succ X \\
Y_1 \sim Z_1, Y_3 \sim Z_3
\end{align*}
The last program works only if every row has distinct row number, which is ensured by previous rule.
\end{enumerate}
It is easy to see that $P_M$ is matched in every node of any tree that contains one of described errors, 
and in the root node of those databases that contain correct encoding of a halting run of $M$. Moreover, the computation of $P_M$ in those databases takes number of steps linearly proportional to the length of the encoding. Therefore, $P_M$ is unbounded if and only if $M$ has the arbitrary long halting run property. \qed
\end{proof}

\subsection{Boundedness on words and ranked trees}\label{app:ranked}
% 
% \jo{A MOŻE NAPISAĆ POD TYM TWIERDZENIEM KOMENTARZ, ŻE JAK WYNIKA Z PRZYKŁADU NR COŚ TAM (TEGO KTÓRY JA NAPISAŁAM POWYŻEJ), JEŚLI MAMY POTOMKA, TO BOUNDEDNESS TO NIE TO SAMO CO ROWNOWAŻNOŚĆ UCQ. WIĘC TAK NAPRAWDĘ ODPOWIADAMY TU NA PROCHĘ INNE PYTANIE... CZY COS W TYM STYLU. ZEBY SIE JAKOS WYTLUMACZYC Z TEJ SCIEMY KTÓRĄ PISZEMY W INTRODUCTION...}
% 
% \jo{Trzeba tak zrobić, żeby to twierdzenie nie miało dwóch osobnych numerów. W apendiksie ma też numer ;) }

\fm{Trzeba wytłumaczyc co to jest |t| i subtree to chyba nie jest dobra nazwa, raczej connected component czy jakos tak}

\begin{proof}[of Lemma~\ref{lem:bounded}]
One implication is immediate. If $\cal P$ is bounded then it is equivalent to a union of conjunctive 
queries $\cal Q$. The queries are connected so we can take $n$ to be the size of the biggest query in $\cal Q$.    

For the other implication, let us assume that $\cal P$ satisfies the condition: 
\begin{itemize}
\item there exists $n > 0$ such that for every word $w$ and position $X$ if $X \in {\cal P}(w)$ then $X \in {\cal P}(v)$, where $v$ is the $n$-neighbourhood of $X$ in $w$
\end{itemize}
with $n = n_0$. We will construct a union of conjunctive queries $\cal Q$ equivalent to $\cal P$. Recall that $\Sigma_{\cal P}$ denotes the set of labels that appear in the rules of program $\cal P$. Let us consider all words of length smaller or equal $2n_0 + 1$ and treat them as structures over the signature $\{\succ, \sim\} \cup \Sigma_{\cal P}$. These words have finitely many equality types. For each word $v$ that satisfies ${\cal P}$ we add to $\cal Q$ a query which defines the equality type of $v$.
It remains to show that $\cal P$ and $\cal Q$ are equivalent. The containment of $\cal Q$ in $\cal P$ is straightforward from the construction of $\cal Q$. Take a word $w$ and position $X$ such that $X \in {\cal P}(w)$. Then $X \in {\cal P}(v)$, where $v$ is the $n$-neighbourhood of $X$ in $w$. Since $v$ is a word of lenght at most $2n_0 +1$ it follows that $X \in {\cal Q}(v)$, and hence $X \in {\cal Q}(w)$. \qed
\end{proof}

We now move to the case of trees. 
%The same technique can be applied in this setting but the complexities increase.
%\begin{theorem}
% \label{thm:boundedness:cl}
%The boundedness problem for $\datasucc$ on ranked trees is in  \ThreeExpTime{}. For $\cldatasucc$ on ranked trees it is in \TwoExpTime{}. 
%\end{theorem}
First let us state the lemma equivalent to Lemma~\ref{lem:bounded} for ranked trees. 
For a tree $t$, the $n$-\emph{neighbourhood} of a node $X$ is a subtree of $t$ consisting of all nodes that are in distance at most $n$ from $X$.
\begin{lemma}\label{lem:trees:bounded}
Let $\cal P$ be a $\datasucc$ program over ranked trees. Then the following conditions are equivalent: 
\begin{enumerate}
\item $\Ppr$ is bounded,\label{wawa}
\item there exists $n >0$ such that for every tree $t$ and node $X$ if $X \in \Ppr(t)$ then 
$X \in {\cal P}(t')$, where $t'$ is the $n$-neighbourhood of $X$ in $t$.\label{wa}
\end{enumerate}
\end{lemma}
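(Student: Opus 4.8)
The plan is to mirror the proof of Lemma~\ref{lem:bounded}, replacing bounded-length infixes by bounded-size neighbourhoods and exploiting the fact that ranked trees have bounded arity. First I would dispatch the implication (\ref{wawa})$\Rightarrow$(\ref{wa}) exactly as in the word case: if $\Ppr$ is bounded then by Proposition~\ref{prop:bound_ucq} it is equivalent to some $\Qpr \in \ucqsucc$, and since every query of $\Qpr$ is connected and $\desc$-free its pattern is a tree of bounded diameter. Taking $n$ to be the largest such diameter, any witnessing homomorphism for $X \in \Ppr(t) = \Qpr(t)$ has image within distance $n$ of the head node $X$, hence the matching query is already satisfied in the $n$-neighbourhood $t'$ and $X \in \Qpr(t') = \Ppr(t')$.

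For the converse (\ref{wa})$\Rightarrow$(\ref{wawa}), assume the condition holds with $n = n_0$ and build an equivalent UCQ $\Qpr$. Since the trees are ranked (say at most $R$-ary), the $n_0$-neighbourhood of any node is a connected subtree of bounded size. I would first note that this neighbourhood genuinely is a subtree: every node on the path from a node $v$ with $d(v,X)\le n_0$ to $X$ is itself within distance $n_0$ of $X$, so the set of nodes within distance $n_0$ is connected. Viewed as structures over $\{\succ, \sim\} \cup \Sigma_\Ppr$, such bounded neighbourhoods fall into finitely many equality types. For each equality type whose marked node $X$ satisfies $X \in \Ppr$ I would add to $\Qpr$ the canonical conjunctive query describing that type, representing labels from $\Sigma_\Ppr$ explicitly and encoding the remaining equalities by repeated variables joined through $\sim$, exactly as in the word construction.

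It then remains to verify $\Ppr \equiv \Qpr$. The inclusion $\Ppr \subseteq \Qpr$ is where condition~(\ref{wa}) is used: if $X \in \Ppr(w)$ then $X \in \Ppr(t')$ for the $n_0$-neighbourhood $t'$, which is one of the structures enumerated above, so its equality type contributes a query $\phi$ to $\Qpr$; as $t'$ is an induced substructure of $w$, the inclusion map witnesses a match of $\phi$ at $X$, giving $X \in \Qpr(w)$. For the reverse inclusion $\Qpr \subseteq \Ppr$ I would invoke homomorphism-invariance of datalog: a match of a query $\phi \in \Qpr$ at $X$ in $w$ is a homomorphism from the canonical tree $t_\phi$ (an $n_0$-neighbourhood on which $\Ppr$ holds at its marked node) into $w$ fixing $X$, and since datalog goal predicates are preserved along homomorphisms, $X \in \Ppr(t_\phi)$ yields $X \in \Ppr(w)$. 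With $\Ppr \equiv \Qpr \in \ucqsucc$ established, boundedness follows from Proposition~\ref{prop:bound_ucq}.

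The routine word-case reasoning carries over essentially verbatim, so the only genuinely tree-specific obstacle is the bookkeeping in the neighbourhood construction: checking that the $n_0$-neighbourhood is a connected tree substructure of bounded size, that there are finitely many equality types despite the infinite alphabet $\Sigma$, and that the $\sim$ relation is encoded faithfully so that both the inclusion $t' \hookrightarrow w$ and the homomorphism-preservation argument go through. None of these points is deep, but they are precisely where the passage from words to branching structures must be handled with care.
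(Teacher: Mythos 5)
Your proposal is correct and follows essentially the same route as the paper: the forward direction via equivalence to a UCQ of connected, $\desc$-free (hence bounded-diameter) queries, and the converse by observing that ranked $n$-neighbourhoods have finitely many equality types, taking the union of the CQs defining those types on which $\Ppr$ holds, and verifying equivalence. The paper states this more tersely (deferring to the word-case argument of Lemma~\ref{lem:bounded}), while you spell out the connectedness of neighbourhoods and the homomorphism-preservation step, but the underlying argument is identical.
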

\begin{proof}
The proof is analogous to the proof of Lemma~\ref{lem:bounded}. Let $k$ be the rank of the considered trees. 
To show the implication from~\ref{wa} to~\ref{wawa} 
it is enough to notice that for given $n$ 
there are finitely many  equality types (with respect to $\Ppr$) of trees of height at most $2n+1$ (and thus, finitely many of equality types of $n$-neighbourhoods).
The equality type of each such $n$-neighbourhood is definable by a CQ, and a UCQ equivalent to $\Ppr$ 
is a union of those CQ's that are contained in $\Ppr$. \qed
\end{proof}

In the case of trees we define an $n$-\emph{witness} for $\Ppr$ to be a tree $t$ such that 
there exists a node $X$ in $t$ for which $X \in \Ppr(t)$ but $X \not\in \Ppr(t')$, where $t'$ is the $n$-neighbourhood of $X$ in $t$.
A \emph{witness} is a tree that is an $n$-witness for any $n > 0$.

\jo{W obu: proposition i corollary zmienilam $\cldatasucc$ na $\datasucc$, bo twierdzenie formulujemy dla $\datasucc$. Rozumiem, ze to sie zgadza?}

%As in the case of words, we have

\begin{corollary}
A $\datasucc$ program $\cal P$ over ranked trees is unbounded iff there exist $n$-witnesses for arbitrarily big $n > 0$.
\end{corollary}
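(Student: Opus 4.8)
The plan is to read the corollary as the direct contrapositive of Lemma~\ref{lem:trees:bounded}, with one extra observation needed to reconcile its two phrasings of ``arbitrarily large failures''. First I would unfold condition~(\ref{wa}): it asserts the existence of a \emph{single} $n$ that works uniformly, i.e. $\exists n > 0$ such that no tree is an $n$-witness. Since Lemma~\ref{lem:trees:bounded} equates boundedness~(\ref{wawa}) with~(\ref{wa}), negating the equivalence shows that $\Ppr$ is unbounded iff for \emph{every} $n > 0$ there exists some $n$-witness. Because an $n$-witness then exists for each individual $n$, such witnesses in particular exist for arbitrarily big $n$; this already gives one direction of the corollary and uses nothing beyond Lemma~\ref{lem:trees:bounded}.

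The other direction --- that the a priori weaker statement ``$n$-witnesses exist for arbitrarily big $n$'' already forces ``for every $n$ there is an $n$-witness'' --- is where the real content lies. I would establish a monotonicity property of witnesses: if $t$ is an $n$-witness through a node $X$ and $0 < m \le n$, then $t$ is also an $m$-witness through the same $X$. Writing $t'_m$ and $t'_n$ for the $m$- and $n$-neighbourhoods of $X$, the former is a connected subtree of the latter, and the inclusion map preserves $\succ$, the node labels, and $\sim$. Since $\datasucc$ programs are negation-free, hence monotone, any proof witnessing $X \in \Ppr(t'_m)$ composes with this inclusion to witness $X \in \Ppr(t'_n)$; contrapositively, $X \notin \Ppr(t'_n)$ forces $X \notin \Ppr(t'_m)$, while $X \in \Ppr(t)$ is unaffected. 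Thus $t$ is an $m$-witness as claimed.

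With monotonicity in hand the equivalence closes cleanly: given any $m > 0$, choose by hypothesis an $n \ge m$ admitting an $n$-witness $t$; by the monotonicity observation $t$ is then an $m$-witness, so $n$-witnesses exist for \emph{every} $n$, and Lemma~\ref{lem:trees:bounded} delivers unboundedness. I expect the only genuine obstacle to be the monotonicity step, and even there the sole point requiring care is the verification that passing to an $m$-neighbourhood is a substructure inclusion respecting all extensional predicates, so that positivity of $\datasucc$ applies and the homomorphism composition goes through; the remaining reasoning is routine propositional manipulation of the quantifiers in Lemma~\ref{lem:trees:bounded}.
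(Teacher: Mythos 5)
Your argument is correct and is exactly the (implicit) justification the paper intends: the corollary is stated as an immediate consequence of Lemma~\ref{lem:trees:bounded}, whose negation reads ``for every $n$ there is an $n$-witness,'' and your monotonicity observation --- that an $n$-witness is an $m$-witness for all $0<m\le n$ because the $m$-neighbourhood includes into the $n$-neighbourhood and $\datasucc$ programs are positive --- is precisely the one nontrivial point needed to identify this with ``$n$-witnesses exist for arbitrarily big $n$.'' Nothing is missing.
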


We can now give the proof of Theorem~\ref{thm:boundedness:cl}. We restate it first.
\begin{theorem*}
The boundedness problem for $\datasucc$ over ranked trees is in  \TwoExpTime{}.  
\end{theorem*}
\begin{proof}
% \aw{moze po prostu trzeba przepisac ten lemat i dowod dac do App?}
\fm{trzeba poprawic argument o skonczonym alfabecie (na slowach tez) zgrubsza chodzi o to ze uzywajac malego alfabetu nie dodajemy nowych wyprowadzen }
To prove Theorem \ref{thm:boundedness:cl} we first show that boundedness can be verified over ranked trees over a finite alphabet.
\begin{lemma}
Let $\Ppr$ be a $\datasucc$ program. Then 
$\Ppr$ is bounded over ranked data trees with rank $R$ over $\Sigma$ iff $\Ppr$ is bounded over ranked trees with the same rank 
over a finite alphabet $\Sigma_0 \subseteq \Sigma$. The alphabet $\Sigma_0$ contains $\Lab$ and $|\Sigma_0 \setminus \Lab| \leq R^{|\Ppr|}$.
\end{lemma}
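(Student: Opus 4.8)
The plan is to reduce the claim to the witness characterisation established above and then to exhibit a relabelling into $\Sigma_0$ that transports witnesses. The direction ``bounded over $\Sigma$ implies bounded over $\Sigma_0$'' is immediate, since ranked trees over $\Sigma_0$ form a subclass of ranked trees over $\Sigma$ and a uniform bound on the depth of recursion for the larger class is a fortiori a bound for the smaller one. The content of the lemma is the converse, which I would prove in contrapositive form: if $\Ppr$ is unbounded over $\Sigma$-trees, then it is unbounded over $\Sigma_0$-trees. By the corollary to Lemma~\ref{lem:trees:bounded} (which holds over any fixed alphabet, as equality types over $\{\succ,\sim\}\cup\Lab$ are finite regardless of $\Sigma$), unboundedness is witnessed by $n$-witnesses for arbitrarily large $n$, so it suffices to turn each $\Sigma$-witness into a $\Sigma_0$-witness of the same radius.

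The key observation is locality of $\sim$. Because $\Ppr$ is connected and $\desc$-free, every rule pattern is a connected tree with at most $N\le|\Ppr|$ nodes and hence diameter at most $N-1$; consequently, in any derivation the only $\sim$-comparisons ever made are between model nodes lying in the image of a single pattern, that is, between nodes at distance at most $N-1$. Thus whether $X\in\Ppr(t)$ depends only on the local equality type of $t$: if $\bar t$ is obtained from $t$ on the same underlying tree by fixing every label from $\Lab$, sending every other label to a fresh symbol outside $\Lab$, and preserving, for all nodes $u,v$ at distance at most $N-1$, the equivalence $\lab_t(u)=\lab_t(v)\Leftrightarrow\lab_{\bar t}(u)=\lab_{\bar t}(v)$, then the same homomorphisms witness the same satisfiable proof trees in $t$ and in $\bar t$. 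This yields $\Ppr(t)=\Ppr(\bar t)$, and, since a radius-$n$ ball is convex (the unique path joining two of its nodes stays inside it, so distances inside the ball agree with distances in $t$), the same identity holds on every neighbourhood of every node.

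Producing such a relabelling with few fresh labels is the heart of the argument and the place where the rank $R$ enters. I would phrase it as a colouring problem: contract the equivalence generated by ``distance $\le N-1$ and equal label'' into classes, each carrying a single original label, and properly colour the graph on these classes in which two classes are adjacent when they have representatives at distance $\le N-1$ with different labels. A greedy colouring needs at most one more colour than the maximum number of classes meeting a common radius-$(N-1)$ ball, which is bounded by the number of nodes of such a ball. In a rank-$R$ tree this count is at most $R^{|\Ppr|}$ (summing, over the at most $N$ vertical ancestors, the sizes of the subtrees hanging off at the remaining depth), so $\Lab$ together with $R^{|\Ppr|}$ fresh labels suffices; this fixes $\Sigma_0$ once and for all, independently of the witness.

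Finally I would assemble the pieces: given a $\Sigma$-witness $(t,X)$ with $X\in\Ppr(t)$ and $X\notin\Ppr(t')$ for the $n$-neighbourhood $t'$, the relabelling gives $(\bar t,X)$ with $X\in\Ppr(\bar t)$ by preservation, and $X\notin\Ppr(\bar t')$ by the same identity applied to the convex neighbourhood $t'$; since $\bar t'$ is exactly the $n$-neighbourhood of $X$ in $\bar t$, this is an $n$-witness over $\Sigma_0$. Letting $n$ grow proves unboundedness over $\Sigma_0$. The main obstacle I anticipate is the bookkeeping for the colouring bound, namely making fully precise that local equality types are all a proof tree can observe and that the greedy argument really stays within $R^{|\Ppr|}$ colours, rather than the high-level equivalence, which is routine once the witness characterisation and the locality of $\sim$ are in hand.
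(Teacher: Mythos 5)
Your overall strategy is the paper's: the easy direction is identical, and the substance in both cases is a relabelling of an arbitrary $\Sigma$-tree into $\Sigma_0$ that preserves every label comparison a rule of $\Ppr$ can make, which is legitimate because connectedness and the absence of $\desc$ confine all $\sim$-tests to nodes at bounded distance. Framing the hard direction through $n$-witnesses (rather than, as the paper does, through a counterexample to the equivalence of $\Ppr$ with its UCQ over $\Sigma_0$) is a harmless variation, and your convexity remark correctly handles the restriction of the relabelling to neighbourhoods.

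The gap is in the colouring step. The classes you form are the transitive closure of ``same label and distance $\le N-1$,'' so a single class can have representatives scattered over the entire tree, chained together through intermediate members. Its degree in the conflict graph is therefore not bounded by the size of any ball, and the bound you invoke --- greedy needs one more colour than the maximum number of classes meeting a common radius-$(N-1)$ ball --- is neither the max-degree bound nor an evident degeneracy bound; as stated it is unjustified, and it is exactly the point you flag as the anticipated obstacle. The paper sidesteps the colouring question by a sequential, node-by-node procedure: it traverses the tree top-down; when it reaches a node $v$ whose current label $a$ lies outside $B\cup\Lab$ it picks $b\in B$ not occurring among the already-processed nodes within distance $n-1$ of $v$ (these all lie in the subtree of the ancestor $u$ sitting $n-1$ edges above $v$, hence number at most $\sum_{i=0}^{n-1}R^i<R^{|\Ppr|}$), first renames every occurrence of $b$ among the unprocessed descendants of $u$ to a fresh letter outside $B\cup\Lab$, and then renames every occurrence of $a$ there to $b$; each swap only affects nodes at distance at least $n$ from any node keeping the old label, so every comparison a rule can make is preserved, by induction over the steps. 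To repair your version you would either need to prove a degeneracy-type bound for your conflict graph or replace the class colouring by this node-by-node greedy with global swaps; that is precisely the missing piece.
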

\begin{proof}
This proof is a slight modification of a proof from \cite{FFA}.
If $\Ppr$ is bounded over $\Sigma$ then it is clearly bounded over any finite subset of $\Sigma$.
Suppose that $\Ppr$ is bounded over $\Sigma_0$ but not bounded over $\Sigma$. Over $\Sigma_0$, $\Ppr$ is therefore equivalent 
to a UCQ $Q$ built of a finite number of proof words of $\Ppr$. Let $t$ be a tree over $\Sigma$ and $X$ a node in $t$ s.t. 
$X \in \Ppr(t)$ but $X \not\in Q(t)$. We will show that $t$ can be relabeled into a tree $t'$ over $\Sigma_0$ in a way
preserving any label comparison done by the rules of $\Ppr$.
Then, as $Q$ is a union of proof words of $\Ppr$, it must also hold that $X \in Q(t)$ iff $X \in Q(t')$, which is a contradiction since $\Ppr$ is not equivalent to $Q$ over ranked trees over $\Sigma$.

\aw{troche nadużywam tu notacji dla X, no ale nie wiem jak to lepiej napisac} 

\jo{moim zdaniem jest bardzo dobrze napisane :) }

Let $n$ be the size of the largest rule in $\Ppr$. Let $B \subseteq \Sigma
\setminus \Lab$ be a set of size $R^{|\Ppr|}$. We set $\Sigma_0 = B \cup \Lab$. 
We will describe a procedure that traverses the tree $t$ in a top-down
fashion, level by level, and changes the labels to elements of
$B$. This way the set of processed nodes consists of $i$
full levels starting from the root, and some nodes from the level $i+1$.

Let $v$ be a node on level $i+1$ -- the next one to process, and let $u$ be the node
$n-1$ edges up the tree (or the root if $v$ is too close to the
root). Suppose that the label of $v$ is $a$. If $a \in B \cup \Lab$, we can
finish processing $v$. Assume that $a \notin B \cup \Lab$. Pick a label $b
\in B$ that does not appear in the processed descendants of $u$, nor in $u
$ itself. We can always find such a label $b$ because the number of
processed descendants of $u$ (including $u$ itself) is bounded by
$\sum_{i=0}^{n-1} R^i = \frac{R^n-1}{R-1} <
R^n\leq R^{|\Ppr|}$, and so is the number of labels from $B$
used in these nodes. Let $c \in \Sigma \setminus (B\cup \Lab)$ be a
fresh label. We now replace all appearances of $b$ with $c$, but only
in the unprocessed descendants of the node $u$. Observe that these nodes are
separated from the nodes that keep their label $b$ by distance at least
$n$. Next, we replace all appearances of $a$ with $b$, but only in the
unprocessed descendants of $u$. Again, the distance from these nodes
to the other nodes with label $a$ or $b$ is at least $n$.  Thus, the
modification does not affect the outcome of any label comparison done
by rules in $\Ppr$ (because they use only the short
axis and are connected). After all nodes are processed, all labels in $t'$ are from
$B \cup \Lab$. 
\end{proof}
Let $\Sigma_0$ be the finite alphabet from the previous Lemma.
Now we can construct an automaton $W_{\Ppr}$, recognizing the set of witnesses for $\Ppr$.
From Lemma \ref{lem:trees:automata} we get a two-way alternating tree automaton $\Bb_{\Ppr}$ which 
works over $\Sigma_0 \times \{0,1\}$,
and accepts the set of trees that have only one node labeled with $(a,1)$ for $a \in \Sigma_0$, and the goal predicate of $\Ppr$ is satisfied in this node.
The size of this automaton is exponential in $|\Ppr|$.
Let $\Aa_{\Ppr}$ be the bottom-up automaton recognizing $L(\Bb_{\Ppr})$ obtained via Proposition~\ref{prop:two-way}.
Let ${\cal N}_{\Ppr}$ be an automaton obtained by taking a product of the bottom-up automaton recognizing the complement of $L(\Bb_{\Ppr})$ (again obtained via Proposition~\ref{prop:two-way}) and 
the automaton checking that there is only one node in the tree with label $(a,1)$ for some $a \in \Sigma_0$.
Then ${\cal N}_{\Ppr}$ accepts all trees over $\Sigma_0$ for which $\Ppr$ does not hold in the marked node.
The size of both $\Aa_{\Ppr}$ and ${\cal N}_{\Ppr}$ is double exponential in $|\Ppr|$.

With those two automata, the construction of $W_{\Ppr}$ is easy. The set of states of $W_{\Ppr}$ is 
\begin{equation*}
Q(\Aa_{\Ppr}) \times \left( \{ \epsilon , \mathsf{OK} \} \cup Q({\cal N}_{\Ppr}) \right) 
\end{equation*}
where $Q(A)$ denotes the set of states of the automaton $A$.
Let $t$ be a tree over $\Sigma_0 \times \{0,1\}$ and let $X$ denote the marked node. 
The automaton $W_{\Ppr}$ starts in the state $(q_I, \epsilon)$, where $q_I$ is the initial state of $\Aa_{\Ppr}$. 
Then $W_{\Ppr}$ simulates $\Aa_{\Ppr}$ on $t$. 
In any node of a tree, the automaton $W_{\Ppr}$ can guess that here begins the neighbourhood of $X$ in which $\Ppr$ does not hold. 
Then $W_{\Ppr}$ changes the second component of its state from $\epsilon$ to the initial state of ${\cal N}_{\Ppr}$ and simulates ${\cal N}_{\Ppr}$ on the guessed neighbourhood, verifying that indeed $\Ppr$ does not hold in it. If $W_{\Ppr}$ has reached an accepting state of ${\cal N}_{\Ppr}$, it can guess that this node is the root of the neighbourhood and change the state to $\mathsf{OK}$ in the second component. Accepting states of $W_{\Ppr}$ are states $(q, \mathsf{OK})$ where $q$ is any accepting state of $\Aa_{\Ppr}$.

Similarly to the word case, if there exists a witness of size linear in the size of the automaton $W_{\Ppr}$, then there exist arbitrarily big witnesses.
\begin{lemma}
Let $N$ be the number of states of the automaton $W_{\Ppr}$. If there exists a $(2N+2)$-witness for $\Ppr$, 
then there exist $n$-witnesses for arbitrary large $n$. The existence of $(2N+2)$-witness can be decided in time polynomial in $N$.
\end{lemma}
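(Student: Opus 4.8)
The plan is to follow the blueprint of the word case (the proof of Theorem~\ref{thm:boundedness:words:cl}), replacing the nondeterministic word automaton $\cal B_P$ by the bottom-up tree automaton $W_{\Ppr}$ and replacing infix-pumping by tree-context pumping. The guiding observation is that an accepting run of $W_{\Ppr}$ on a tree $t$ with marked node $X$ decomposes into three regions: below the lower frontier of the guessed neighbourhood only $\Aa_{\Ppr}$ runs (second component $\epsilon$), inside the neighbourhood both $\Aa_{\Ppr}$ and ${\cal N}_{\Ppr}$ run (second component a state of ${\cal N}_{\Ppr}$), and above its root only $\Aa_{\Ppr}$ runs (second component $\mathsf{OK}$). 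Thus $W_{\Ppr}$ accepts exactly those trees for which $X\in\Ppr(t)$ while $\Ppr$ fails on the guessed neighbourhood, and a tree is a $k$-witness precisely when such a run exists whose neighbourhood contains the whole $k$-ball around $X$.

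First I would fix a $(2N+2)$-witness $t$ together with the corresponding accepting run of $W_{\Ppr}$ in which the product region is (at least) the $(2N+2)$-neighbourhood of $X$. Consider the path from $X$ up to the root of that neighbourhood; it has $2N+2>N$ interior nodes, all carrying states from the product part $Q(\Aa_{\Ppr})\times Q({\cal N}_{\Ppr})$. By the pigeonhole principle two of them, say $u_1$ below $u_2$, are labelled with the same state of $W_{\Ppr}$, and the bound $2N+2$ guarantees that we may choose them strictly between $X$ and the boundary, leaving room on both sides. I would then pump: duplicate the tree-context delimited by $u_1$ and $u_2$ an arbitrary number $k$ of times. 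Because the two endpoints agree on both the $\Aa_{\Ppr}$-component and the ${\cal N}_{\Ppr}$-component, the duplicated run is again a valid accepting run of $W_{\Ppr}$; hence in the resulting tree $t_k$ we still have $X\in\Ppr(t_k)$ and $\Ppr$ fails on the enlarged product region. Since each duplication increases the distance from $X$ to the top of the neighbourhood, the trees $t_k$ are $n$-witnesses with $n\to\infty$, and by the corollary stated above $\Ppr$ is unbounded.

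The main obstacle is the tree geometry: unlike an infix of a word, the $n$-neighbourhood is a ball that grows in every direction, so I must make sure the enlarged product region really contains the full $n$-ball for the new radius and that the duplicated context preserves the run of ${\cal N}_{\Ppr}$ on the whole ball, not merely along the pumped branch. I would handle this by performing the pigeonhole and the pumping on the path realising the radius, duplicating the entire context (so that all side subtrees hanging inside the neighbourhood are copied together), and by choosing the witness so that below $X$ nothing escapes the product region (equivalently, taking a witness minimal below $X$); the slack $2N+2$, rather than the bare $N+1$ needed for a single repetition, is exactly what lets the repeated state be placed strictly inside the neighbourhood in the required direction.

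For the complexity claim I would augment $W_{\Ppr}$ with a counter ranging over $\{0,1,\dots,2N+2\}$ that measures, inside the product region, the distance to $X$ and forbids the switch to $\mathsf{OK}$ before the value $2N+2$ is reached (and symmetrically bounds how far the neighbourhood may reach below $X$). The resulting automaton has $O(N\cdot(N+2))=\mathrm{poly}(N)$ states and accepts a tree iff it is a $(2N+2)$-witness. Its non-emptiness can be tested with the standard $\PTime$ bottom-up tree-automaton emptiness algorithm, which runs in time polynomial in the number of states, giving the desired $\mathrm{poly}(N)$ bound.
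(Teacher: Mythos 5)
Your pumping argument has a genuine gap: pumping a single root-ward context is not enough to manufacture $n$-witnesses for large $n$. For the pumped tree $t_k$ to be an $n$-witness, the region on which ${\cal N}_{\Ppr}$ runs must contain the \emph{entire} $n$-neighbourhood of $X$, so its boundary has to be pushed beyond distance $n$ in \emph{every} direction in which the tree continues --- along every downward path from $X$ and along every side branch hanging off the path from $X$ to the root of the guessed region. Pumping only the upward context leaves all downward paths from $X$ untouched: if the witness has a node $v$ below $X$ at distance, say, $3N$ lying outside the guessed region, then for $n\geq 3N$ the $n$-neighbourhood of $X$ in $t_k$ contains $v$ and you have no run of ${\cal N}_{\Ppr}$ certifying that $\Ppr$ fails on it. Such nodes $v$ must exist in general, because the derivation of $\Ppr(X)$ has to use nodes outside the $(2N+2)$-neighbourhood --- that is exactly what makes $t$ a witness --- so your escape hatch of ``choosing the witness so that below $X$ nothing escapes the product region'' cannot be granted: ${\cal N}_{\Ppr}$ must accept the guessed region, hence that region cannot be enlarged to swallow the part of the tree that the derivation of $\Ppr(X)$ actually needs. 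The paper's proof instead applies the pigeonhole argument to \emph{every} maximal path inside the guessed neighbourhood: each downward path from $X$ of length $2N+2$ (shorter ones end in leaves of the tree and need no pumping), the upward path of length $N+1$, and in addition every path branching off the pumped upward context and going downwards without returning to $X$; each of the latter still has length at least $N+1$ inside the neighbourhood, which is the real reason the radius $2N+2$ rather than $N+1$ is needed --- not, as you suggest, to place one repeated state conveniently.

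The complexity half is essentially right and close to the paper's: one augments $W_{\Ppr}$ with counters bounded by $2N+2$ and runs the polynomial-time emptiness test. The paper uses two counters (one computed bottom-up as one plus the minimum over the children's values, measuring how far the guessed region extends below, and one for the distance above $X$); your single counter ``measuring the distance to $X$'' would need to be made precise for a bottom-up automaton in which $X$ need not lie below the current node, but this is presentational rather than substantive.
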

\begin{proof}
We use a very similar pumping argument as in the word case. 
This time, however, to obtain arbitrarily big witnesses we need to 
be able to pump every path of the neighbourhood in which $\Ppr$ is not satisfied. 
%This is why we need $2N+2$ instead of $N$-witnesses, as in the word's case.

Suppose that there exists a $(2N+2)$-witness and let $X$ be the marked node.
Then on every path of length $2N+2$ from $X$ downwards, some state of $W_{\Ppr}$ must repeat, so we can pump the context between those nodes. Notice that some paths may be shorter, because the $(2N+2)$-witness may contain a leaf of the tree -- we don't need to pump those paths.
On the path from $X$ upwards of length $N+1$ again some states of $W_{\Ppr}$ repeat, and we can pump the context between the occurrences of the same state.
This time, however, we need also to extend the paths that start on the pumped fragment and go downwards, but do not return to $X$. 
Every such path is of length at least $N+1$ (that is why we need the $2N+2$ size of the neighbourhood), so we can pump each of them (except for those that are shorter because they end with a leaf of the tree). 

To verify the existence of a $(2N+2)$-witness
we modify the automaton $W_{\Ppr}$ by adding two counters from $0$ to $2N+2$. 
When the automaton guesses the beginning of a neighbourhood of $X$ in a non-leaf node $Y$
it starts counting the length of the shortest path until the least common ancestor of $Y$ and $X$ is reached.
The automaton in a node calculates the length of the shortest path as $1$ + the minimum of the values of the counters calculated for its children (if the value of the counter is $2N+2$, adding $1$ does not change its value).
When a neighbourhood of $X$ begins in a leaf of the tree, the length of this path does not need to be $2N+2$, so the automaton sets the counter to $2N+2$ (that is -- sufficient length).
The second counter is used only for the nodes on the path above $X$ and counts the length of the path for $X$ to this node (for any other node in the guessed neighbourhood, value of this counter is 0).

It is not difficult to see that using those two counters we can come up with an acceptance condition such that the modified automaton has an accepting run iff there exists a $(2N+2)$-witness for~$\Ppr$.
Since emptiness can be decided in time linear in the size of the automaton, we get the claim. \qed
\end{proof}
%To verify the existence of a $(2N+2)$-witness,
%we modify the automaton $W_{\Ppr}$ by adding a counter from $0$ to $2N+2$. 
%When the automaton guesses the beginning of a neighbourhood of $X$ in a non-leaf node 
%it starts counting the length of the shortest path until $X$ or the root of the neighbourhood is reached. 
%The automaton in a node calculates the length of the shortest path as 1 + a minimum of the values of the counter calculated for the children. 
%Of course if the value of the counter is $2N+2$, adding 1 to it does not change the value of the counter.
%Additionally, the automaton resets the counter in $X$ and starts counting the length of the path from $X$ to the root of the guessed neighbourhood. 
%When a neighbourhood of $X$ begins in a leaf of the tree, the length of this path does not need to be $2N+2$, so the automaton sets the counter to $2N+2$ (that is, sufficient length).
%
%The automaton accepts if every path (that does not start in a leaf) leading upwards to $X$
%has length at least $2N+2$ and every path that leads to the root of the guessed neighbourhood has length at least $N+1$.
%It is clear that this modified automaton has an accepting run iff there exists a $(2N+2)$-witness for~$\Ppr$.
%Since emptiness can be decided in time linear in the size of the automaton, we get the claim. \qed
%\end{proof}
Since the size of $W_{\Ppr}$ is double exponential in $|\Ppr|$, we get a \TwoExpTime{} procedure for deciding boundedness of $\Ppr$. \qed
\end{proof}

\end{document}